\journal{arXiv}
\tikzstyle{every picture} = [>=latex]
\newcommand{\bigoh}{\mathcal{O}}
\renewcommand{\phi}{\varphi}
\newcommand*{\ol}[1]{$\overline{\hbox{#1}}\m@th$}
\newcommand{\Nat}{{\mathbb{N}}}
\newcommand{\cA}{{\cal A}}
\newcommand{\cF}{{\cal F}}
\newcommand{\var}{\text{\normalfont var}}
\newcommand{\III}{\mathbf{I}}
\newcommand{\SSS}{\mathcal{S}}
\newcommand{\SB}{\left\{\,}%
\newcommand{\SM}{\;{:}\;}%
\newcommand{\SE}{\,\right\}}%
\newcommand{\fr}{\mathfrak{p}}
\newcommand{\ms}{m_{\text{S}}}
\newcommand{\mm}{m_{\text{M}}}
\newcommand{\mb}{m_{\text{L}}}
\newcommand{\eval}{\eta}
\theoremstyle{plain}
\newtheorem{THE}{Theorem}
\newtheorem{PRO}[THE]{Proposition}
\newtheorem{LEM}[THE]{Lemma}
\newtheorem{COR}[THE]{Corollary}
\newtheorem{theorem}[THE]{Theorem}
\newtheorem{lemma}[THE]{Lemma}
\newtheorem*{THE*}{Theorem}
\newtheorem*{fact*}{Fact}
\theoremstyle{definition}
\newtheorem{DEF}[THE]{Definition}
\newtheorem*{DEF*}{Definition}
\theoremstyle{remark}
\newtheorem{CLM}{Claim}
\newcommand{\no}{\textsc{No}}
\newcommand{\pbDefPtask}[4]{%
\noindent
\begin{center}
\begin{boxedminipage}{0.98 \columnwidth}
#1\\[5pt]
\begin{tabular}{l p{0.70 \columnwidth}}
Input: & #2\\
Parameter: & #3\\
Task: & #4
\end{tabular}
\end{boxedminipage}
\end{center}
}
\newcommand{\cc}[1]{{\mbox{\textnormal{\textsf{#1}}}}\xspace}  
\newcommand{\NP}{\cc{NP}}
\newcommand{\coNPpoly}{\cc{co-NP/poly}}
\newcommand{\FPT}{\cc{FPT}}
\newcommand{\XP}{\cc{XP}}
\newcommand{\Weft}{{\cc{W}}}
\newcommand{\W}[1]{{\Weft}{{[#1]}}}
\newcommand{\paraNP}{\cc{pNP}}
\newcommand{\COMPFRAC}[1]{\textsc{Backdoor Detection}}
\newcommand{\VDTSC}[1]{\textsc{Fracture Vertex Deletion}}
\newcommand{\COMPFRACS}[1]{\textsc{BD}}
\newcommand{\VCOMPFRACS}[1]{\textsc{V\hy BD}}
\newcommand{\CCOMPFRACS}[1]{\textsc{C\hy BD}}
\newcommand{\VDTSCS}[1]{\textsc{FVD}}
\newcommand{\TSAT}{\textsc{$3$\hy Satisfiability}}
\newcommand{\problembox}[4]{
  \begin{center}
    \fbox{
      \parbox{0.95\columnwidth}{
        #1\\[0.3em]
        \renewcommand{\tabcolsep}{3pt}
        \begin{tabular}{rp{0.70\columnwidth}}
          \textit{Instance:\ } & #2\\
          \textit{Parameter:\ } & #3\\
          \textit{Question:\ } & #4
        \end{tabular}
      }
    }
  \end{center}
}
\newcommand{\tuple}[1]{\langle{#1}\rangle}  
\newcommand{\hy}{\hbox{-}\nobreak\hskip0pt}
\newcommand{\nn}{\mathbb{N}}
\newcommand{\bigO}[1]{\ensuremath{{\mathcal O}(#1)}}
\newcommand{\probfont}[1]{\textnormal{\textsc{#1}}}
\newcommand{\mymax}{\texttt{max}}
\newcommand{\bmax}[1]{B_\mymax}
\newcommand{\Amax}[1]{A_\mymax}
\newcommand{\xmax}[1]{X_\mymax}
\newcommand{\pcmax}[1]{\Gamma}
\newcommand{\twi}[1]{\textup{tw}_I}
\newcommand{\ve}[1]{\mathchoice{\mbox{\boldmath$\displaystyle\bf#1$}}
{\mbox{\boldmath$\textstyle\bf#1$}}
{\mbox{\boldmath$\scriptstyle\bf#1$}}
{\mbox{\boldmath$\scriptscriptstyle\bf#1$}}}
\newcommand{\vea}{{\ve a}}
\newcommand{\veb}{{\ve b}}
\newcommand{\ved}{{\ve d}}
\newcommand{\ver}{{\ve l}}
\newcommand{\veu}{{\ve u}}
\newcommand{\vev}{{\ve v}}
\newcommand{\vew}{{\ve w}}
\newcommand{\vex}{{\ve x}}
\newcommand{\vey}{{\ve y}}
\newcommand{\ves}{{\ve s}}
\newcommand{\coA}{c_{\mathbf{A}}}
\newcommand{\coAi}{c_{\mathbf{A}_{(*,\{i\})}}}
\newcommand{\coQ}{c_{\mathbf{Q}}}
\newcommand{\coB}{c_{\veb}}
\newcommand{\coD}{c_{\ved}}
\begin{document}
\begin{frontmatter}
  \title{The Complexity Landscape of Decompositional Parameters for ILP III.: \\
  Programs with Few Global Variables and Constraints}

  \cortext[cor1]{Corresponding authors}

  \author[uk]{Pavel Dvo\v{r}\'{a}k}\ead{koblich@iuuk.mff.cuni.cz}
    \author[rh]{Eduard Eiben}\ead{Eduard.Eiben@rhul.ac.uk}

    \author[ac]{Robert Ganian\corref{cor1}}
  \ead{rganian@gmail.com}
      \author[dk]{Du\v{s}an Knop}\ead{dusan.knop@fit.cvut.cz }
  \author[she]{Sebastian Ordyniak}
  \ead{sordyniak@gmail.com}

  \address[uk]{Computer Science Institute, Charles University, Prague, Czech Republic}
  \address[rh]{Royal Holloway, University of London, United Kingdom}
  \address[ac]{Algorithms and Complexity Group, Vienna University of Technology, Austria}
  \address[dk]{Czech Technical University in Prague, Czech Republic}
  \address[she]{Department of Computer Science, University of Sheffield, United Kingdom}



\begin{abstract}
Integer Linear Programming (ILP) has a broad range of applications in various areas of artificial intelligence.
Yet in spite of recent advances, we still lack a thorough understanding of which structural restrictions make ILP tractable. Here we study ILP instances consisting of a small number of ``global'' variables and/or constraints such that the remaining part of the instance consists of small and otherwise independent components; this is captured in terms of a structural measure we call \emph{fracture backdoors} which generalizes, for instance, the well-studied class of $N$-fold ILP instances.

Our main contributions can be divided into three parts. First, we formally develop fracture backdoors and obtain exact and approximation algorithms for computing these. Second, we exploit these backdoors to develop several new parameterized algorithms for ILP; the performance of these algorithms will naturally scale based on the number of global variables or constraints in the instance. Finally, we complement the developed algorithms with matching lower bounds. Altogether, our results paint a near-complete complexity landscape of ILP with respect to fracture backdoors.
\end{abstract}

  \begin{keyword}
    Integer Linear Programming, Parameterized Complexity
  \end{keyword}

\end{frontmatter}

\section{Introduction}
\label{sec:intro}

Integer Linear Programming (ILP) is the archetypical representative of an \NP{}\hy complete optimization problem and has a broad range of applications in various areas of artificial intelligence. In
particular, a wide variety of problems in artificial intelligence are efficiently solved in practice via a translation into an ILP, including problems from areas such as planning~\cite{BrielVossenKambhampati05,VossenBallLotemNau99}, process scheduling~\cite{FloudasLin05}, packing~\cite{LodiMM02}, vehicle routing~\cite{Toth01}, and network hub location~\cite{AlumurK08}.

In spite of recent advances~\cite{GanianO18,GanianOrdyniakRamanujan17,JansenKratsch15esa}, we still lack a deep understanding of which structural restrictions make ILP tractable. The goal of this line of research is to identify structural properties (formally captured by a numerical \emph{structural parameter} $k$)
which allow us to solve ILP efficiently. In particular, one seeks to either solve an ILP instance $\III$ in time $f(k)\cdot |\III|^{\bigoh(1)}$ (a so-called \emph{fixed-parameter algorithm}), or at least in time $|\III|^{f(k)}$ (a so-called \emph{XP algorithm}), where $f$ is a computable function. This approach lies at the core of the now well-established \emph{parameterized complexity} paradigm~\cite{DowneyFellows13,CyganFKLMPPS15} and has yielded deep results capturing the tractability and intractability of numerous prominent problems in diverse areas of computer science---such as Constraint Satisfaction, SAT, and a plethora of problems on directed and undirected graphs.

In general, structural parameters can be divided into two groups based on the way they are designed. \emph{Decompositional parameters} capture the structure of instances by abstract tools called decompositions; treewidth is undoubtedly the most prominent example of such a parameter, and previous work has obtained a detailed complexity map of ILP with respect to the treewidth of natural graph representations of instances~\cite{GanianO18,GanianOrdyniakRamanujan17}. On the other hand, \emph{backdoors} directly measure the ``distance to triviality'' of an instance: the number of simple operations required to put the instance into a well-defined, polynomially tractable class. While the backdoor approach has led to highly interesting results for problems such as Constraint Satisfaction~\cite{GaspersMOSZ17} and SAT~\cite{GaspersSzeider12}, it has so far been left mostly unexplored in the arena of ILP.

\subsection{Our Contribution}
Here, we initiate the study of backdoors to triviality for ILP by analyzing backdoors which fracture the instance into small, easy-to-handle components. Such \emph{fracture backdoors} can equivalently be viewed as measuring the number of global variables or global constraints in an otherwise ``compact'' instance; in fact, we identify and analyze three separate cases depending on whether we allow global variables only, global constraints only, or both.
We obtain a near-complete complexity landscape for the considered parameters: in particular, we identify the circumstances under which they can be used to obtain fixed-parameter and \XP{} algorithms for ILP, and otherwise prove that such algorithms would violate well-established complexity assumptions. Our results are summarized in the following Table~\ref{fig:results} (formal definitions are given in Section~\ref{sec:fracture}).

\begin{table}[ht]
  \begin{tabular}{lccc}
    \toprule
\hspace{-0.15cm} & Variable & Constraint & Mixed \\ \midrule 
\hspace{-0.15cm}param. & \FPT{} (Cor.~\ref{cor:parvarconFPT}) & \FPT{} (Cor.~\ref{cor:parvarconFPT}) & \XP{} (Cor.~\ref{cor:parmixedXP})\\
\hspace{-0.15cm}unary & \paraNP{}\hy c & \XP{}, \W{1}\hy h
                                         & \paraNP{}\hy c \\
\hspace{-0.15cm} & (Th~\ref{thm:unaryvarhard}) &
                                                   (Th~\ref{thm:unaryconXP},~\ref{thm:unaryconhard}) & (Th~\ref{thm:unaryvarhard}) \\

\hspace{-0.15cm}arbitrary & \paraNP{}\hy c & \paraNP{}\hy c (Th~\ref{lem:binaryconstbdhard})
                                         & \paraNP{}\hy c \\

\end{tabular}
\caption{Complexity landscape for fracture backdoors. Columns distinguish whether we consider variable backdoors, constraint backdoors, or mixed backdoors. Rows correspond to restrictions placed on coefficients in the ILP instance. \vspace{-0.3cm}}
\label{fig:results}
\end{table}

As is evident from the table, backdoor size on its own is not sufficient to break the NP-hardness of ILP; this is far from surprising, and the same situation arose in previous work on treewidth. However, while positive results on treewidth (as well as other considered decompositional parameters such as \emph{torso-width}~\cite{GanianOrdyniakRamanujan17}) required the imposition of domain restrictions on variables, in the case of backdoors one can also deal with instances with unrestricted variable domains---by instead restricting the values of coefficients which appear in the ILP instance. Here, we distinguish three separate cases (corresponding to three rows in Table~\ref{fig:results}): coefficients bounded by the parameter value, coefficients which are encoded in unary, and no restrictions. It is worth noting that in the case of treewidth, ILP remains NP-hard even when coefficients are restricted to $\pm 1$ and $0$.

Our results in row $1$ represent a direct generalization of three
extensively studied classes of ILP, specifically $n$-fold ILP,
two-stage stochastic ILP and 4\hy block $N$\hy fold
ILP~\cite{DeLoeraHK:AGideasBook,Onn2010}. The distinction lies in the
fact that while in the case of all three previously mentioned special
cases of ILP the ILP matrix must be completely uniform outside of its
global part, here we impose no such restriction. The only part of our
complexity landscape which remains incomplete, the case of mixed
backdoors combined with bounded coefficients, then corresponds to
resolving a challenging open problem in the area of $N$-folds: the
fixed-parameter (in)tractability of 4\hy block $N$\hy fold
ILP~\cite{HemmeckeKW10}. A fixed-parameter algorithm for 4\hy block
$N$\hy fold would also provide significant
algorithmic improvements for problems in areas such as social choice~\cite{KnopKM17}.
We also prove that ILP parameterized by coefficients and a variable or constraint backdoor (i.e., the parameterizations for which we obtain fixed-parameter algorithms) does not admit a polynomial kernel, unless $\NP \subseteq \coNPpoly$.

In the intermediate case of coefficient values encoded in unary (row $2$), we surprisingly show that ILP remains polynomially tractable when the number of global constraints is bounded by a constant, but becomes NP-hard if we use global variables instead. To be precise, we obtain an \XP{} algorithm parameterized by constraint backdoors, rule out the existence of a fixed-parameter algorithm for this case, and also rule out \XP{} algorithms for variable and mixed backdoors. These also represent our most technical results: especially the \XP{} algorithm requires the combination of deep linear-algebraic techniques with tools from the parameterized complexity toolbox.

Last but not least, all our algorithmic results first require us to
compute a fracture backdoor. It turns out that computing fracture
backdoors in ILP is closely related to solving the \textsc{Vertex
  Integrity} problem~\cite{DrangeDregiHof16} on bipartite graphs; unfortunately, while
the problem has been studied on numerous graph classes including
cobipartite graphs, its complexity remained open on bipartite graphs.
Here we obtain both an exact fixed-parameter algorithm as well as a
polynomial time approximation algorithm for finding fracture
backdoors. As an additional result, we also show that the problem is
NP-complete using a novel reduction.

The paper is structured as follows. After introducing the necessary
notions and notation in the preliminaries, we proceed to formally
define our parameter and develop algorithms for computing the desired
backdoors. We then present our results separated by the type of
restrictions put on the size of the matrix coefficients in the remaining sections.

\subsection{Related and Follow-up Work}

This paper represents a natural continuation of previous work aimed at identifying new classes of integer linear programs can be solved efficiently via the use of decompositional parameters which take into account the structure of variable-constraint interactions~\cite{GanianO18,GanianOrdyniakRamanujan17,JansenK15}. However, efforts to characterize tractable classes of ILPs and obtain algorithms with better worst-case runtime guarantees for the problem date back to the classical works of Papadimitriou~\cite{Papadimitriou81}, Lenstra~\cite{Lenstra83} and others~\cite{Kannan87,FrankTardos87}.

At this time, the use of decompositional parameters for ILP remains a dynamic research direction.
First, Chen and Marx~\cite{ChenM18} used Graver-basis driven approach for block structured matrices, the so-called \emph{tree-fold} ILPs, and showed that these ILPs are fpt for the combined parameter that, among others, contains the depth of the tree associated with the constraint matrix.
Indeed, follow-up work by Kouteck\' y, Levin and Onn that appeared in the conference proceedings of \emph{ICALP}~\cite{KouteckyLO18} after the presentation of this paper generalized Corollary~\ref{cor:parvarconFPT} by using Graver-best oracles.
Independently, Eisenbrand, Hunkenschröder and Klein also generalized Corollary~\ref{cor:parvarconFPT} in their recent work, which also appeared at \emph{ICALP}~\cite{EisenbrandHK18}; see also the full joint version on the arXiv repository~\cite{abs-1904-01361}.

A detailed overview of the recent developments in the use of decompositional and structural parameters for solving ILP and its variants can be found in the recent survey dedicated to this topic~\cite{GanianO19}; see also a related survey on block structured matrices of Chen~\cite{Chen2019}.


%
%
%

\newcommand{\ilp}{\textsc{ILP}}
\newcommand{\ilpf}{\textsc{ILP-Fea}}
\newcommand{\milp}{\textsc{MILP}}

\section{Preliminaries}
\label{sec:prelim}

We will use standard graph terminology, see for
instance the textbook by Diestel~\cite{Diestel12}.
In the following let $\mathbf{A}$ be a $n \times m$ matrix and let $C$
and $R$ be a subset of columns and rows of $\mathbf{A}$, respectively. We
denote by $\mathbf{A}_{(R,C)}$ the submatrix of $\mathbf{A}$ restricted
to the columns in $C$ and the rows in $R$. We also denote by
$\mathbf{A}_{(*,C)}$ and $\mathbf{A}_{(R,*)}$ the submatrix of
$\mathbf{A}$ restricted to the columns in $C$ and the
submatrix of $\mathbf{A}$ restricted to the rows in $R$, respectively. We denote by
$\coA$ the maximum absolute value of any entry of $\mathbf{A}$ and by
$\det(\mathbf{A})$ the determinant of $\mathbf{A}$.
For
a vector $\veb$ of size $n$, we will use $\veb[i]$ to denote its
$i$-th entry and we denote by $\coB$ the maximum absolute value of any
entry of $\veb$. We will also use the two following well-known facts~\cite{schrijver1998theory}.

\begin{PRO}\label{pro:det}
 Let $\mathbf{A}$ be an integer $k\times k$ matrix.
Then $\det(\mathbf{A})$ is integer and $|\det(\mathbf{A})| \leq
 k!\Pi_{1\leq i \leq k}\coAi$.
\end{PRO}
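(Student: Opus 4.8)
The plan is to expand $\det(\mathbf{A})$ via the Leibniz formula and then bound each of its $k!$ summands separately; this is the standard argument and it yields exactly the stated inequality. To fix notation, write $a_{i,j}$ for the $(i,j)$-entry of $\mathbf{A}$ and, for a column index $j\in\{1,\dots,k\}$, let $c_j:=c_{\mathbf{A}_{(*,\{j\})}}$ denote the maximum absolute value of an entry in column $j$, so that the bound to be proved reads $|\det(\mathbf{A})|\le k!\prod_{1\le j\le k}c_j$.

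First I would recall the Leibniz expansion
\[
\det(\mathbf{A}) \;=\; \sum_{\sigma\in S_k}\mathrm{sgn}(\sigma)\prod_{i=1}^{k}a_{i,\sigma(i)},
\]
where $S_k$ is the symmetric group on $\{1,\dots,k\}$. Since every $a_{i,j}$ is an integer, each summand is $\pm 1$ times a product of integers and hence an integer; a sum of integers is an integer, so $\det(\mathbf{A})\in\mathbb{Z}$, which settles the first claim. For the inequality I would apply the triangle inequality to the displayed identity, obtaining $|\det(\mathbf{A})|\le\sum_{\sigma\in S_k}\prod_{i=1}^{k}|a_{i,\sigma(i)}|$. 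The entry $a_{i,\sigma(i)}$ lies in column $\sigma(i)$, so $|a_{i,\sigma(i)}|\le c_{\sigma(i)}$; taking the product over $i$ and using that $\sigma$ is a bijection of $\{1,\dots,k\}$ (so that $\prod_{i=1}^{k}c_{\sigma(i)}=\prod_{j=1}^{k}c_j$) shows that every summand is at most $\prod_{1\le j\le k}c_j$. As $|S_k|=k!$, summing over the $k!$ permutations gives $|\det(\mathbf{A})|\le k!\prod_{1\le j\le k}c_j$, as desired.

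There is no genuine obstacle here; the statement is elementary and is only invoked as a black-box tool later. The single step that merits a moment of care is the reindexing $\prod_{i=1}^{k}c_{\sigma(i)}=\prod_{j=1}^{k}c_j$, which uses precisely that a permutation $\sigma$ visits every column index exactly once, so that the product of the per-column bounds is permutation-invariant. (One could alternatively invoke Hadamard's inequality to bound $|\det(\mathbf{A})|$ by a product of Euclidean column norms, but the Leibniz estimate above is the cleanest route to the particular bound with the $k!$ factor that is stated and used.)
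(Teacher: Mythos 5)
Your proof is correct, and it is the standard Leibniz-expansion argument: integrality follows because each of the $k!$ summands is an integer, and the bound follows from the triangle inequality together with the per-column estimate $|a_{i,\sigma(i)}|\le c_{\sigma(i)}$ and the reindexing $\prod_{i}c_{\sigma(i)}=\prod_{j}c_j$. The paper itself supplies no proof of Proposition~\ref{pro:det}---it is cited as a well-known fact from Schrijver's textbook---so there is nothing to compare against, but what you wrote is exactly the intended argument.
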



\begin{PRO}[Cramer's rule]\label{pro:crammer}
  Let $\mathbf{A}$ be a $k\times k$ non-singular  (i.e., with non-zero
  determinant) matrix and $\veb$ a vector. Then the equation
  $\mathbf{A}\vex=\veb$ has a unique solution such that $\vex[i]=
  \frac{\det(\mathbf{A}(i))}{\det(\mathbf{A})},$ where $\mathbf{A}(i)$
  is the matrix formed by replacing the $i$-th column of $\mathbf{A}$ with
  the vector $\veb$.
\end{PRO}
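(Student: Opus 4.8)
The plan is to use the standard two-part argument. First I would establish existence and uniqueness of the solution: since $\mathbf{A}$ is a square matrix with $\det(\mathbf{A})\neq 0$, it is invertible (over $\mathbb{Q}$), so $\vex=\mathbf{A}^{-1}\veb$ is the unique vector with $\mathbf{A}\vex=\veb$. It then remains only to identify the entries of this $\vex$ with the claimed ratios of determinants.

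For the second part, fix $i\in\{1,\dots,k\}$ and let $\vex$ be this unique solution. Writing $\mathbf{A}_{(*,\{j\})}$ for the $j$-th column of $\mathbf{A}$, the equation $\mathbf{A}\vex=\veb$ states precisely that $\veb=\sum_{j=1}^{k}\vex[j]\,\mathbf{A}_{(*,\{j\})}$. By definition $\mathbf{A}(i)$ coincides with $\mathbf{A}$ in every column except the $i$-th, where it has $\veb$. Substituting the above expression for $\veb$ into that column and using that $\det$ is a multilinear, alternating function of the columns of a $k\times k$ matrix, we get
\[
\det(\mathbf{A}(i)) \;=\; \sum_{j=1}^{k}\vex[j]\cdot\det(\mathbf{B}_j),
\]
where $\mathbf{B}_j$ is obtained from $\mathbf{A}$ by replacing its $i$-th column with $\mathbf{A}_{(*,\{j\})}$. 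For every $j\neq i$ the matrix $\mathbf{B}_j$ has two identical columns (columns $i$ and $j$ both equal $\mathbf{A}_{(*,\{j\})}$), hence $\det(\mathbf{B}_j)=0$; for $j=i$ we have $\mathbf{B}_i=\mathbf{A}$. Therefore $\det(\mathbf{A}(i))=\vex[i]\cdot\det(\mathbf{A})$, and dividing by the nonzero number $\det(\mathbf{A})$ yields $\vex[i]=\det(\mathbf{A}(i))/\det(\mathbf{A})$, as claimed.

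I do not expect any real obstacle here: the only ingredients are the invertibility of a matrix of nonzero determinant and the multilinear/alternating behaviour of the determinant under column operations, both entirely standard (see~\cite{schrijver1998theory}). If one prefers to avoid invoking multilinearity directly, the same formula can alternatively be read off from the adjugate identity $\mathbf{A}^{-1}=\det(\mathbf{A})^{-1}\operatorname{adj}(\mathbf{A})$ together with the Laplace (cofactor) expansion of $\det(\mathbf{A}(i))$ along its $i$-th column; that route is purely mechanical and gives the statement immediately.
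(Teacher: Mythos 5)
Your argument is correct: you establish uniqueness via invertibility of $\mathbf{A}$ over $\mathbb{Q}$, expand $\veb$ as the column combination $\sum_j \vex[j]\,\mathbf{A}_{(*,\{j\})}$, and then use multilinearity and the alternating property of $\det$ in the $i$-th column to collapse $\det(\mathbf{A}(i))$ to $\vex[i]\det(\mathbf{A})$; every step is sound. Note, though, that there is nothing to compare against here: the paper states this as a well-known proposition and cites Schrijver's textbook for it rather than supplying a proof, so your write-up is simply a standard self-contained derivation of the cited fact.
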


\subsection{Integer Linear Programming}
\label{sub:ilp}

For our purposes, it will be useful to consider ILP instances which
are in \emph{equation form}. Formally, let an ILP instance $\III$ be a
tuple $(\mathbf{A}, \vex, \veb, \ver, \veu,\eval)$, where:
\begin{itemize}
\item $\mathbf{A}$ is a $n\times m$ matrix of integers (the \emph{constraint matrix}),
\item $\vex$ is a vector of \emph{variables} of size $m$,
\item $\veb$ is an integer vector of size $m$ (the \emph{right-hand side}),
\item $\ver, \veu$ are vectors of elements of $\mathbb{Z}\cup \{\pm \infty \}$ (the \emph{lower} and \emph{upper bounds}, respectively), and
\item $\eval$ is an integer vector of size $m$ (the \emph{optimization function}).
\end{itemize}
Let $A$ be the $i$-th row of $\mathbf{A}$; then we will call $A\vex=\veb[i]$ a \emph{constraint} of $\III$. We will use $\var(\III)$ to denote the set of \emph{variables} (i.e., the elements of $\vex$), and $\cF(\III)$ (or just $\cF$) to denote the set of constraints.
For a subset $U$ of $\var(\III) \cup \cF(\III)$, we denote by $C(U)$
the columns of $\mathbf{A}$ corresponding to variables in $U$ and by
$R(U)$ the rows of $\mathbf{A}$ corresponding to constraints in $U$.

A (partial) assignment $\alpha$ is a mapping from some subset of $\var(\III)$,
denoted by $\var(\alpha)$,  to $\mathbb{Z}$.
An assignment $\alpha$ is called \emph{feasible} if
  \begin{enumerate}
  \item satisfies every constraint in $\cF$, i.e., if $A \alpha(\vex)=\veb[i]$ for each $i$-th row $A$ of $\mathbf{A}$, and
  \item satisfies all the upper and lower bounds, i.e., $\ver[i]\leq \alpha(\vex[i])\leq \veu[i]$.
  \end{enumerate}
Furthermore, $\alpha$ is called a \emph{solution} if the value of $\eval \alpha(\vex)$ is maximized over all feasible assignments; observe that the existence of a feasible assignment does not guarantee the existence of a solution (there may exist an infinite sequence of feasible assignments $\alpha$ with increasing values of $\eval \alpha(\vex)$; in this case, we speak of \emph{unbounded} instances).
Given an instance $\III$, the task in the ILP problem is to compute a
solution for $\III$ or correctly determine that no solution exists.
We remark that other formulations of ILP exist (e.g., a set of inequalities over variables); it is well-known that these are equivalent and can be transformed into each other in polynomial time~\cite{schrijver1998theory}. Moreover, such transformations will only change our parameters (defined in Section~\ref{sec:fracture}) by a constant factor.

Aside from general integer linear programming, we will also be
concerned with two subclasses of the problem.
  \begin{enumerate}
  \item \textsc{ILP-feasibility} is formulated equivalently as ILP, with the restriction that $\eval$ must be the $0$-vector. All hardness results for \textsc{ILP-feasibility} immediately carry over to \textsc{ILP}.
  \item \textsc{Unary ILP} is the class of all ILP instances which are supplied in a unary bit encoding; in other words, the input size of \textsc{Unary ILP} upper-bounds not only the number of variables and constraints, but also the absolute values of all numbers in the input. \textsc{Unary ILP} remains NP-complete in general, but in our setting there will be cases where its complexity will differ from general ILP.
\end{enumerate}
Combining both restrictions gives rise to \textsc{Unary ILP-feasibility}.

There are several ways of naturally representing ILP instances as graphs.
The representation that will be most useful for our purposes will be the so-called \emph{incidence graph}: the incidence graph $G_\III$ of an ILP instance $\III$ is the graph whose vertex set is $\var(\III)\cup \cF(\III)$ and two vertices $s,t$ are adjacent iff $s\in \var(\III)$, $t\in \cF$ and $s$ occurs in $t$ with a non-zero coefficient. An instance $\III'$ is a \emph{connected component} of $\III$ if it is the subinstance of $\III$ corresponding to a connected component of $G_\III$; formally, $\cF(\III')\subseteq \cF(\III)$ is the set of constraints that occur in a connected component of $G_\III$ and $\eval(\III')$ is the restriction of $\eval(\III)$ to $\var(\cF(\III'))$.
For a set $Z\subseteq \cF(\III)\cup \var(\III)$, we will also use $\III\setminus Z$ to denote the ILP instance obtained by removing all constraints in $Z$ from $\cF(\III)$ and removing all variables in $Z$ from all constraints in $\cF(\III)\setminus Z$ and from~$\eval$.

\subsection{Parameterized Complexity}
\label{sub:pc}

In parameterized
algorithmics~\cite{book/FlumG06,book/Niedermeier06,DowneyFellows13}
the runtime of an algorithm is studied with respect to a parameter
$k\in\nn$ and input size~$n$.
The basic idea is to find a parameter that describes the structure of
the instance such that the combinatorial explosion can be confined to
this parameter.
In this respect, the most favorable complexity class is \FPT
(\emph{fixed-parameter tractable})
which contains all problems that can be decided by an algorithm
running in time $f(k)\cdot n^{\bigO{1}}$, where $f$ is a computable
function.
Problems that can be solved in this time are called \emph{fixed-parameter
  tractable} (fpt).

To obtain our lower bounds, we will need the notion of a parameterized reduction.
Formally, a {\em parameterized problem\/} is a subset of $\Sigma^*\times\nn$,
where $\Sigma$ is the input alphabet.
Let $L_1\subseteq \Sigma_1^*\times\nn$ and $L_2\subseteq \Sigma_2^*\times\nn$ be parameterized problems.
A \textit{parameterized reduction} (or fpt-reduction) from $L_1$ to $L_2$ is a mapping $P:\Sigma_1^*\times\nn\rightarrow\Sigma_2^*\times\nn$ such that
\begin{inparaenum}[(i)]
  \item $(x,k)\in L_1$ iff $P(x,k)\in L_2$,
  \item the mapping can be computed by an fpt-algorithm w.r.t.\ parameter $k$, and
  \item there is a computable function $g$ such that $k'\leq g(k)$, where $(x',k')=P(x,k)$.
\end{inparaenum}

A topic related to fixed-parameter algorithms is \emph{kernelization}.
We say that the parameterized problem $L \subseteq \Sigma^* \times \nn$ admits a kernel if there is a function $K: \Sigma^* \times \nn \to \Sigma^* \times \nn$ computable in polynomial time such that $(x,k) \in L$ if and only if $K(x,k) \in L$ and $|K(x,k)| \leq h(k)$ for some computable function $h$.
Informally, a kernel of $L$ is a polynomial time algorithm which given an instance of $L$ produces an equivalent instance of $L$ whose size is bounded by a function of parameter of the original instance.
We can understand a kernel as an effective preprocessing of an instance of some problem.
It is well-known~\cite{CyganFKLMPPS15} that a problem admits a fixed-parameter algorithm if and only if it admits a kernel.
Thus, there is an interest in polynomial kernels, i.e., kernels for which the function $h$ is polynomial.
For proving that our problems do not admit a polynomial kernel we use polynomial parameter transformations from other problems which do not admit a polynomial kernel.

\begin{DEF}[\cite{CyganFKLMPPS15}]
Let $P,Q \subseteq \Sigma^* \times \nn$ be two parameterized problems.
An algorithm $\cA$ is called a \emph{polynomial parameter transformation (PPT)} from $P$ to $Q$ if given an instance $(x,k)$ of problem $P$, $\cA$ works in polynomial time and outputs an equivalent instance $(y,\ell)$ of problem $Q$, i.e., $(x,k) \in P$ if and only if $(y,\ell) \in Q$, such that $\ell \leq p(k)$ for some polynomial $p(\cdot)$.
\end{DEF}

It is known that if a parameterized problem $P$ does not admit a polynomial kernel and there is a PPT from $P$ to $Q$, then $Q$ does not admit a polynomial kernel either~\cite{CyganFKLMPPS15}.

Next, we will define the complexity classes needed to describe our lower bounds.
The class $\W{1}$ captures parameterized intractability and contains
all problems that are fpt-reducible to \probfont{Independent Set} when
parameterized by the size of the solution.
The following relations between the parameterized complexity classes
hold: $\FPT \subseteq \W{1}\subseteq \XP$, where the class \XP
contains all problems solvable in time $\bigO{n^{f(k)}}$ for a
computable function $f$.
Showing $\W{1}$-hardness for a problem rules out the existence of an
fpt-algorithm under standard complexity assumptions.

The class \paraNP
is defined as the class of problems
that are solvable by a non-deterministic Turing machine in fpt time.
In our \paraNP-hardness proofs, we will make use of the following
characterization of \paraNP-hardness given in the book by Flum and Grohe~\cite{book/FlumG06},
Theorem 2.14: any parameterized problem that remains \NP{}\hy{}hard when the
parameter is set to some constant is \paraNP-hard. For problems in \NP, we have
$\W{1}\subseteq \paraNP$ and in particular showing \paraNP-hardness rules out the existence of algorithms with a running time of $\bigO{n^{f(k)}}$.
For our algorithms, we will use the following result as a subroutine. Note that this is a streamlined version of the original statement of the theorem, as used in the area of parameterized algorithms~\cite{FellowsLokshtanovMisraRS08}.

\begin{PRO}[\cite{Lenstra83,Kannan87,FrankTardos87}]
\label{pro:pilp}
There is an algorithm that solves an input ILP instance $\III=(\cF,\eval)$ in time
$p^{\bigoh(p)}\cdot |\III|$, where $p=|\var(\III)|$.
\end{PRO}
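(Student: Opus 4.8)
The plan is to reduce the statement to the classical result that integer programming in fixed dimension is polynomial, and then to add the optimization layer on top. First I would rewrite the constraints $\cF$ together with the lower and upper bounds $\ver,\veu$ as a single system of linear inequalities, obtaining a polyhedron $P=\{x\in\R^p:\mathbf{A}x\le\veb\}$ with $p=|\var(\III)|$; this standard transformation increases the number of variables only by a constant factor, so it does not affect the claimed bound. The feasibility question then becomes whether $P\cap\mathbb{Z}^p\ne\emptyset$. Before anything expensive, I would apply the Frank--Tardos preprocessing: in strongly polynomial time it replaces all entries of $\mathbf{A}$, $\veb$ and $\eval$ by integers whose bit-length is bounded by a function of $p$ alone, while preserving the set of optimal solutions. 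After this step every linear-algebraic subroutine used below runs in time depending only on $p$, so the sole remaining dependence on $|\III|$ is the one-time cost of reading the input; this is exactly what produces the linear factor $|\III|$ in the stated running time.

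The core is Lenstra's recursion on the dimension, in Kannan's sharpened form. I would compute, via the smallest enclosing ellipsoid (John's theorem), an affine map sending $P$ between two concentric balls whose radii differ by a factor polynomial in $p$, and then run LLL basis reduction on the image of the lattice $\mathbb{Z}^p$. By Khinchine's flatness theorem exactly one of two things happens: either a lattice point is found inside $P$ and we are done, or LLL yields a nonzero integer vector $d$ with width $\max_{x\in P}d^\top x-\min_{x\in P}d^\top x\le f(p)$ for a constant $f(p)=\bigoh(p^{3/2})$ depending only on $p$. In the latter case every integer point of $P$ has $d^\top x$ equal to one of at most $f(p)+1$ integers, so I recurse on each $(p-1)$-dimensional slice $P\cap\{x:d^\top x=t\}$. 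The recursion has depth $p$ and branching factor $f(p)+1$, hence generates $f(p)^{\bigoh(p)}=p^{\bigoh(p)}$ leaves, each handled in time polynomial in $p$ after the preprocessing; multiplying by the cost of reading the instance once gives $p^{\bigoh(p)}\cdot|\III|$.

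For the optimization version I would fold the objective into the feasibility test: add the constraint $\eval^\top x\ge t$ and binary-search over $t$, calling the feasibility routine at each step. After the Frank--Tardos preprocessing a finite optimum, when it exists, is a ratio of subdeterminants of a matrix with small entries and hence lies in a range whose bit-length is bounded by a function of $p$, so the binary search terminates after a number of rounds depending only on $p$; infeasibility is reported directly by the feasibility routine, and unboundedness is detected by checking whether the recession cone of $P$ meets the open halfspace $\eval^\top x>0$. The step I expect to be the main obstacle is the flatness argument itself: one must show that LLL run on the John-rounded copy of $\mathbb{Z}^p$ genuinely certifies one of the two alternatives, which requires carefully relating the length of the shortest vector of the reduced basis to the covering radius of the lattice and thence to the width of $P$, and in particular pushing the branching factor down to $p^{\bigoh(1)}$ rather than $2^{\bigoh(p)}$ as in Lenstra's original argument — this is precisely Kannan's refinement and the technically most delicate part of the construction. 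Since all of this is classical, in the paper itself I would simply cite~\cite{Lenstra83,Kannan87,FrankTardos87} rather than reproduce it.
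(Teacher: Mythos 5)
The paper does not prove this proposition at all: it is stated as a citation of Lenstra, Kannan, and Frank--Tardos, with a note that the streamlined running-time form $p^{\bigoh(p)}\cdot|\III|$ is the version standard in parameterized algorithmics (attributed to Fellows et al.). Your sketch is a faithful account of that classical Lenstra--Kannan recursion with Frank--Tardos preprocessing, and since you yourself conclude by electing to cite rather than reproduce the argument, your treatment coincides with the paper's.
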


\subsection{ILP with Structured Matrices}
\label{sub:ILPwStructure}
Our results build on and extend the classical variable-dimension ILP techniques detailed for instance in the work of Onn and De Loera et al.~\cite{DeLoeraHK:AGideasBook,Onn2010,HemmeckeKW10}. Below, we provide a basic introduction to these techniques and related results.
Let ${\mathbf{A} = \begin{pmatrix}
\mathbf{A}_1 & \mathbf{A}_2 \\
\mathbf{A}_3 & \mathbf{A}_4
\end{pmatrix}}$
be a $2\times 2$ block integer matrix. The \emph{$N$\hy fold 4-block product of $\mathbf{A}$} (denoted by $\mathbf{A}^{(N)}$) is the following integer matrix
\begin{equation*}
\mathbf{A}^{(N)}\quad=\quad
\left(
\begin{array}{ccccc}
  \mathbf{A}_1	& \mathbf{A}_2    & \mathbf{A}_2    & \cdots & \mathbf{A}_2    \\
  \mathbf{A}_3	& \mathbf{A}_4    & 0      & \cdots & 0      \\
  \mathbf{A}_3	& 0      & \mathbf{A}_4    & \cdots & 0      \\
 \vdots & \vdots & \vdots & \ddots & \vdots \\
  \mathbf{A}_3	& 0      & 0      & \cdots & \mathbf{A}_4    \\
\end{array}
\right).
\end{equation*}
Here $\mathbf{A}_1$ is an $r\times s$ matrix, $\mathbf{A}_2$ is an $r\times t$ matrix, $\mathbf{A}_3$ is an $u\times s$ matrix, and $\mathbf{A}_4$ is an $u\times t$ matrix; for convenience, we let $b_{\mathbf{A}}=\max(r,s,t,u)$.
We call an instance $(\mathbf{A}, \vex, \veb, \ver, \veu,\eval)$ of ILP an \emph{$N$\hy fold 4-block} if $\mathbf{A}$ is an $N$\hy fold 4-block product of some $2\times 2$ block integer matrix.
Observe that in such instances the vector $\vex$ is naturally partitioned into a global part (consisting of $s$ variables) and a local part.

\begin{theorem}[\citeauthor{HemmeckeKW10}
  \citeyear{HemmeckeKW10}]\label{thm:nfold4blockip}
  Let $a$ and $z$ be constants and let $\III$ be an \emph{$N$\hy fold
    4-block} ILP instance with $\coA\leq a$, $b_{\mathbf{A}}\leq z$, then $\III$
  can be solved in polynomial time.

\end{theorem}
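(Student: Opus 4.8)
The plan is to prove this by the \emph{Graver-best augmentation} method for block-structured integer programs, with essentially all of the difficulty concentrated in one structural bound on the Graver basis of $\mathbf{A}^{(N)}$.

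\textbf{Augmentation framework.} Write $\III$ as $\max\{\eval\vex:\mathbf{A}^{(N)}\vex=\veb,\ \ver\le\vex\le\veu,\ \vex\in\mathbb{Z}^m\}$ and let $\mathcal{G}=\mathcal{G}(\mathbf{A}^{(N)})$ be its Graver basis, that is, the set of $\sqsubseteq$-minimal nonzero integer vectors in $\ker(\mathbf{A}^{(N)})$, where $\ve{u}\sqsubseteq\ve{v}$ means that $\ve{u}$ agrees coordinatewise in sign with $\ve{v}$ and $|\ve{u}_j|\le|\ve{v}_j|$ for all $j$. Two ingredients are classical and do not use the block structure: one obtains an initial feasible solution (or a certificate of infeasibility) by adding nonnegative slack columns to every constraint --- which preserves the $N$-fold $4$-block shape and changes $a,z$ only by constants --- and minimising the sum of the slacks by the same procedure; and from any feasible $\vex$, after a number of \emph{scaled Graver-best steps} polynomial in $|\III|$ one reaches an optimum or detects unboundedness. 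A scaled Graver-best step, for a step length $\lambda$ ranging over powers of two up to the width of the box, replaces $\vex$ by $\vex+\lambda\ve{g}$ for the $\ve{g}\in\ker(\mathbf{A}^{(N)})$ with $\|\ve{g}\|_\infty\le\max_{\ve{h}\in\mathcal{G}}\|\ve{h}\|_\infty$ that keeps $\ver\le\vex+\lambda\ve{g}\le\veu$ and maximises $\eval(\lambda\ve{g})$. Hence it suffices to implement, in polynomial time, the oracle computing such a step.

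\textbf{The key bound and the oracle.} The heart of the proof is the structural estimate: there is a polynomial $q$, whose degree and coefficients depend only on $a$ and $z$, such that $\|\ve{g}\|_\infty\le q(N)$ for every $\ve{g}\in\mathcal{G}(\mathbf{A}^{(N)})$. Granting it, fix a feasible $\vex$ and a step length $\lambda$; the oracle must solve
\[
\max\bigl\{\,\eval(\lambda\ve{g}) \;:\; \mathbf{A}^{(N)}\ve{g}=\mathbf{0},\ \ver\le\vex+\lambda\ve{g}\le\veu,\ \|\ve{g}\|_\infty\le q(N)\,\bigr\}.
\]
Split $\ve{g}$ into its global block $\ve{g}^{0}\in\mathbb{Z}^{s}$ (with $s\le z$) and its local bricks $\ve{g}^{1},\dots,\ve{g}^{N}\in\mathbb{Z}^{t}$, and enumerate $\ve{g}^{0}$ over the box $\{-q(N),\dots,q(N)\}^{s}$; since $s$ is a constant this is only $(2q(N)+1)^{s}=\mathrm{poly}(N)$ possibilities. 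For each fixed $\ve{g}^{0}$ the residual program in the bricks has the uniform local constraints $\mathbf{A}_4\ve{g}^{i}=-\mathbf{A}_3\ve{g}^{0}$ (the same right-hand side for every $i$), the single linking constraint $\mathbf{A}_2\sum_i\ve{g}^{i}=-\mathbf{A}_1\ve{g}^{0}$, per-brick box constraints coming from $\ver\le\vex+\lambda\ve{g}\le\veu$ together with $\|\ve{g}^{i}\|_\infty\le q(N)$, and a separable per-brick objective --- that is precisely an ordinary $N$-fold ILP with block dimensions at most $z$ and coefficients at most $a$, which can be solved in polynomial time~\cite{DeLoeraHK:AGideasBook}, its correctness resting on the Graver complexity of the underlying pair of blocks being a constant depending only on the blocks. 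Combining: polynomially many augmentation steps, each of polynomially many oracle calls, each call reducing to $\mathrm{poly}(N)$ many $N$-fold solves, gives a running time polynomial in $|\III|$ for every fixed $a,z$.

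\textbf{Main obstacle.} Everything hinges on the polynomial bound $\|\ve{g}\|_\infty\le q(N)$ for the Graver elements of $\mathbf{A}^{(N)}$, and this is the one genuinely hard step. For ordinary $N$-fold ILP the analogous quantity, the Graver complexity, is bounded by a constant depending only on the blocks --- which is exactly what makes that problem fixed-parameter tractable --- but having both a global variable block $\mathbf{A}_1$ and a global constraint block $\mathbf{A}_3$ breaks this, and only an $N$-dependent bound is possible. The argument fixes a Graver element $\ve{g}=(\ve{g}^{0},\ve{g}^{1},\dots,\ve{g}^{N})$, observes that each brick $\ve{g}^{i}$ lies in a single coset of $\ker(\mathbf{A}_4)$ pinned down by $\mathbf{A}_3\ve{g}^{0}$, bounds $\|\ve{g}^{0}\|$ and the number of distinct brick ``types'' that can occur in a conformal decomposition of $\ve{g}$ by invoking the finiteness of the Graver bases of the fixed-size matrices $\mathbf{A}_4$ and $(\mathbf{A}_3\mid\mathbf{A}_4)$, and then controls the contribution of each type through the $\mathbf{A}_2$-linking equation. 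The estimate it yields is polynomial in $N$ but with an exponent that grows with $a$ and $z$, which is precisely why the algorithm runs in polynomial time for each fixed choice of the parameters while the fixed-parameter tractability of $N$-fold $4$-block ILP remains open.
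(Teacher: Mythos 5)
This statement is cited by the paper from Hemmecke, K\"{o}ppe, and Weismantel (2010) without proof, so there is no in-paper argument to compare against; what follows evaluates your reconstruction on its own merits and against the cited source.

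Your proposal is correct and reproduces the structure of the Hemmecke--K\"{o}ppe--Weismantel argument faithfully: the scaled Graver-best augmentation framework, the reduction of feasibility to a phase-I instance of the same shape, and the observation that the entire difficulty concentrates in a bound $\|\ve{g}\|_\infty \le q(N)$ on the Graver elements of $\mathbf{A}^{(N)}$, with $q$ a polynomial whose degree depends only on $a$ and $z$ --- this is precisely the main technical theorem of the cited paper, and your remark that an $N$-independent bound is impossible (contrary to the pure $N$-fold and $2$-stage cases) is exactly what blocks an \FPT{} algorithm and leaves the question on the first row, ``mixed'' column of Table~\ref{fig:results} open. Your oracle implementation --- enumerate the global block $\ve{g}^{0}$ over a box of side $2q(N)+1$ in constantly many (at most $z$) coordinates, then observe that the residual problem with constraints $\mathbf{A}_4\ve{g}^{i}=-\mathbf{A}_3\ve{g}^{0}$ and $\mathbf{A}_2\sum_i\ve{g}^{i}=-\mathbf{A}_1\ve{g}^{0}$ is an ordinary $N$-fold ILP over the fixed bimatrix $\bigl(\begin{smallmatrix}\mathbf{A}_2\\\mathbf{A}_4\end{smallmatrix}\bigr)$, solvable in polynomial time by Theorem~\ref{thm:nfoldand2stocip} --- is a valid polynomial-time augmentation oracle, since any Graver element has its $\ve{g}^{0}$ inside the enumerated box, so the returned step dominates the Graver-best step, which is all the convergence analysis requires. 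The one place you leave work to the reader, deliberately, is the proof of the polynomial Graver bound itself; your sketch (coset structure of the bricks modulo $\ker\mathbf{A}_4$, finiteness of the Graver bases of $\mathbf{A}_4$ and $(\mathbf{A}_3\mid\mathbf{A}_4)$, control through the $\mathbf{A}_2$-linking row) is the right high-level picture of that argument, though turning it into a complete proof is the substance of the cited paper and is genuinely nontrivial. In short: correct, and essentially the same route as the source the paper cites.
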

In the parameterized complexity setting, the above theorem yields an \XP algorithm solving ILP parameterized by $\max(b_{\mathbf{A}},\coA)$ if the matrix is a $N$\hy fold 4-block product.
We note that the existence of a fixed-parameter algorithm for this problem remains a challenging open problem~\cite{HemmeckeKW10}.
However, the problem is known to be fixed-parameter
tractable when either $\mathbf{A}_1$ and $\mathbf{A}_3$ or $\mathbf{A}_1$ and $\mathbf{A}_2$ are omitted;
these variants are called the \emph{$N$-fold ILP} problem and the
\emph{$2$-stage stochastic ILP} problem, respectively.




\begin{theorem}[\citeauthor{HemmeckeOR:13}
  \citeyear{HemmeckeOR:13}, \citeauthor{DeLoeraHK:AGideasBook}
  \citeyear{DeLoeraHK:AGideasBook}]\label{thm:nfoldand2stocip} 
  $N$-fold ILP and $2$-stage stochastic ILP are fpt parameterized by $\coA$
  and $b_{\mathbf{A}}$.
\end{theorem}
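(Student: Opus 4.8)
The plan is to derive both statements from the \emph{Graver-basis augmentation} framework; I describe it for $N$-fold ILP, the argument for $2$-stage stochastic ILP being obtained by transposing the block structure of $\mathbf{A}$. Recall that the Graver basis $\cG(\mathbf{M})$ of an integer matrix $\mathbf{M}$ consists of the $\sqsubseteq$-minimal nonzero integer vectors of $\ker(\mathbf{M})$, where $\mathbf{h}\sqsubseteq\mathbf{h}'$ means that $\mathbf{h}$ and $\mathbf{h}'$ agree coordinatewise in sign and $|\mathbf{h}[i]|\le|\mathbf{h}'[i]|$ for all $i$. Two classical facts drive the approach. First (the augmentation lemma), every nonzero integer vector of $\ker(\mathbf{M})$ is a sign-compatible nonnegative integer combination of at most $2n-2$ elements of $\cG(\mathbf{M})$, where $n$ is the number of columns of $\mathbf{M}$; hence, whenever $\mathbf{x}$ is feasible but not optimal and $\mathbf{x}^{\star}$ is an optimum, some $\mathbf{g}\in\cG(\mathbf{M})$ and step length $\lambda\ge 1$ give a feasible $\mathbf{x}+\lambda\mathbf{g}$ whose objective exceeds that of $\mathbf{x}$ by at least a $\frac{1}{2n}$-fraction of the current optimality gap (and an improving ray is detected if the instance is unbounded). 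Iterating such a \emph{Graver-best step} therefore reaches an optimum in $\mathrm{poly}(|\III|)$ iterations, since the gap is at most singly exponential in the input size. Second, an initial feasible assignment, or a proof of infeasibility, is produced by running the same scheme on an auxiliary instance of the same block shape that minimizes the total constraint violation. So everything reduces to (i) a bound on the $\ell_\infty$-norm of the Graver elements of the $N$-fold matrix, and (ii) an fpt-time oracle computing a Graver-best step.

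The heart of the proof is (i). I would show that there is a number $g=g(\mathbf{A}_2,\mathbf{A}_4)$, bounded by a computable function of $\coA$ and $b_{\mathbf{A}}$, with $\|\mathbf{h}\|_\infty\le g$ for every $\mathbf{h}\in\cG(\mathbf{A}^{(N)})$ \emph{and every $N$}. Write $\mathbf{h}=(\mathbf{h}^{1},\dots,\mathbf{h}^{N})$ according to the bricks. Each nonzero brick $\mathbf{h}^{i}$ lies in $\ker(\mathbf{A}_4)$, hence is a sign-compatible nonnegative combination of elements of the finite set $\cG(\mathbf{A}_4)$; collecting the $\cG(\mathbf{A}_4)$-components used across all bricks and reading them through $\mathbf{A}_2$, minimality of $\mathbf{h}$ forces this collection to form a $\sqsubseteq$-minimal kernel vector of the \emph{finite} matrix obtained by multiplying $\mathbf{A}_2$ with the columns indexed by $\cG(\mathbf{A}_4)$. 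This simultaneously bounds the number of nonzero bricks of $\mathbf{h}$ and the $\ell_1$-norm of each brick by a constant depending only on $\mathbf{A}_2$ and $\mathbf{A}_4$, i.e.\ on $\coA$ and $b_{\mathbf{A}}$ --- this ``$\cG$-of-$\cG$'' recursion is the \emph{Graver complexity} of the building block, and establishing its finiteness and an explicit bound is the real work.

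Granting the bound $g$, the oracle (ii) is realized in fixed-parameter time: finding a Graver-best direction amounts to solving an auxiliary ILP over the same $N$-fold block pattern with objective $\eval$, zero right-hand side for the linking constraints, the shifted bounds $\ver-\mathbf{x}\le\mathbf{h}\le\veu-\mathbf{x}$, and the extra restriction $-g\le\mathbf{h}\le g$ (a short search over step lengths that are powers of two then recovers the best $\lambda$). This is a \emph{bounded-domain} $N$-fold ILP; since the linking block and the diagonal block are identical across bricks, the bricks are pairwise interchangeable, so it can be solved by a dynamic program that sweeps the bricks while maintaining the partial sum of the linking-block contributions --- a reordering argument keeps this partial sum inside a window whose size depends only on $\coA$ and $b_{\mathbf{A}}$, so the dynamic program has $f(\coA,b_{\mathbf{A}})$ states and runs in time $f(\coA,b_{\mathbf{A}})\cdot\mathrm{poly}(|\III|)$. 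Multiplying the polynomially many augmentation steps by this per-step cost, and doing the same for the bootstrapping instance, gives the claimed fixed-parameter running time. The case of $2$-stage stochastic ILP is entirely symmetric: its matrix is, up to row/column permutation, the transpose of the $N$-fold pattern, so the same augmentation scheme applies with the roles of the two linking directions exchanged.

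I expect the structural bound (i) --- finiteness of the Graver complexity and an explicit bound in terms of $\coA$ and $b_{\mathbf{A}}$ --- to be the main obstacle; the augmentation lemma, the geometric decrease of the optimality gap, the brick-sweeping dynamic program, and the bootstrapping of an initial solution are then fairly routine. It is precisely this uniform bound that is unavailable for the full $4$-block product, where the blocks $\mathbf{A}_1$ and $\mathbf{A}_3$ couple bricks through genuinely global variables, which is why Theorem~\ref{thm:nfold4blockip} delivers only an \XP{} algorithm.
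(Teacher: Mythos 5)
The paper itself offers no proof of Theorem~\ref{thm:nfoldand2stocip}: it is imported as a black box from Hemmecke, Onn, and Romanchuk and from the De Loera--Hemmecke--K\"oppe book, so there is no in-paper argument to compare against. Judged on its own terms, your reconstruction of the $N$\hy fold half is a sound outline of the cited proof: Graver-best augmentation with geometric gap decrease, bootstrapping of an initial feasible point, the ``$\cG$-of-$\cG$'' compression showing that the number of nonzero bricks in any Graver element of $\mathbf{A}^{(N)}$ is bounded by the Graver complexity of the pair $(\mathbf{A}_2,\mathbf{A}_4)$, and a partial-sum dynamic program (or, in the original source, a direct search over a bounded set of building blocks) as the Graver-best oracle.

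The gap is in the $2$-stage stochastic half, which you dismiss as ``entirely symmetric'' via transposition. The matrix shapes are transposes of one another, but \emph{Graver bases do not transpose}: $\cG(\mathbf{M})$ is determined by $\ker(\mathbf{M})$, and $\ker(\mathbf{M}^{T})$ bears no structure-preserving relation to $\ker(\mathbf{M})$, so neither the uniform $\ell_\infty$-bound nor the $\cG$-of-$\cG$ recursion carries over by symmetry. Concretely, in a $2$-stage kernel vector $(\mathbf{h}^{0},\mathbf{h}^{1},\dots,\mathbf{h}^{N})$ each brick satisfies $\mathbf{A}_4\mathbf{h}^{i}=-\mathbf{A}_3\mathbf{h}^{0}$, i.e.\ it solves an \emph{inhomogeneous} system tied to the shared global part; there is no natural decomposition of each brick into $\cG(\mathbf{A}_4)$-elements whose multiplicity vector lies in the kernel of a compressed matrix. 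The actual bound on $\|\mathbf{h}\|_\infty$ uniform in $N$ for $2$-stage stochastic matrices is due to Hemmecke--Schultz and Aschenbrenner--Hemmecke and is proved by an entirely different, non-constructive route (a well-quasi-ordering/Noetherianity argument \`a la Maclagan), not by the Santos--Sturmfels/Ho\c{s}ten--Sullivant Graver-complexity argument you invoke for $N$\hy fold. Your augmentation oracle also does not transpose: the $N$\hy fold DP maintains a running partial sum of the linking-row contributions $\sum_i\mathbf{A}_2\mathbf{h}^{i}$, but in the $2$-stage shape there is no such sum to sweep --- the coupling is through the global variables $\mathbf{h}^{0}$, and the standard oracle instead \emph{enumerates} $\mathbf{h}^{0}$ over a box of side $O(g)$ and then solves each brick independently. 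So the second half of the theorem needs its own structural lemma and its own oracle; the transposition shortcut does not close it.
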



\section{The Fracture Number}
\label{sec:fracture}
We are now ready to formally introduce the studied parameter and
related notions. An ILP instance $\III$ is called \emph{$\ell$\hy
  compact} if each connected component of $\III$ contains at most
$\ell$ variables and constraints; equivalently, each connected
component of $G_\III$ contains at most $\ell$ vertices. It is not
difficult to observe that any $\ell$-compact ILP instance can be
solved in time at most $\ell^{\bigoh(\ell)}\cdot |\III|$ due to
  Proposition~\ref{pro:pilp}; indeed,
we can compute a solution for $\III$ by combining solutions for each connected component of $\III$, and hence it suffices to apply Proposition~\ref{pro:pilp} independently on each component.

A set $Z\subseteq \cF\cup \var(\III)$ is called a \emph{backdoor to $\ell$\hy compactness} if $\III\setminus Z$ is $\ell$\hy compact; moreover, if $Z\cap \cF=\emptyset$ then $Z$ is called a \emph{variable-backdoor to $\ell$\hy compactness}, and if $Z\cap \var(\III)=\emptyset$ then $Z$ is a \emph{constraint\hy backdoor to $\ell$\hy compactness}. We use $b_\ell(\III)$ to denote the cardinality of a minimum backdoor to $\ell$\hy compactness, and similarly $b^V_\ell(\III)$ and $b^C_\ell(\III)$ for variable-backdoors and constraint-backdoors to $\ell$\hy compactness, respectively. It is easy to see that, depending on the instance, $b^V_\ell(\III)$ can be arbitrarily larger or smaller than $b^C_\ell(\III)$. On the other hand, $b_\ell(\III)\leq \min(b^V_\ell(\III),b^C_\ell(\III))$.

Clearly, the choice of $\ell$ has a major impact on the size of backdoors to $\ell$-compactness; in particular, $b_\ell(\III)$ could be arbitrarily larger than $b_{\ell+1}(\III)$, and the same of course also holds for variable- and constraint-backdoors. Since we will be interested in dealing with cases where both $\ell$ and $b_\ell(\III)$ are small, we will introduce the \emph{fracture number} $\fr$ which provides bounds on both $\ell$ and $b_\ell$; in particular, we let $\fr(\III)=\min_{\ell\in \Nat} (\max(\ell, b_\ell(\III)))$. Furthermore, we say that a backdoor $Z$ \emph{witnesses} $\fr(\III)$ if $|Z|\leq \fr(\III)$ and $\III\setminus Z$ is $\fr(\III)$-compact. We define $\fr^C(\III)$ and $\fr^V(\III)$ similarly, with $b_\ell(\III)$ replaced by $b^C_\ell(\III)$ and $b^V_\ell(\III)$, respectively. If the instance $\III$ is clear from the context, we omit the reference to $\III$; see Figure~\ref{fig:example} for an example.


We remark that the fracture number represents a strict
generalization of the parameter $b_{\mathbf{A}}$ used in
Theorems~\ref{thm:nfold4blockip} and \ref{thm:nfoldand2stocip}; in particular, $\fr\leq 2b_{\mathbf{A}}$ (and similarly for $\fr^V$ and $\fr^C$ for the latter two theorems). Moreover, the fracture number is well-defined for all ILP instances, not only for $N$-fold $4$-block products. In this respect, $N$-fold $4$-block products with bounded $b_{\mathbf{A}}$ form the subclass of instances with bounded $\fr$ such that each component \emph{must contain precisely the same submatrix}. It is not difficult to see that this is indeed a very strong restriction.

\begin{figure}
 \begin{tabbing}
 \quad \quad \quad \quad \= \kill

\> $\mathtt{maximize \sum_{i=1}^{7}i\cdot x_i}$, \hspace{0.6cm} \text{where}\\

\> $\mathtt{\sum_{i=1}^{7}x_i=32}$, \\

\> $\mathtt{1x_1+y=6}$, \hspace{1.4cm} $\mathtt{2x_2+y=9}$, \\

\> $\mathtt{3x_3+y=14}$, \hspace{1.2cm} $\mathtt{4x_4+y=21}$, \\

\> $\mathtt{5x_5+y=30}$, \hspace{1.2cm} $\mathtt{6x_6+y=41}$.
\end{tabbing}

\caption{The constraints and optimization function of
  a simple ILP instance with $\fr=2$, witnessed by a backdoor containing $y$ and the first constraint.}
\label{fig:example}

\vspace{-0.2cm}
\end{figure}



\section{Computing the Fracture Number}

Our evaluation algorithms for ILP require a backdoor set as part of
their input. In this section we show how to efficiently compute small
backdoor sets, i.e., we show how to solve the following problem.
\problembox{\textsc{Fracture} \COMPFRAC{} (\COMPFRACS{})}
{An ILP instance $\III$ and a natural number $k$.}
{$k$}
{Determine whether $\fr(\III) \leq k$ and if so output a backdoor set witnessing this.}
We also define the variants \VCOMPFRACS{} and \CCOMPFRACS{} that
are concerned with finding a variable or a constraint backdoor,
respectively, in the natural way. Observe that at its core the above problem and its variants are really
a problem on the incidence graph of the ILP instance. Namely, the problems
can be equivalently stated as the following graph problem.

\problembox{\VDTSC{} (\VDTSCS{})}
{An undirected bipartite graph $G$ with bipartition $\{U,W\}$, a set
  $D \in \{U,V(G)\}$, and an integer $k$.}
{$k$}
{Is there a set $B \subseteq D$ of at most $k$ vertices such that
  every connected component of $G \setminus B$ has size at most $k$?}

\noindent It is worth noting that this graph problem is closely related to the
so-called \textsc{Vertex Integrity} problem, which has been studied on
a variety of graph classes, including co-bipartite graphs~\cite{DrangeDregiHof16}.
Unfortunately, to the best of our knowledge nothing is
known about its complexity on bipartite graphs.

To see that each variant of \COMPFRACS{} is equivalent to a specific subcase of the \VDTSCS{} problem (in particular depending on the choice of $D$ in the instance),
consider the following
polynomial-time reductions in both directions. Given an instance
$(\III,k)$ of \COMPFRACS{}, then the instance
$(G_\III,V(G_\III),k)$ of \VDTSCS{} is easily seen to be equivalent.
Similarly, if $(\III,k)$ is an instance of \VCOMPFRACS{} or C\hy
\COMPFRACS{}, then $(G_\III,\var(\III),k)$ and $(G_\III,\cF(\III),k)$
are equivalent instances of \VDTSCS{}.
Moreover, if $I=(G,V(G),k)$ is an instance of \VDTSCS{},
then $(\III,k)$, where $\III$ is any ILP instance such that
$G_\III$ is isomorphic with $G$ is an equivalent instance of
\COMPFRACS{}. Similarly, if $I=(G,U,k)$ is an instance
of \VDTSCS{}, then $(\III,k)$, where $\III$ is any ILP instance such that
$G_\III$ is isomorphic with $G$ and $\var(\III)=U$, is an equivalent instance of
\VCOMPFRACS{}.
Note that such an instance $\III$ can for instance be obtained as
follows:
\begin{itemize}
\item for every vertex $v \in U$, $\III$ has one variable $v$ with
  arbitrary domain,
\item for every vertex $v \in W$, $\III$ has one constraint with
  arbitrary non-zero coefficients on the variables in $N_G(v)$,
\end{itemize}

To justify a parameterized complexity analysis of our detection problems, we first show \NP{}\hy
completeness of our problems. It is worth noting that the \NP{}\hy completeness of \VDTSC{} was far from obvious at first glance due to the restriction to bipartite graphs; indeed, for instance the related problem of deleting at most $k$ vertices such that the remaining graph only contains isolated vertices (\textsc{Vertex Cover}) is well-known to be polynomial on bipartite graphs.

\begin{THE}
  \COMPFRACS{}, \VCOMPFRACS{}, and \CCOMPFRACS{} are \NP{}\hy complete.
\end{THE}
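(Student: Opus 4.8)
The plan is to establish NP-hardness via a reduction from a well-understood NP-complete problem; membership in \NP{} is immediate, since a backdoor set $Z$ of size at most $k$ can be verified in polynomial time by deleting $Z$ from the incidence graph and checking that every remaining connected component has at most $k$ vertices. For hardness, I would exploit the equivalence (spelled out in the excerpt) between \COMPFRACS{} and its variants and the \VDTSCS{}/\textsc{Vertex Integrity}-type problem on bipartite graphs: it suffices to show that deciding whether a bipartite graph $G$ admits a deletion set $B$ (drawn from $V(G)$, from one side $U$, or from $W$) of size at most $k$ such that every component of $G\setminus B$ has at most $k$ vertices is \NP{}-hard. Since the three variants of \COMPFRACS{} correspond exactly to the three choices of the deletion-side $D\in\{V(G),\var(\III),\cF(\III)\}$, a single reduction producing a bipartite instance — together with the gadget that realizes an arbitrary bipartite incidence graph as $G_\III$ — will handle all three cases, possibly with minor adjustments to force the solution onto the desired side.

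The reduction I would use starts from a suitable \NP{}-hard source problem whose ``both small budget and small fragments'' flavour matches \textsc{Vertex Integrity}; natural candidates are \textsc{Vertex Cover} on a restricted graph class or a balanced/regular variant of \textsc{Clique} or \textsc{3-Partition}-like problems. Given an instance with a target value $t$, I would build a bipartite graph by subdividing every edge (which makes the graph bipartite and converts ``covering edges'' into ``hitting subdivision vertices''), and then pad the construction with carefully sized pendant gadgets so that (i) any deletion set that keeps the component size bounded is forced to pick essentially a vertex cover of the original graph, and (ii) the residual components are exactly the controllably small pieces left after removing such a cover. Setting $k$ as a function of $t$ and the gadget sizes, one shows the equivalence: a cover of size $t$ yields a backdoor of size $k$, and conversely any backdoor of size $k$ must, by a counting/pigeonhole argument on the total vertex count versus the number of surviving components, correspond to a cover of size $t$.

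The main obstacle is the \emph{double constraint}: the deletion budget and the component-size cap are both equal to $k$ (this is what distinguishes the fracture number from plain vertex-deletion-to-small-components). This couples the two sides of the reduction and rules off the shelf reductions that only control one parameter. Concretely, the gadgets must be engineered so that with budget exactly $k$ one cannot ``cheat'' by deleting fewer real vertices and instead breaking up large components by hitting gadget interiors, nor by using the budget to shrink a component that would otherwise exceed $k$; getting the gadget sizes, the number of gadget copies, and the value of $k$ to line up so that the only feasible solutions are the intended ones is the delicate part. I would handle this by making the source instance's parameter $t$ grow (e.g. by a polynomial blow-up or by taking many disjoint copies), so that $k=\Theta(t)$ dominates the fixed gadget sizes and a clean pigeonhole bound forces the structure of any size-$k$ solution. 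Once the graph-side reduction is in place, transferring it to the ILP formulations of \COMPFRACS{}, \VCOMPFRACS{}, and \CCOMPFRACS{} is routine using the incidence-graph realization described earlier.
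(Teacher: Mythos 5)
Your high-level plan (membership in \NP{} is trivial; use the equivalence between \COMPFRACS{} and the \VDTSCS{}/\textsc{Vertex Integrity} problem on bipartite graphs; build a single bipartite gadget reduction and observe it works for all three choices of $D$) matches the paper's framing, and you correctly identify the central technical difficulty: the deletion budget and the component-size cap are forced to be the same number $k$. However, there is a genuine gap at the point where the proposal needs to become a reduction. You do not commit to a source problem, and the candidates you float (\textsc{Vertex Cover} with edge subdivision, \textsc{Clique}, \textsc{3-Partition}-style problems) are not accompanied by a gadget design that actually realizes both constraints simultaneously. Subdividing a \textsc{Vertex Cover} instance, for example, does not by itself tie component sizes to $k$: in the subdivided graph, after deleting any small cover, the residual components are paths of length one or two, so the component-size cap is vacuous; and conversely a small deletion set that merely breaks the graph into $\le k$-sized pieces need not be a cover.

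More importantly, your remark that ``$k=\Theta(t)$ dominates the fixed gadget sizes'' points in the wrong direction. The paper's reduction (from the variant of \textsc{3-SAT} where every literal occurs in exactly two clauses) does the opposite: every variable gadget is a $K_{2,k-5}$ and every clause gadget is a $K_{3,k-3}$, so each gadget already has size $\Theta(k)$ on its own. That is exactly what makes the coupling bite --- with budget $k=n+2m$, \emph{every} variable gadget must lose at least one vertex and \emph{every} clause gadget must lose at least two, and since $n+2m$ is tight this forces exactly one deletion per variable (truth value) and two per clause (non-satisfying literals). The tiny bridging vertices $d_j^a,e_j^a$ then make the component sizes spill over to exactly $k$ precisely when the chosen deletions encode a satisfying assignment. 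If instead $k$ were far larger than the gadget sizes, a budget-$k$ deletion set would have no reason to touch the gadgets at all, and the component-size constraint would not force any combinatorial structure. So the crucial missing idea in your proposal is that the gadget sizes themselves must scale linearly with $k$, and the choice of a source problem with bounded occurrence (so that the "spillover'' from variable gadgets into clause gadgets is a constant per variable) is what makes the counting close.
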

  \begin{proof}
  Because of the equivalence between \COMPFRACS{}, V\hy
  \COMPFRACS{}, \CCOMPFRACS{} and the \VDTSCS{} problem, it is
  sufficient to show that \VDTSCS{} is \NP{}\hy complete for both choices of $D$.
  Because any solution to \VDTSCS{} can be verified in
  polynomial time, it holds that \VDTSCS{} is in \NP{}. Towards
  showing \NP{}\hy hardness of \VDTSCS{} we give a polynomial-time
  reduction from a known variant of the \TSAT{} problem.
  Given a 3-CNF formula $\Phi$
  with variables $x_1,\dotsc,x_n$ and clauses $C_1,\dotsc,C_m$ such
  that every literal occurs in exactly two clauses (this variant of
  \TSAT{} is known to be \NP{}\hy complete~\cite{GareyJohnson79}), we
  construct the instance $\tuple{G,D,k}$ of \VDTSCS{} as
  follows. We set $k=n+2m$ and the graph $G$ will be the disjoint
  union of certain variable and clause gadgets introduced below plus connections
  between these variable and clauses gadgets. Namely, for every variable
  $x_i$, the graph $G$ contains the variable gadget $G(x_i)$ with the following
  vertices and edges:
  \begin{itemize}
  \item two vertices $x_i$ and $\overline{x_i}$,
  \item $k-5$ vertices $c_i^1,\dotsc,c_i^{k-5}$,
  \item for every $j$ with $1 \leq j \leq k-5$ the two edges
    $\{x_i,c_i^j\}$ and $\{\overline{x_i},c_i^j\}$.
  \end{itemize}
  Moreover for every clause $C_j$ of $\Phi$ with literals $l^1_j$,
  $l^2_j$, $l^3_j$, the graph $G$ contains
  a clause gadget $G(C_J)$ with the following vertices and edges:
  \begin{itemize}
  \item three vertices $l_j^1$, $l_j^2$, and $l_j^3$,
  \item $k-3$ vertices $b_j^1,\dotsc,b_j^{k-3}$,
  \item for every $i$ with $1\leq i \leq k-3$ the three edges
    $\{b_j^i,l_j^1\}$, $\{b_j^i,l_j^2\}$, and $\{b_j^i,l_j^3\}$.
  \end{itemize}
  Note that $G(C_j)$ is simple a complete bipartite graph
  with bipartition \mbox{$\{\{l^1_j,l^2_j,l^3_j\}, \SB b_j^i \SM 1 \leq i\leq k-3\SE$}.
  Now~$G$ consists of the disjoint union of $G(x_1), \dotsc, G(x_n),
  G(C_1),\dotsc, G(C_m)$ plus the following vertices and edges, which
  ensure the required connections between the variable and clause gadgets:
  \begin{itemize}
  \item For every clause $C_j$ (for some $j$ with $1 \leq j \leq m$)
    with literals $l_j^1$, $l^2_j$, and $l^3_j$ and every $a \in
    \{1,2,3\}$ we add the vertices $d_j^a$ and $e_j^a$ and the edges
    $\{l_j^a,d_j^a\}$ and $\{l_j^a,e_j^a\}$ to $G$. Moreover, if
    $l_j^a=x_i$ for some $i$ with $1 \leq i \leq n$, we additionally
    add the edges $\{x_i,d_j^a\}$ and $\{x_i,e_j^a\}$ to $G$ and if on
    the other hand $l_j^a=\overline{x_i}$ for some $i$ as above, then
    we add the edges $\{\overline{x_i},d_j^a\}$ and
    $\{\overline{x_i},e_j^a\}$ to $G$.
  \end{itemize}
 	 \begin{figure}
	\centering
	\usetikzlibrary{fit,calc,shapes}

\begin{tikzpicture}[node distance=.6cm]
  \tikzstyle{vertex}=[draw, circle]
  \tikzstyle{vertexNoimp}=[draw, circle, fill, inner sep=2pt]
  \tikzstyle{fitting}=[draw, rounded corners, inner sep=1.35em]
  \tikzstyle{edge}=[thick]

  \begin{scope}[yshift=-.2cm, node distance=1.5cm]
    \node[vertex, label={90:$\ell_j^1$}] (lj1) {};
    \node[vertex, below of=lj1, label={90:$\ell_j^2$}] (lj2) {};
    \node[vertex, below of=lj2, label={90:$\ell_j^3$}] (lj3) {};

    \node[vertexNoimp, left of=lj1, label={180:$q_j^1$},yshift=.5cm] (qj1) {};
    \node[vertexNoimp, below of=qj1, label={180:$q_j^2$}, yshift=.9cm] (qj2) {};
    \node[vertexNoimp, below of=qj2, label={180:$q_j^{k-2}$}, yshift=-1.5cm] (qj3) {};
    \node at ($(qj2)!.44!(qj3)$) {$\vdots$};

    \foreach \i in {1,2,3} {
        \foreach \j in {1,2,3} {
              \draw[edge] (lj\j) to (qj\i);
        }
    }

    \node[fitting, fit=(lj1)(lj2)(lj3)(qj1), xshift=-12pt, label={90:$C_j$}] {};
  \end{scope}

  \begin{scope}[xshift=4cm, yshift=1.7cm]
    \node[vertex, label={90:$x_1$}] (x1) {};
    \node[vertex, below of=x1, label={270:$\overline{x_1}$}, yshift=-.6cm] (non_x1) {};
    \node[vertexNoimp, right of=x1, label={0:$p_1^1$}, yshift=.2cm, xshift=.2cm] (p11) {};
    \node[vertexNoimp, below of=p11, label={0:$p_1^2$}] (p12) {};
    \node[vertexNoimp, below of=p12, label={0:$p_1^{\ell - 3}$}, yshift=-.4cm] (p13) {};
    \node at ($(p12)!.4!(p13)$) {$\vdots$};
    \foreach \i in {1,2,3} {
        \draw[edge] (x1) to (p1\i);
        \draw[edge] (non_x1) to (p1\i);
   }
   \node[fitting, fit=(x1)(non_x1)(p11)(p12)(p13), xshift=12pt] {};
  \end{scope}

  \begin{scope}[xshift=4cm, yshift=-1.5cm]
    \node[vertex, label={90:$x_5$}] (x5) {};
    \node[vertex, below of=x5, label={270:$\overline{x_5}$}, yshift=-.6cm] (non_x5) {};
    \node[vertexNoimp, right of=x5, label={0:$p_5^1$}, yshift=.2cm, xshift=.2cm] (p51) {};
    \node[vertexNoimp, below of=p51, label={0:$p_5^2$}] (p52) {};
    \node[vertexNoimp, below of=p52, label={0:$p_5^{\ell - 3}$}, yshift=-.4cm] (p53) {};
    \node at ($(p52)!.4!(p53)$) {$\vdots$};
    \foreach \i in {1,2,3} {
        \draw[edge] (x5) to (p5\i);
        \draw[edge] (non_x5) to (p5\i);
   }
   \node[fitting, fit=(x5)(non_x5)(p51)(p52)(p53), xshift=12pt] {};
  \end{scope}

  \begin{scope}[xshift=4cm, yshift=-4.7cm]
    \node[vertex, label={90:$x_7$}] (x7) {};
    \node[vertex, below of=x7, label={270:$\overline{x_7}$}, yshift=-.6cm] (non_x7) {};
    \node[vertexNoimp, right of=x7, label={0:$p_7^1$}, yshift=.2cm, xshift=.2cm] (p71) {};
    \node[vertexNoimp, below of=p71, label={0:$p_7^2$}] (p72) {};
    \node[vertexNoimp, below of=p72, label={0:$p_7^{\ell - 3}$}, yshift=-.4cm] (p73) {};
    \node at ($(p72)!.4!(p73)$) {$\vdots$};
    \foreach \i in {1,2,3} {
        \draw[edge] (x7) to (p7\i);
        \draw[edge] (non_x7) to (p7\i);
   }
   \node[fitting, fit=(x7)(non_x7)(p71)(p72)(p73), xshift=12pt] {};
  \end{scope}

  \begin{scope}[xshift=2cm, yshift=1.5cm]
    \node[vertex, label={90:$d_j^1$}] (dj1) {};
    \node[vertex, label={270:$e_j^1$}, below of=dj1] (ej1) {};
  \end{scope}

  \begin{scope}[xshift=2cm, yshift=-1.5cm]
    \node[vertex, label={90:$d_j^2$}] (dj2) {};
    \node[vertex, label={270:$e_j^2$}, below of=dj2] (ej2) {};
  \end{scope}

  \begin{scope}[xshift=2cm, yshift=-4.5cm]
    \node[vertex, label={90:$d_j^3$}] (dj3) {};
    \node[vertex, label={270:$e_j^3$}, below of=dj3] (ej3) {};
  \end{scope}

  \foreach \a in {1,2,3} {
    \draw[edge] (dj\a) -- (lj\a) -- (ej\a);
  }

  \draw[edge] (dj1) -- (x1) -- (ej1);
  \draw[edge] (dj2) -- (non_x5) -- (ej2);
  \draw[edge] (dj3) -- (x7) -- (ej3);
\end{tikzpicture}
	\caption{The interaction between clause and vertex gadgets for the clause \mbox{$C_j=x_1\vee \overline{x_5}\vee x_7$}.}\label{fig:clause_var_interaction}
\end{figure}

  See Figure~\ref{fig:clause_var_interaction} for example of variable and clause gadgets and how they are connected.
  This completes the construction of $G$, which is clearly bipartite
  as for instance witnessed by the bipartition \mbox{$\{U,V(G) \setminus U\}$}, where
  \mbox{$U=\SB x_i,\overline{x_i},l_j^1,l_j^2,l_j^3 \SM 1 \leq i
  \leq n \land 1\leq j \leq m\SE$}. We will show below that there is always
  a solution that is entirely contained in $U$, which implies
  that the hardness result holds for $D \in \{U, V(G)\}$, and hence
  all versions of the fracture backdoor set problem, i.e.,
  $\COMPFRACS{}$, $\VCOMPFRACS{}$, and $\CCOMPFRACS{}$, are \NP{}\hy complete.
  Note that the
  reduction can be computed in polynomial time and it remains to show
  the equivalence between the two instances.

  Towards showing the forward direction, assume that \mbox{$\alpha :
  \{x_1,\dotsc,x_n\} \rightarrow \{0,1\}$} is a satisfying assignment
  for~$\Phi$. Because $\alpha$ satisfies $\Phi$ it follows that for
  every clause $C_j$ with literals $l_j^1$, $l_j^2$, and $l_j^3$
  there is at least one index \mbox{$a(C_j) \in \{1,2,3\}$}
  such that the literal $l_j^{a(C_j)}$ is satisfied by $\alpha$. We
  claim that the set $B$ defined by:
  \begin{itemize}
  \item for every $i$ with $1 \leq i \leq n$, $B$ contains $x_i$ if
    $\alpha(x_i)=1$ and $\overline{x_i}$, otherwise,
  \item for every $j$ with $1 \leq j \leq m$, $B$ contains the
    vertices in $\SB l_j^b \SM b \in \{1,2,3\}\setminus
    \{a(C_j)\}\SE$.
  \end{itemize}
  is a solution for $(G,U,k)$. Because $B$ contains exactly one vertex
  for every variable of $\Phi$ and exactly two vertices for every
  clause of $\Phi$, it holds that $|B|=k=n+2m$, as required. Moreover,
  $B \subseteq U$. It hence only remains to show that every component
  of $G \setminus B$ has size at most $k$. Towards showing this first
  consider a component $C$ of ${G \setminus B}$ that contains at least
  one vertex from a variable gadget $G(x_i)$ for some $i$ with $1\leq
  i \leq n$. Then $G(x_i)\cap B \in \{\{x_i,\overline{x_i}\}\}$ and
  hence $G(x_i) \setminus B$ is connected, which implies that \mbox{$G(x_i)
  \setminus B \subseteq C$}. W.l.o.g. assume that $G(x_i) \cap
  B=\{x_i\}$. Then $\alpha(x_i)=1$ and it follows that all literal
  vertices of clause gadgets that correspond to the literal
  $\overline{x_i}$ are contained in $B$. Since moreover
  $\overline{x_i}$ is contained in exactly two clauses, we obtain that
  $C$ consists of exactly $k-4$ vertices in $G(x_i) \setminus B$ plus
  the four vertices $d_{j_1}^{a_1}$, $e_{j_1}^{a_1}$, $d_{j_2}^{a_2}$
  and $e_{j_2}^{a_2}$ defined by $l_{j_1}^{a_1}=\overline{x_i}$ and
  $l_{j_2}^{a_2}=\overline{x_i}$. Hence in total $C$ contains exactly
  $k$ vertices as required. Now consider a component $C$ that contains
  at least one vertex from a clause gadget $G(C_j)$ for some $j$ with
  $1\leq j \leq m$. Then $|G(C_j) \cap B|=2$ and moreover $B$ contains
  all but exactly one literal vertex say $l_j^a$ for some $a \in
  \{1,2,3\}$ from $G(C_j)$. W.l.o.g. let $x_i$ be the literal of $C_j$
  corresponding to $l_j^a$. Then $\alpha(x_i)=1$ and hence $x_i \in
  B$. It follows that $C$ consists of the exactly $k-2$ vertices in
  $G(C_j)\setminus B$ plus the two vertices $d_j^a$ and $e_j^a$. Hence
  in total $C$ contains exactly $k$ vertices, as required. Because
  every component of $G \setminus B$ that neither contains a vertex
  from a vertex gadget nor from a clause gadget has size exactly one,
  this shows that $B$ is indeed a solution for $(G,U,k)$ and hence
  also for $(G,V(G),k)$.

  Towards showing the reverse direction, let $B$ be a solution for
  $(G,V(G),k)$.
  We first show that w.l.o.g. we can assume that $B
  \subseteq U$. So assume that $B \nsubseteq U$. We distinguish three
  cases: $B$ contains a vertex $d_j^a$ or $e_j^a$ for some $j$ and $a$
  with $1 \leq j \leq m$ and $1 \leq a \leq 3$. Let $u$ and $v$ be the
  two vertices adjacent to $d_j^a$ and $e_j^a$. If $B$ contains both
  $d_j^a$ and $e_j^a$, then it is straightforward to verify that $B
  \setminus \{d_j^a,e_j^a\} \cup \{u,v\}$ is also a solution. So
  assume that $B$ contains only $d_j^a$ (the case that $B$ contains
  only $e_j^a$ is analogous). If $\{u,v\} \subseteq B$, then $B
  \setminus \{d_j^a\}$ is still a solution. Hence assume that
  w.l.o.g. $u \notin B$. But then $(B \setminus \{d_j^a\})\cup \{u\}$
  is a solution. Hence in all cases we could transform $B$ into a
  solution that does not contain a vertex $d_j^a$ or $e_j^a$. Next
  consider the case that $B$ contains some vertex $c_i^j$ for some $i$
  and $j$ with $1 \leq i \leq n$ and $1\leq j \leq k-5$. In this case
  one can use an argumentation very similar to the previous case to
  transform $B$ into a solution not containing such a vertex.
  Hence there only remains the case that $B$ contains some vertex
  $b_j^i$ for some $i$ and $j$ with $1 \leq i \leq k-3$ and $1\leq j
  \leq m$. In this case it is straightforward to verify that removing
  all vertices from $B \cap \{b_j^1,\dotsc,b_j^{k-3}\}$ and replacing
  those with an equal (or less) amount of vertices in $\{l_j^1,l_j^2,l_j^3\}$
  will again give a solution. Hence we can assume that $B \subseteq
  U$.

  We show next that $B$ contains at least one of $x_i$ and
  $\overline{x_i}$ from every variable gadget $G(x_i)$. Suppose not
  and consider the component $C$ of $G \setminus B$ containing $x_i$.
  Because $B \subseteq U$, we obtain that $C$ contains all $k-3$
  vertices in $G(x_i)$ and additionally at least the $8$ vertices adjacent
  to $x_i$ and $\overline{x_i}$. Hence $|C|\geq k-3+8>k$ a
  contradiction to our assumption that $B$ is a solution.

  We show next that $B$ contains at least two of
  $\{l_j^1,l_j^2,l_j^3\}$ from every clause gadget $G(C_j)$. Suppose
  not and consider a component $C$ of $G \setminus B$ containing at
  least one vertex from $G(C_j)$. Because $B \subseteq U$, we obtain
  that $C$ contains all of the at least $k-3+2=k-1$ vertices in
  $G(C_j) \setminus B$ and additionally the at least four vertices
  adjacent to the (at least two) literal vertices in
  $\{l_j^1,l_j^2,l_j^3\}\setminus B$. Hence \mbox{$|C|\geq k-1+4>k$} a
  contradiction to our assumption that $B$ is a solution.

  Hence $B$ contains at least one vertex for every variable of $\Phi$
  and at least two vertices for every clause of $\Phi$. Moreover,
  because $B$ is a solution it holds that $|B|\leq k=n+2m$. Hence
  $|B|=n+2m$ and $B$ contains exactly one vertex from every variable
  gadget and exactly two vertices from every clause gadget. We claim
  that the assignment $\alpha$ with $\alpha(x_i)=1$ if and only if
  $x_i \in B$ is a satisfying assignment for $\Phi$. Suppose not
  and let $C_j$ be a clause of $\Phi$ that is not satisfied by $\Phi$
  and let $l_j^a$ be the (unique) literal vertex of $G(C_j)$ that is
  not in $B$. Consider the component $C$ of $G \setminus B$ that
  contains $l_j^a$ and assume w.l.o.g. that $l_j^a=x_i$ for some $i$
  with $1 \leq i \leq n$. Because $\alpha$ does not satisfy $C_j$, we
  obtain that $x_i \notin B$. Because furthermore $B \subseteq U$ we
  obtain that $C$ contains all of the $k-3+1=k-2$ vertices in $G(C_j)
  \setminus B$ and additionally at least the two vertices
  adjacent to $l_j^a$ as well as the vertex $x_i$. Hence in total $C$
  contains at least $k-2+3>k$ vertices, a contradiction to our
  assumption that $B$ is a solution.
\end{proof}

Even though \COMPFRACS{} is \NP{}\hy complete, here we provide two efficient
algorithms for solving it: we show that the problem is
fixed-parameter tractable parameterized by $k$ and can be
approximated in polynomial time within a factor of $k$. Both of these
algorithms are based on the observation that any backdoor has to
contain at least one vertex from every connected subgraph of the
instance of size $k+1$.
\begin{THE}
\label{thm:det-fpt}
  \COMPFRACS{}, \VCOMPFRACS{}, and \CCOMPFRACS{} can be solved in time
  $\bigO{(k+1)^k|E(G)|}$ and are hence fpt.
\end{THE}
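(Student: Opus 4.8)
The plan is to phrase each problem on the incidence graph $G=G_\III$ and solve it by a bounded search‑tree (branching) algorithm. Recall that a backdoor to $k$‑compactness of size at most $k$ is exactly a set $B\subseteq V(G)$ with $|B|\le k$ such that every connected component of $G\setminus B$ has at most $k$ vertices, and for \VCOMPFRACS{} (resp.\ \CCOMPFRACS{}) we additionally require $B\subseteq\var(\III)$ (resp.\ $B\subseteq\cF(\III)$); moreover $\fr(\III)\le k$ iff such a $B$ exists, and similarly for $\fr^V,\fr^C$. So fix the ground set $D\in\{V(G),\var(\III),\cF(\III)\}$ according to the variant. The key structural observation, already announced above, is: if $B\subseteq D$ is such a backdoor and $S$ is any connected subgraph of $G$ on exactly $k+1$ vertices, then $B\cap S\neq\emptyset$ --- otherwise $S$ would lie inside a single component of $G\setminus B$ of size $\ge k+1$ --- and since $B\subseteq D$, in fact $B$ must contain one of the at most $k+1$ vertices of $S\cap D$.

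This yields the following recursive procedure $\textsf{solve}(H,j)$, called initially as $\textsf{solve}(G,k)$, where $H$ is $G$ with the vertices picked so far removed and $j$ is the remaining deletion budget. If every component of $H$ has at most $k$ vertices, succeed and return the set of removed vertices. Otherwise, if $j=0$, fail on this branch. Otherwise some component of $H$ has more than $k$ vertices; extract from it a connected subgraph $S$ on exactly $k+1$ vertices. If $S\cap D=\emptyset$, fail on this branch; else, for each $v\in S\cap D$, call $\textsf{solve}(H-v,j-1)$, and if some call succeeds, return its output together with the fact that $v$ was removed.

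Correctness is a routine induction. Soundness: the procedure succeeds only when every component of the current graph has $\le k$ vertices, and it only ever removes vertices of $S\cap D\subseteq D$, so the returned set is a backdoor of size $\le k$ lying in $D$. Completeness: we show by induction on $s\ge 0$ that if the current recursive instance admits a solution of size $s$, then the call succeeds. If all components of the current graph $H$ have $\le k$ vertices the call succeeds immediately; otherwise $\emptyset$ is not a solution, so $s\ge 1$ (hence also $j\ge 1$), the algorithm finds a connected $S$ on $k+1$ vertices, and for any solution $B^\star$ of size $s$ the observation gives $B^\star\cap(S\cap D)\neq\emptyset$; picking $v$ in this intersection, $B^\star\setminus\{v\}$ is a solution of size $s-1$ for the instance handled by $\textsf{solve}(H-v,j-1)$, which therefore succeeds by the induction hypothesis, so the current call returns. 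For the running time, the recursion tree has depth $\le k$ (each call lowers $j$ by one) and branching degree $\le|S\cap D|\le k+1$, hence at most $(k+1)^k$ leaves and $\bigO{(k+1)^k}$ nodes overall; at each node we compute the connected components of $H$ and, when needed, extract a $(k+1)$‑vertex connected subgraph, both in time $\bigO{|V(G)|+|E(G)|}=\bigO{|E(G)|}$ (discard the isolated vertices of $G$ once at the outset --- they form size‑$1$ components and are irrelevant). The total is $\bigO{(k+1)^k|E(G)|}$, establishing fixed‑parameter tractability.

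The only step worth spelling out --- and the one I would expect to require the most (still mild) care --- is the linear‑time extraction of a connected $(k+1)$‑vertex subgraph from a component with more than $k$ vertices. Take a spanning tree of that component, root it arbitrarily, and pick a vertex $v$ whose subtree has $\ge k+1$ vertices but each of whose children's subtrees has $\le k$ vertices. Starting from $\{v\}$, add whole child subtrees of $v$ one at a time while the running total stays $\le k+1$; if the next child subtree would overshoot, descend into it (keeping its root, which is adjacent to the part already built) and recurse there to collect exactly the missing number of vertices. The outcome is connected, has exactly $k+1$ vertices, and the extraction is linear. Beyond this subroutine I do not anticipate any genuine obstacle: the algorithm is a textbook branching whose correctness rests solely on the ``every $(k+1)$‑vertex connected subgraph must be hit'' observation and whose efficiency rests on this linear‑time subtree‑extraction step.
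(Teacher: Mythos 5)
Your proposal is correct and follows essentially the same depth-bounded branching approach as the paper, built on the same key observation that any connected $(k{+}1)$-vertex subgraph must contain a backdoor vertex. The one notable difference is that your spanning-tree subroutine for extracting a connected $(k{+}1)$-vertex subgraph is more elaborate than needed: the paper simply runs a DFS from any vertex of the large component and halts once $k{+}1$ vertices have been visited, since the visited set of a (depth-first or breadth-first) search is connected at every step.
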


  \begin{proof}
  Because of the equivalence of the problems \COMPFRACS{}, V\hy
  \COMPFRACS{}, and \CCOMPFRACS{} with the \VDTSCS{} problem, it is
  sufficient to show the result for \VDTSCS{}.

  We will show the theorem by providing a depth-bounded search tree
  algorithm for any instance $I=\tuple{G,D,k}$ of \VDTSCS{}, which is based on the following
  observations.
  \begin{itemize}
  \item[O1] If $G$ is not connected then a solution for $I$ can be
    obtained as the disjoint union of solutions for every component of $G$.
  \item[O2] If $G$ is connected and $C$ is any set of $k+1$ vertices
    of $G$ such that $G[C]$ is connected, then any solution for $I$
    has to contain at least one vertex from $C$.
  \end{itemize}
  These observations lead directly to the following recursive algorithm
  that given an instance $I=\tuple{G,D,k}$ of \VDTSCS{} either
  determines that the instance is a \no\hy instance or outputs a
  solution $B \subseteq D$ of minimal size for $I$. The algorithm also
  remembers the maximum size of any component in a global constant
  $c$, which is set to $k$ for the whole duration of the algorithm.
  The algorithm first checks
  whether $G$ is connected. If $G$ is not connected the algorithm
  calls itself recursively on the instance $(C,D\cap C, k)$ for each component
  of $G$. If one of the recursive calls returns \no{} or if the size of
  the union of the solutions returned for each component exceeds $k$,
  the algorithm returns that $I$ is a \no\hy instance. Otherwise the
  algorithm returns the union of the solutions returned for each
  component of $G$.

  If $G$ is connected and $|V(G)|\leq c$, the algorithm returns the
  empty set as a solution. Otherwise, i.e. if $G$ is connected but
  $|V(G)|>c$ the algorithm first computes a set $C$ of $c+1$ vertices
  of $G$ such that $G[C]$ is connected. This can for instance be achieved
  by a depth-first search that starts at any vertex of $G$ and stops
  as soon as $c+1$ vertices have been visited. If $C \cap D=\emptyset$
  then the algorithm returns \no{}. Otherwise the algorithm
  branches on the vertices in $C \cap D$, i.e., for every $v \in C
  \cap D$ the
  algorithm recursively computes a solution for the instance $(G
  \setminus \{v\},k-1)$. It then returns the solution of minimum size
  returned by any of those recursive calls, or \no\hy if none of those
  calls return a solution. This completes the description of the
  algorithm. The correctness of the algorithm follows immediately from
  the above observations. Moreover the running time of the algorithm
  is easily seen to be dominated by the maximum time required for the
  case that at each step of the algorithm $G$ is connected.
  In this case the running time can be obtained as the
  product of the number of branching steps times the time spent on
  each of those. Because at each recursive call the parameter $k$ is
  decreased by at least one and the number of branching choices is at
  most $c+1$, we obtain that there are at most $(c+1)^k=(k+1)^k$ branching
  steps. Furthermore, the time at each branching step is dominated by
  the time required to check whether $G$ is connected, which is linear
  in the number of edges of $G$. Putting everything together,
  we obtain $\bigO{(k+1)^k|E(G)|}$ as the total time required by the
  algorithm, which completes the proof of the lemma.
\end{proof}

We note that the depth-first search algorithm in the above proof can be
easily transformed into a polynomial-time approximation algorithm for
\COMPFRACS{} and its variants that exhibits an approximation ratio of
$k+1$. In particular, instead of branching on the vertices of
a connected subgraph $C$ of $G$ with $k+1$ vertices, this algorithm
would simply add all the vertices of $C$ into the current
solution. This way we obtain:

\begin{THE}
  \COMPFRACS{}, \VCOMPFRACS{}, and \CCOMPFRACS{} can be approximated
  in polynomial time within a factor of $k+1$.
\end{THE}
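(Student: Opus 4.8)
The plan is to reuse the depth-bounded search-tree algorithm behind Theorem~\ref{thm:det-fpt}, but to replace its branching step by a greedy deletion step. By the equivalence with \VDTSCS{} recorded above, it suffices to handle an instance $\tuple{G,D,k}$. I would maintain a partial solution $B\subseteq D$, initially empty, together with the current graph $G'=G\setminus B$, and repeat the following while $G'$ has a connected component on more than $k$ vertices: pick such a component, run a depth-first search inside it that halts after $k+1$ vertices have been visited to obtain a set $C$ of exactly $k+1$ vertices with $G'[C]$ connected, and then --- if $C\cap D=\emptyset$ --- stop and report that no feasible backdoor exists, and otherwise add all of $C\cap D$ to $B$ (this equals all of $C$ when $D=V(G)$) and recompute $G'$. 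The output is $B$ once every component of $G'$ has at most $k$ vertices.

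The routine parts are termination, running time, and the correctness of the ``no'' report. Every iteration deletes at least one vertex from a finite graph, hence there are at most $|V(G)|$ iterations, each dominated by an $\bigO{|E(G)|}$ connectivity/DFS computation, and when the loop stops $B$ is a feasible solution by construction. The ``no'' report is justified by observation~(O2) from the proof of Theorem~\ref{thm:det-fpt}: any feasible solution must contain a vertex of the connected $(k+1)$-vertex set $C$, so, since feasible solutions are subsets of $D$, an empty $C\cap D$ certifies infeasibility. (Because $G$ is bipartite and $|C|\ge 2$, the set $C$ always meets both sides, so this case in fact never occurs; I would keep it only to make the description self-contained.)

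The substantive part is the approximation guarantee. Let $C_1,\dots,C_t$ be the sets chosen in the successive iterations, put $G_1=G$ and $G_{i+1}=G_i\setminus(C_i\cap D)$, so the deleted sets $C_i\cap D$ are pairwise disjoint and each $C_i$ is a connected induced subgraph of $G_i$ on $k+1$ vertices. Fix a minimum feasible solution $B^*$. The key observation is that $B^*\cap V(G_i)$ is still a feasible solution for $G_i$: indeed $G_i\setminus(B^*\cap V(G_i))$ is an induced subgraph of $G\setminus B^*$, so all of its components have at most $k$ vertices. Applying observation~(O2) to the component of $G_i$ from which $C_i$ was taken then yields a vertex $v_i\in C_i\cap(B^*\cap V(G_i))\subseteq C_i\cap D$. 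Since $v_i$ is deleted when passing from $G_i$ to $G_{i+1}$, it belongs to no later $G_j$, so $v_1,\dots,v_t$ are pairwise distinct and $t\le|B^*|$. Each iteration adds at most $|C_i|=k+1$ vertices to $B$, so $|B|\le(k+1)\,t\le(k+1)\,|B^*|$, which is the claimed factor.

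The step I expect to be the main obstacle is precisely this charging argument, i.e.\ injecting the iterations into an optimal solution. It works only because each round deletes the whole set $C_i\cap D$ (not merely one vertex of $C_i$): full deletion is what simultaneously forces progress, keeps the restriction of $B^*$ feasible on the shrunken graph, and makes the witnesses $v_1,\dots,v_t$ distinct. Everything else --- the reduction to \VDTSCS{}, the DFS for $C$, termination, and the infeasibility report --- is lifted essentially verbatim from the proof of Theorem~\ref{thm:det-fpt}.
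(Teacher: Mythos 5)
Your proposal is correct and takes essentially the same approach as the paper: the paper only sketches the approximation algorithm in one sentence (replace the branching of Theorem~\ref{thm:det-fpt} by greedily adding all of the connected $(k+1)$-vertex set $C$ to the solution), and you have filled in exactly the charging argument against an optimal solution $B^*$ that the sketch leaves implicit, together with the minor but necessary refinement that one should add $C\cap D$ rather than all of $C$ when $D\neq V(G)$.
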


\section{The Case of Bounded Coefficients}
The goal of this section is to obtain the algorithmic results presented on the first row of Table~\ref{fig:results}. Recall that in this case we will be parameterizing also by $c_{\mathbf{A}}$, which is the maximum absolute coefficient occurring in $\mathbf{A}$.
Before we proceed to the results themselves, we first need to
introduce a natural notion of ``equivalence'' among the components of an ILP instance.

Let $Z$ be a backdoor to $\ell$\hy compactness for an ILP instance~$\III$.
We define the equivalence relation $\sim$ on the components of $\III\setminus Z$ as follows:
two components $C_1$ and $C_2$ are equivalent iff there exists a
bijection $\gamma$ between $\var(C_1)$ and $\var(C_2)$ such that
the ILP instance obtained from $\III$ after renaming the variables in
$\var(C_1)$ and $\var(C_2)$ according to $\gamma$ and $\gamma^{-1}$,
respectively, is equal to $\III$.
We say that components $C_1$  and $C_2$ have the same \emph{type} if
$C_1 \sim C_2$.
\begin{LEM}
\label{lem:4BlockNumberOfEqiuvalenceClasses}
  Let\/ $\III$ be an ILP instance and ${k=\fr(\III)}$.
  For any backdoor witnessing $\fr(\III)$,
  $\sim$ has at most
  ${\bigl(2\coA{}(\III)+1\bigr)}^{2k^2}$ equivalence classes.
  Moreover, one can test whether two components have the same type in
  time $\bigoh(k!k^2)$.
\end{LEM}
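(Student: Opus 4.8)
The plan is to show that the type of a component $C$ of $\III\setminus Z$ is determined by a bounded-size integer matrix --- the restriction of $\mathbf{A}$ to the rows and columns that ``touch'' $C$ --- and then simply to count how many such matrices there can be. So fix a backdoor $Z$ witnessing $\fr(\III)=k$, so that $|Z|\le k$ and every component $C$ of $\III\setminus Z$ has $|\var(C)|+|\cF(C)|\le k$, and let $C$ be such a component. First I would record a locality property: if a variable $v\in\var(C)$ occurs with a nonzero coefficient in a constraint $f\notin Z$, then --- since $v\notin Z$ as well --- $v$ still occurs in $f$ after deleting $Z$, so $v$ and $f$ lie in the same component and hence $f\in\cF(C)$; symmetrically, every $f\in\cF(C)$ involves in $\III$ only variables of $\var(C)\cup Z$. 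Consequently the entries of $\mathbf{A}$ relevant to $C$ are exactly those lying in a row of $\cF(C)\cup(Z\cap\cF)$ and a column of $\var(C)\cup(Z\cap\var(\III))$ with at least one of the two inside $C$; writing $p=|\var(C)|$ and $q=|\cF(C)|$, there are at most $q\bigl(p+|Z\cap\var(\III)|\bigr)+|Z\cap\cF|\,p\le pq+k(p+q)\le \frac{k^2}{4}+k^2<2k^2$ of them.

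Then I would argue that two components $C_1,C_2$ have the same type precisely when their relevant submatrices coincide after some relabelling of the (at most $k$) variables and (at most $k$) constraints of $C_1$ by those of $C_2$, leaving every element of $Z$ unchanged; this is exactly a reformulation of when the renaming described by the bijection $\gamma$ (and the correspondence between the constraints of $C_1$ and $C_2$ that it forces) turns $\III$ into itself. Consequently the number of $\sim$-classes is at most the number of ways of filling the at most $2k^2$ relevant positions with integers in $\{-\coA(\III),\dots,\coA(\III)\}$, namely $\bigl(2\coA(\III)+1\bigr)^{2k^2}$.

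For the running-time part I would test $C_1\sim C_2$ by brute force: reject immediately unless $|\var(C_1)|=|\var(C_2)|$ and $|\cF(C_1)|=|\cF(C_2)|$, and otherwise iterate over the at most $|\var(C_1)|!\cdot|\cF(C_1)|!\le k!$ pairs consisting of a bijection on variables and a bijection on constraints, checking in $\bigO{k^2}$ time for each whether all $\bigO{k^2}$ relevant coefficients agree. This totals $\bigO{k!\,k^2}$.

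I expect the only real difficulty to be careful bookkeeping of two kinds. On the one hand, ``type'' must encode the coefficient structure of $\mathbf{A}$ alone: the right-hand side, the variable bounds and the objective coefficients are not bounded in terms of $k$ and $\coA(\III)$ and --- exactly as in the classical $N$-fold setting --- may legitimately differ between components of the same type, so they cannot enter the notion. On the other hand, getting the exponent down to $2k^2$ rather than a crude $3k^2$ relies on $k$-compactness bounding the \emph{total} number of variables plus constraints of a component by $k$, not each of them separately; with that observation in hand the claimed bound holds comfortably.
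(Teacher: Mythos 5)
Your proof is correct and follows essentially the same route as the paper: isolate the at most $2k^2$ entries of $\mathbf{A}$ that can be nonzero and touch a given component, count the ${(2\coA(\III)+1)}^{2k^2}$ possible value patterns, and test $\sim$ by brute-force enumeration of the at most $|\var(C_1)|!\cdot|\cF(C_1)|!\le k!$ pairs of bijections. Your care in spelling out the locality observation and in noting that the type must depend on $\mathbf{A}$ alone (not on $\veb$, $\ver$, $\veu$, $\eval$, which need not be bounded in $k$ and $\coA(\III)$) is a sound reading of the paper's intent, even though the paper's own proof leaves both points implicit.
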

\begin{proof}
  Let $Z$ be the backdoor witnessing $\fr(\III)$ and fix a component $C$ of $\III\setminus Z$.
  First observe that there are only 3 submatrices of the constraint matrix of $\III$ that can contain nonzero coefficients and containing an element of $C$; we refer to Fig.~\ref{fig:ILPtransOverview}, where we denote these matrices $\mathbf{Q}_{C}, \mathbf{Q}_C^V$, and $\mathbf{Q}_C^C$.
  We will now bound the number of these possibly nonzero coefficients.
  In order to do this we denote by ${g_v = |\var(Z)|}, {g_c = |Z| - g_v}, {c_v = |\var(C)|}, {c_c = |C| - c_v}$, and ${c = \max{c_c, c_v}}$.
  Observe that $c\le k$.
  Now the number of possibly nonzero coefficients is bounded by $g_vc_c + g_cc_v + c_cc_v \le (g_v + g_c)c + c^2 \le 2k^2$.
  We finish the proof of the first part by observing that the number of possible coefficients is bounded by $2\coA{}(\III) + 1$.

  Observe that two components $C_1$ and $C_2$ have the same type if their number of constraints and variables is the same and there exist a permutation of variables of $C_1$ and a permutation of constraints of $C_1$ such that the three submatrices of $\III$ containing nonzero elements are exactly the same.
  Again as $|C| \le k$ one can check all pairs of permutations in time $k!$ and for each pair we are checking $\bigoh(k^2)$ entries.
\end{proof}

We now proceed to the main tool used for our algorithms.

\begin{THE}
\label{thm:nonuniform4blockFPT}
Let $\overline{\III}$ be an ILP instance with matrix $\overline{\mathbf{A}}$, $Z$ be a backdoor set witnessing $\fr(\overline{\III})$, and let $n$ be the number of components of\/ $\overline{\III}\setminus Z$.
There is an algorithm which runs in time $\bigoh(n^2(\fr(\III)+1)! + |\III|)$ and computes a $(r+u)\times (s+t)$ matrix ${\mathbf{A} = \begin{pmatrix}
\mathbf{A}_1 & \mathbf{A}_2 \\
\mathbf{A}_3 & \mathbf{A}_4
\end{pmatrix}}$, a positive integer $N\le n$, and a $4$-block $N$-fold instance $\III=(\mathbf{A}^{(N)},\vex,\veb, \ver, \veu, \eval)$ such that:
\begin{enumerate}
\item[(P1)] any solution for $\III$ can be transformed (in polynomial time) into a solution for
  $\overline{\III}$ (and vice versa), and
\item[(P2)] $\max\{r,s\}\le\fr(\overline{\III})$ and $\max\{t,u\}\le f\bigl(c_{\mathbf{\overline{A}}},\fr(\overline{\III})\bigr)$ for some computable function~$f$.
\end{enumerate}
\end{THE}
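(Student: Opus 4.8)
The plan is to build the $4$-block $N$-fold instance $\III$ from $\overline{\III}$ by a two-step process: first group the components of $\overline{\III}\setminus Z$ into types using Lemma~\ref{lem:4BlockNumberOfEqiuvalenceClasses}, and then ``pad'' all components to a single common shape so that they literally become $N$ copies of the same block $\mathbf{A}_4$, with the global variables/constraints of $Z$ forming the blocks $\mathbf{A}_1,\mathbf{A}_2,\mathbf{A}_3$. Concretely, I would first compute $Z$ (given) and the components $C_1,\dots,C_n$ of $\overline{\III}\setminus Z$; by Lemma~\ref{lem:4BlockNumberOfEqiuvalenceClasses} these fall into $N'\le (2c_{\mathbf{\overline A}}+1)^{2k^2}$ types, computable in time $\bigoh(n^2(k+1)!)$ by pairwise testing (where $k=\fr(\overline\III)$). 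For each type I fix a canonical representative block; let $s_0\le k$ be the number of global variables ($s_0=|\var(Z)|$) and $u_0\le k$ the number of global constraints ($u_0=|Z|-s_0$), and let $t_0\le k$, $r_0\le k$ be the maximum number of local variables, resp.\ local constraints, over all components.

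**Next I would homogenize the components.** For each component $C_i$ I introduce $t_0-|\var(C_i)|$ dummy local variables (with lower bound $=$ upper bound $=0$, hence forced to $0$) and $r_0-(|C_i|-|\var(C_i)|)$ dummy local constraints of the form $0=0$; this makes every (padded) component have exactly $t_0$ local variables and $r_0$ local constraints, with identical coefficient structure within each type. The blocks are then read off directly: $\mathbf{A}_1$ is the $u_0\times s_0$ submatrix of $\overline{\mathbf A}$ on the global constraints $\times$ global variables; $\mathbf{A}_3$ is the $r_0\times s_0$ matrix whose rows are the coefficients, in the padded component, of the global variables in that component's local constraints --- but this is where the obstruction arises: in a genuine $4$-block $N$-fold, $\mathbf{A}_3$ must be the same for all $N$ copies, yet different components interact with different global variables and with different coefficients. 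The standard fix (and the one I would use) is to take $N=n$, set the block width $s+t$ with $s=s_0$, $t=t_0$, $r=r_0$, $u=u_0+?$, and handle the component-specific coupling through $\mathbf{A}_1$ and $\mathbf{A}_2$ rather than $\mathbf{A}_3$: make $\mathbf{A}_3$ and $\mathbf{A}_4$ uniform by absorbing each component's ``type identity'' into extra rows, so that the actual component-to-global coupling that varies is pushed into the top band ($\mathbf{A}_1,\mathbf{A}_2$ repeated), which is allowed to differ per copy only insofar as $\mathbf{A}_2$ is repeated --- so in fact the cleanest route is: since there are only $N'$ types, create $N'$ ``super-blocks'' where block $j$ has multiplicity $n_j$ (the number of components of type $j$), and observe that a $4$-block $N$-fold with block-diagonal part built from $N'$ distinct matrices $\mathbf{A}_4^{(1)},\dots,\mathbf{A}_4^{(N')}$ can be re-encoded as a genuine uniform $4$-block $N$-fold by taking $\mathbf{A}_4=\mathrm{diag}(\mathbf{A}_4^{(1)},\dots,\mathbf{A}_4^{(N')})$ of size $(N'r_0)\times(N't_0)$ and $N=\max_j n_j$, using forced-zero dummy variables to ``switch off'' the unused sub-blocks in each copy. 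This keeps $\max\{t,u\}\le N'\cdot\bigoh(k)= f(c_{\mathbf{\overline A}},k)$ while $\max\{r,s\}=\max\{r_0,s_0\}\le k$, giving (P2).

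**The main obstacle I expect** is exactly this tension in getting a \emph{uniform} $4$-block structure out of non-uniform components, and the bookkeeping to show (P1): that padding with forced-zero variables and $0=0$ constraints, plus the diagonal-of-types encoding, neither creates nor destroys solutions, and that the variable/constraint bijection is polynomial-time computable in both directions. I would verify (P1) by exhibiting the explicit correspondence --- a feasible assignment of $\overline\III$ extends uniquely to $\III$ by setting all dummy variables to $0$ and the ``type-switch'' variables according to which sub-block is live, and conversely restricting a feasible assignment of $\III$ forgets the dummies; feasibility and objective value are preserved because dummy variables contribute $0$ to every constraint and to $\eval$, and the $0=0$ constraints are vacuous. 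For the running time, grouping into types costs $\bigoh(n^2(k+1)!)$ by Lemma~\ref{lem:4BlockNumberOfEqiuvalenceClasses}, reading off and padding the blocks is linear in $|\III|$, and assembling $\mathbf{A}^{(N)}$ symbolically (we output $\mathbf A$ and $N$, not the expanded matrix) is $\bigoh(N'^2 k^2)=\bigoh_{c_{\mathbf{\overline A}},k}(1)$, so the total is $\bigoh(n^2(k+1)!+|\III|)$ as claimed. The one subtlety worth flagging in the write-up is that $N$ in the output is bounded by $n$ (indeed by $\max_j n_j\le n$), matching the statement.
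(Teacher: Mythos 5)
Your high-level plan---group the components of $\overline{\III}\setminus Z$ into types via Lemma~\ref{lem:4BlockNumberOfEqiuvalenceClasses}, place one representative block per type on the diagonal of $\mathbf{A}_4$, stack the types' couplings into $\mathbf{A}_2$ and $\mathbf{A}_3$, and take $N=\max_j n_j$---matches the paper's construction, and your running-time and dimension accounting is essentially right (modulo a small notational slip: the theorem's $r$ counts the global constraints, not your per-component local-constraint bound $r_0$). However, your handling of types with multiplicity $n_j<N$ has a genuine gap.

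You propose to ``switch off'' the unused type-$j$ sub-block in copy $i>n_j$ by forcing its local variables to $0$. In a $4$-block $N$-fold, $\mathbf{A}_3$ is \emph{shared} by every copy, so copy $i$'s local constraints for type $j$ read
\[
\mathbf{A}_3^{(j)}\vex^{0}+\mathbf{A}_4^{(j)}\vex^{i,j}=\veb^{i,j},
\]
where $\mathbf{A}_3^{(j)}=\mathbf{Q}_{T_j}^V$ is the coupling of global variables into type~$j$'s local constraints and is generally nonzero. Forcing $\vex^{i,j}=0$ turns these rows into $\mathbf{A}_3^{(j)}\vex^{0}=\veb^{i,j}$, a spurious \emph{constraint on the global variables alone} that did not exist in $\overline{\III}$. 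You cannot neutralize it by choosing $\veb^{i,j}$, since $\vex^{0}$ is a variable whose value is not known in advance and the constraint must hold for every feasible assignment. Thus a feasible $\overline{\III}$ may map to an infeasible $\III$, breaking~(P1). (Your intermediate suggestion of ``pushing the coupling into the top band'' does not rescue this: $\mathbf{A}_2$ is also repeated identically across copies, so it cannot carry copy-dependent coupling.)

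The paper avoids this with a doubling trick rather than forced zeros: for every underrepresented type $T$ it introduces an auxiliary copy $\hat{x}_C$ of each local variable, so the local block becomes $[\mathbf{Q}_C\mid\mathbf{Q}_C]$, while the auxiliary columns are zero in $\mathbf{A}_2$ and in $\eval$. In the $n_j$ genuine components the auxiliaries are pinned to $0$; in the $N-n_j$ dummy copies the \emph{originals} are pinned to $0$ while the auxiliaries retain the original bounds, so the local constraint $\mathbf{Q}_C^V\vex^{0}+\mathbf{Q}_C\hat{x}_C=b_C$ can still be satisfied (e.g.\ by reusing the assignment from a real component of type $T$) without affecting the global constraints or the objective. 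This doubling is exactly the idea your proposal is missing; without it the ``switch off'' step does not yield an equivalent instance.
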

\begin{proof}
Let $\mathcal{C}$ be the set of connected components of $\overline{\III}\setminus Z$.
We define a triple of matrices $(\mathbf{Q}^V_C, \mathbf{Q}^C_C, \mathbf{Q}_C)$ for a component $C$.
Please refer to Figure~\ref{fig:ILPtransOverview}.
\begin{itemize}
  \item The matrix $\mathbf{Q}_C^V$ is the part of constraints in $C$
    dealing with variables in $Z$, that is, $\overline{\mathbf{A}}_{\cF(C),\var(Z)}$,
  \item the matrix $\mathbf{Q}_C^C$ is the part of the constraints in $Z$ dealing with~$\var(C)$, that is, $\overline{\mathbf{A}}_{\cF(Z),{\var(C)}}$, and
  \item the matrix $\mathbf{Q}_C$ is the part of constraints in $C$ dealing with~$\var(C)$, that is, $\overline{\mathbf{A}}_{\cF(C),\var(C)}$.
\end{itemize}

\begin{figure}[t]
  \begin{tikzpicture}[scale=.8]
  \newcommand\YD{3}
  \newcommand\YU{8}
  \newcommand\XD{0}
  \newcommand\XU{6}
  \newcommand\dX{1}
  \newcommand\dY{1}

  \coordinate (S) at (\XD, \YU);
  \coordinate (Rmost) at (\XU, \YU);
  \coordinate (Dmost) at (\XD, \YD);
  \coordinate (podS) at (\XD, \YU-\dY);
  \coordinate (podR) at (\XU, \YU-\dY);
  \coordinate (zaS) at (\XD + \dX, \YU);
  \coordinate (zaD) at (\XD + \dX, \YD);

  \coordinate (Cld) at (\XU,5);
  \coordinate (Clu) at (\XU,6);
  \coordinate (Crd) at (\XD - \dX,5);
  \coordinate (Cru) at (\XD - \dX,6);

  \coordinate (Cul) at (3,\YU + \dY);
  \coordinate (Cur) at (4,\YU + \dY);
  \coordinate (Cdl) at (3,\YD);
  \coordinate (Cdr) at (4,\YD);

  \draw (S) -- (Dmost);
  \draw (S) -- (Rmost);
  \draw (podS) -- (podR);
  \draw (zaS) -- (zaD);

  \draw[dashed] (Cul) -- (Cdl);
  \draw[dashed] (Cur) -- (Cdr);
  \draw[dashed] (Cld) -- (Crd);
  \draw[dashed] (Clu) -- (Cru);

  \node at (3.5,5.5) {$Q_C$};
  \node at (.5,5.5) {$Q^V_C$};
  \node at (3.5,7.5) {$Q^C_C$};

  \node at (3.5,6.5) {$0$};
  \node at (3.5,4) {$0$};
  \node at (5,5.5) {$0$};
  \node at (2,5.5) {$0$};

  \node at (3.5,8.5) {$x_C$};
  \node at (-1,5.5) {con($C$)};
\end{tikzpicture}
  \caption{A situation for a component $C$.}
  \label{fig:ILPtransOverview}
\end{figure}

Observe that this totally decomposes all constraints and variables contained in $C$ as all coefficient for other variables are 0 and variables of $C$ cannot appear in other components.
For a triple of matrices $T = (\mathbf{Q}^V, \mathbf{Q}^C, \mathbf{Q})$ a~component~$C$ has type $T$ if ${\mathbf{Q}^V = \mathbf{Q}_C^V}, {\mathbf{Q}_C = \mathbf{Q}_C^C}$, and ${\mathbf{Q} = \mathbf{Q}_C}$ holds.
The~set of all possible types is the set $$\mathcal{T} = \SB T = (\mathbf{Q}^V, \mathbf{Q}^C, \mathbf{Q}) \SM \exists C\in\mathcal{C} \textrm{ with type } T \SE.$$
The~multiplicity $\mathrm{mult}(T)$ of type $T\in\mathcal{T}$ is the~number of components in $\mathcal{C}$ having type $T$.
We set ${N = \max_{T\in\mathcal{T}}\mathrm{mult}(T)}$.

The~idea of the~proof is to build the matrix $\mathbf{A}_1$ from $Z$ and matrices $\mathbf{A}_2, \mathbf{A}_3, \mathbf{A}_4$ as representatives of the types in such a~way that the resulting $N$-fold 4 block ILP is equivalent to the given ILP instance~$\overline{\III}$.

The matrix $\mathbf{A}_1$ is simply the submatrix of $Z$ that is the part of global constraints of $\overline{\mathbf{A}}$ containing $\var(Z)$ only.
\begin{CLM}\label{clm:typesOfSameMultiplicity}
There is an ILP instance $\hat{\III}$ that is equivalent to ILP instance $\bar{\III}$ with $\mathrm{mult}_{\hat{\III}}(T) = N$ for all $T\in\mathcal{T}_{\hat{\III}}$. Moreover, $c_{\hat{\III}} = c_{\bar{\III}}$ and the sizes of the matrices $\mathbf{Q}$ can only double.
\end{CLM}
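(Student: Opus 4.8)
The plan is to prove Claim~\ref{clm:typesOfSameMultiplicity} by a padding argument: we will artificially increase the multiplicity of each underrepresented type up to $N$ by adding "dummy" copies of a representative component, and then introduce auxiliary variable-copies so that these dummy components do not distort the global constraints or the optimization function while still fitting into the uniform $N$-fold $4$-block layout.

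First I would fix a type $T \in \mathcal{T}$ with $\mathrm{mult}(T) < N$ and pick a representative component $C$ of type $T$. I would add $N - \mathrm{mult}(T)$ fresh disjoint copies of $C$ to the instance; call the set of these new copies $S$ and the set of original components of type $T$ already present $O$. The key observation is that we must prevent the components in $S$ from contributing to the global constraints $\mathbf{A}_1\vex = \veb$ (the rows in $Z$) and to $\eval$. The first step toward this is to set the lower and upper bounds of every variable in every component in $S$ to $0$; this forces those variables to $0$ in any feasible assignment, so the columns of $\mathbf{Q}^C_{C'}$ for $C' \in S$ contribute nothing to the global rows, and likewise $\eval$ is unaffected. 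However, this alone is not enough, because we still need the local constraints $\mathbf{Q}_{C'}\vex[C'] = \veb[\cF(C')]$ of the copies in $S$ to be satisfiable, and forcing all their variables to $0$ will in general violate them.

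The fix — and here is the main obstacle, or at least the step requiring the most care — is to introduce, for \emph{every} component in $O \cup S$, one auxiliary copy of each of its variables, with the same coefficients in $\overline{\mathbf{A}}$ (so the local constraint matrix of such a component becomes $[\mathbf{Q}_{C}\mid \mathbf{Q}_{C}]$ and the global-constraint row block for it becomes $[\mathbf{Q}^C_C\mid 0]$, i.e.\ the auxiliary copies are invisible globally and in $\eval$). For a component in $S$ we leave the auxiliary copies free (with the same bounds as the original variables of $C$), so that they can carry a genuine feasible assignment of $C$ and satisfy the local constraints, while the original variables remain pinned to $0$; for a component in $O$ we instead pin the auxiliary copies to $0$, so that the original variables behave exactly as before. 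Doing this uniformly for every type $T$ (padding each up to multiplicity $N$) yields an instance $\hat{\III}$ in which every type has multiplicity exactly $N$. I would then verify (P1)-style equivalence: a solution of $\overline{\III}$ extends to $\hat{\III}$ by assigning each dummy copy in $S$ (via its auxiliary variables) the assignment that $C$ received, and any solution of $\hat{\III}$ restricts to a solution of $\overline{\III}$ by discarding the dummy components, since they contributed nothing to the global rows or $\eval$. Finally, $c_{\hat{\III}} = c_{\bar{\III}}$ because the only new nonzero entries are duplicates of entries of $\overline{\mathbf{A}}$, and the matrices $\mathbf{Q}$ at most double in the number of columns (we added at most one auxiliary variable per original variable), which is exactly the stated bound.

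The subtle point I expect to need the most attention is bookkeeping the bounds so that in every component the "active" variables (originals in $O$, auxiliary copies in $S$) are precisely the ones that should carry weight, while the "inactive" ones are pinned to $0$ — and checking that this pinning is consistent with feasibility of every local constraint block, using that a copy of $C$ together with a feasible assignment of $C$ on its active variables satisfies those blocks by construction. Once this is set up correctly, the extension to the full statement of Theorem~\ref{thm:nonuniform4blockFPT} is routine: one rearranges the $\mathbf{Q}_C$ blocks along the diagonal of $\mathbf{A}_4$, the $\mathbf{Q}^C_C$ blocks side by side into $\mathbf{A}_2$, and the $\mathbf{Q}^V_C$ blocks stacked into $\mathbf{A}_3$, with $\mathbf{A}_1$ the global-by-global submatrix, and the size bounds follow from Lemma~\ref{lem:4BlockNumberOfEqiuvalenceClasses}.
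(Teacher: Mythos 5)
Your proposal is correct and follows essentially the same route as the paper's proof: you add dummy copies of a representative component to equalize multiplicities, introduce auxiliary variable copies that replicate the local block $\mathbf{Q}_C$ but have zero entries in the global rows and in $\eval$, and then pin the auxiliaries to $0$ in original components while pinning the originals to $0$ in dummy components (what the paper calls first-kind and second-kind extensions). You also correctly identify the main pitfall — that pinning all dummy variables to $0$ would violate their local constraints, which is exactly why the auxiliary copies are needed.
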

In this case we put all possible matrices on a diagonal of the relevant matrix $\mathbf{A}_4$, next to each other in the matrix $\mathbf{A}_2$, and under each other in the matrix $\mathbf{A}_3$.
That is we set $\mathbf{A}_2$ to horizontal concatenation of all $(\mathbf{Q}^C_T)_{T\in \mathcal{T}}$, $\mathbf{A}_3$ to vertical concatenation of $(\mathbf{Q}^V_T)_{T\in\mathcal{T}}$, and finally $\mathbf{A}_4$ has matrices $(\mathbf{Q}_T)_{T\in\mathcal{T}}$ on its diagonal.
The bound on size of the matrix $\mathbf{A}$ follows from Lemma~\ref{lem:4BlockNumberOfEqiuvalenceClasses} and Claim~\ref{clm:typesOfSameMultiplicity}.
\end{proof}

\begin{proof}[Proof of Claim~\ref{clm:typesOfSameMultiplicity}]
The idea here is to take a type with less representatives and add a new one as a copy of a previous one.
But this has to be done carefully in order to maintain equivalence of intermediate ILPs.
For the local part we start by observing that if we add a copy of some previous component, then the~set of solutions for these two components is the~same.
However, as these components also interact with the global constraints we would like to have to restrict~the set of solutions of the~newly added component to all~0~solution only.
Note that this cannot be done using lower and upper bounds only as the former set of solutions does not have to contain such a solution.
That is, the (optimal) setting of global variables together with setting all component local variables to $0$ can violate the right-hand side.
In order to achieve the claim, we extend the matrices we have obtained from the component~$C$ in the following way.
Let $C$ be of type $T = (\mathbf{Q}_C^V, \mathbf{Q}_C^C, \mathbf{Q}_C)$ then the {\em extension of type $T$} if $$\hat{T} = \bigl(\mathbf{Q}_C^V, [\mathbf{Q}_C^C\mid \mathbf{0}], [\mathbf{Q}_C\mid \mathbf{Q}_C]\bigr).$$ 
We denote the former $C$\hy variables as $x_C$ and the new $C$\hy variables as $\hat{x}_C$.
We say that the extension is of
\begin{itemize}
  \item {\em first kind} if $\ell_C\le x_C\le u_C$ and $0\le \hat{x}_C\le 0$, and
  \item {\em second kind} if $0\le x_C\le 0$ and $\ell_C\le \hat{x}_C\le u_C$.
\end{itemize}
Note that with this we have only doubled the number of local variable of component~$C$.

\begin{CLM}\label{clm:extendedFirstKind}
Let $\III$ be an ILP instance and let $T$ be a type of~$\III$. 
Denote $\III_{T\to \hat{T}}$ the ILP instance $\III$ where components of type $T$ are replaced with components of $\hat{T}$ of the first kind.
Then, there is a bijection between solutions of ILP instances $\III$ and $\III_{T\to \hat{T}}$.
\end{CLM}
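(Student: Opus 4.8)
The plan is to prove Claim~\ref{clm:extendedFirstKind} by directly constructing the bijection and checking that it preserves feasibility, the objective value, and all right-hand sides. Recall the setup: a component $C$ of type $T=(\mathbf{Q}_C^V,\mathbf{Q}_C^C,\mathbf{Q}_C)$ has its local variables $x_C$ subject to $\ell_C\le x_C\le u_C$, and the constraints in which $x_C$ participates are exactly the local constraints $\mathbf{Q}_C x_C = \text{(something involving } \var(Z))$ together with the global constraints, where $x_C$ enters via $\mathbf{Q}_C^C$. In $\III_{T\to\hat T}$ we replace this by the extended type $\hat T=(\mathbf{Q}_C^V,[\mathbf{Q}_C^C\mid \mathbf{0}],[\mathbf{Q}_C\mid \mathbf{Q}_C])$ of the first kind, i.e.\ we add new variables $\hat x_C$ with bounds $0\le\hat x_C\le 0$, giving them the same local coefficients $\mathbf{Q}_C$ but zero coefficients in the global constraints and (implicitly) in $\eval$.

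First I would fix the bijection. For a feasible assignment $\alpha$ of $\III$, map it to the assignment $\alpha'$ of $\III_{T\to\hat T}$ which agrees with $\alpha$ on all variables of $\III$ (global variables, variables of other components, and $x_C$ for every component $C$ of type $T$) and sets $\hat x_C:=\mathbf{0}$ for every newly added variable block. Conversely, given a feasible $\beta$ of $\III_{T\to\hat T}$, note that the bound $0\le\hat x_C\le 0$ forces $\hat x_C=\mathbf{0}$, so $\beta$ is obtained from its restriction $\beta|_{\var(\III)}$ exactly by this map; hence the map is a bijection onto the set of all assignments of $\III_{T\to\hat T}$, and it remains to check it restricts to a bijection on \emph{feasible} assignments and preserves $\eval$.

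The verification is routine: since $\hat x_C=\mathbf{0}$, the local constraints of the extended component read $\mathbf{Q}_C x_C + \mathbf{Q}_C\hat x_C = \mathbf{Q}_C x_C$, identical to the original local constraint of $C$; the global constraints are unchanged because the $\hat x_C$ have zero coefficients there (the $\mathbf{0}$ block in $[\mathbf{Q}_C^C\mid\mathbf{0}]$); the coefficients on $\var(Z)$ inside $C$ are $\mathbf{Q}_C^V$ in both instances; and the bounds on $x_C$ are $\ell_C\le x_C\le u_C$ in both. Therefore $\alpha$ is feasible for $\III$ iff $\alpha'$ is feasible for $\III_{T\to\hat T}$, and $\eval\alpha(\vex)=\eval\alpha'(\vex)$ since $\hat x_C$ does not occur in the optimization function. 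This gives the claimed bijection between solution sets (and in particular between optimal solutions).

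The only point requiring care — and what I would flag as the main obstacle, though it is conceptual rather than technical — is making precise that ``$\hat x_C$ does not occur in the global constraints or in $\eval$'' is exactly the content of the $\mathbf{0}$ block in the extended type, and that replacing \emph{every} component of type $T$ simultaneously (not just one) still yields a well-defined instance whose matrix has the structure described; this is immediate from the definition of $\III_{T\to\hat T}$ but should be stated explicitly. I would close by remarking that the first-kind extension thus changes nothing semantically — it merely pads each type-$T$ component with dummy variables — which is precisely what is needed so that, after a symmetric second-kind extension on the freshly added copies, all types can be brought to a common multiplicity $N$ while preserving equivalence, completing the proof of Claim~\ref{clm:typesOfSameMultiplicity}.
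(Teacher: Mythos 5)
Your proof is correct and takes essentially the same route as the paper's: project solutions of $\III_{T\to\hat T}$ down by forgetting the $\hat x_C$ (which the bounds $0\le\hat x_C\le 0$ force to zero) and extend solutions of $\III$ up by setting $\hat x_C=\mathbf{0}$. You simply spell out more explicitly than the paper does why feasibility and $\eval$ are preserved (zero block in $[\mathbf{Q}_C^C\mid\mathbf{0}]$, the local identity $\mathbf{Q}_C x_C+\mathbf{Q}_C\hat x_C=\mathbf{Q}_C x_C$, and absence of $\hat x_C$ from $\eval$), which is a welcome elaboration but not a different argument.
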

\begin{proof}
Note that it holds that $\hat{x}_C = 0$ for every component $C$ of type $\hat{T}$.
Now a solution for $\III_{T\to \hat{T}}$ has a natural projection to a solution of $\III$ (forget all $\hat{x}_C$ variables).
Furthermore, a solution for $\III$ can be extended to a solution of $\III_{T\to \hat{T}}$ by setting $\hat{x}_C = 0$ for each component $C$ of type $T$.
This yields a bijection between the solution sets.
\end{proof}
We say that a component $C$ is {\em extended} if it has been created by the extension of the first kind.
We transform all components with multiplicity less than $N$ to extended components and denote $\III_{E}$ the resulting ILP instance.
Note that by Claim~\ref{clm:extendedFirstKind} the ILP instances $\III$ and $\III_{E}$ are in equivalent.

\begin{CLM}
Let $\III$ be an ILP instance, let $C$ be a~component of~$\III$, and let $C'$ be an extension of $C$ of the second kind.
Denote~$\III'$ the ILP instance $\III$ with $C'$ added (i.e., it has one more component) then instances $\III$ and $\III'$ are equivalent.
\end{CLM}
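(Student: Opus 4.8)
The plan is to give two explicit solution transformations and verify them by direct substitution, after first spelling out what ``adding $C'$'' does to the constraint matrix. Write $Z$ for the backdoor in context, $x_Z,x_C$ for the variable vectors of $Z$ and $C$, and $\ell_C\le x_C\le u_C$ for the bounds of $C$. Recall that the rows $\cF(C)$ of the matrix of $\III$ read $\mathbf{Q}_C^Vx_Z+\mathbf{Q}_Cx_C=\veb_{\cF(C)}$, that the columns of $x_C$ inside the global rows $\cF(Z)$ are exactly $\mathbf{Q}_C^C$, and hence the global rows read $\mathbf{A}_1x_Z+\sum_{D}\mathbf{Q}_D^Cx_D=\veb_{\cF(Z)}$ (sum over all components $D$ of $\III$). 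Forming $\III'$ means: add a fresh copy $x_{C'}$ of $x_C$ together with the auxiliary copy $\hat x_{C'}$; add a fresh block of rows, copied from $\cF(C)$, of the form $\mathbf{Q}_C^Vx_Z+\mathbf{Q}_Cx_{C'}+\mathbf{Q}_C\hat x_{C'}=\veb_{\cF(C)}$ (this realises the extension $\hat T=(\mathbf{Q}_C^V,[\mathbf{Q}_C^C\mid\mathbf{0}],[\mathbf{Q}_C\mid\mathbf{Q}_C])$); in the global rows give $x_{C'}$ the columns $\mathbf{Q}_C^C$ and $\hat x_{C'}$ the zero columns; impose $0\le x_{C'}\le0$ and $\ell_C\le\hat x_{C'}\le u_C$ (the ``second kind''); and let neither $x_{C'}$ nor $\hat x_{C'}$ appear in $\eval$ (harmless for $x_{C'}$, which is pinned to $\mathbf{0}$).

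First I would handle the forward direction. Given a solution $\alpha$ of $\III$, define $\alpha'$ on $\III'$ by $\alpha'|_{\var(\III)}=\alpha$, $\alpha'(x_{C'})=\mathbf{0}$, and $\alpha'(\hat x_{C'})=\alpha(x_C)$. Feasibility: the new rows evaluate to $\mathbf{Q}_C^V\alpha(x_Z)+\mathbf{Q}_C\mathbf{0}+\mathbf{Q}_C\alpha(x_C)=\veb_{\cF(C)}$ because $\alpha$ satisfies $\cF(C)$; the global rows gain the summand $\mathbf{Q}_C^Cx_{C'}+\mathbf{0}\cdot\hat x_{C'}=\mathbf{0}$ and so remain satisfied; the other rows of $\III'$ are rows of $\III$ left untouched; and all bounds hold since $\ell_C\le\alpha(x_C)\le u_C$. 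As the $C'$-variables are absent from $\eval$, the objective value is unchanged, so $\alpha'$ is feasible for $\III'$ with the same value; in particular every objective value achievable in $\III$ (including in the unbounded case) is achievable in $\III'$.

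Next I would handle the backward direction, which is shorter. Given a solution $\beta$ of $\III'$, its restriction $\beta|_{\var(\III)}$ is feasible for $\III$: the local rows of components of $\III$ involve no $C'$-variable, while the global rows of $\III'$ contain the $C'$-term $\mathbf{Q}_C^C\beta(x_{C'})+\mathbf{0}\cdot\beta(\hat x_{C'})=\mathbf{0}$ because the bounds force $\beta(x_{C'})=\mathbf{0}$; and the bounds of $\III$ form a sublist of those of $\III'$. Again the objective value is unchanged. Combining the two directions: $\III$ and $\III'$ are simultaneously infeasible, simultaneously unbounded, and otherwise have equal optima, and a solution of either can be turned into one of the other in polynomial time --- exactly the notion of equivalence used elsewhere (cf.\ Property~(P1) of Theorem~\ref{thm:nonuniform4blockFPT} and Claim~\ref{clm:extendedFirstKind}).

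The only genuinely delicate point is the bookkeeping in the first paragraph: one must make sure the fresh rows of $C'$ carry the \emph{same} coefficient block $\mathbf{Q}_C^V$ on $x_Z$ and the \emph{same} right-hand side $\veb_{\cF(C)}$ as $\cF(C)$, so that pinning $x_{C'}=\mathbf{0}$ makes the constraints of $C'$ coincide with those of $C$ (this is what guarantees $C'$ is feasible for precisely the same settings of $x_Z$ as $C$), and that in the global rows the columns of $x_{C'}$ and $\hat x_{C'}$ are $\mathbf{Q}_C^C$ and $\mathbf{0}$ as prescribed by $\hat T$. I would also note that the two maps are not literally inverse bijections on solution sets --- several admissible values of $\hat x_{C'}$ compatible with a fixed $\beta(x_Z)$ give distinct solutions of $\III'$ projecting to a single solution of $\III$ --- so ``equivalent'' is meant in the polynomial-time solution-transfer sense above, which is all the surrounding argument needs.
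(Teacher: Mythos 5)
Your proposal is correct and takes essentially the same approach as the paper: construct the explicit solution maps in both directions (forward by setting $x_{C'}=\mathbf{0}$ and $\hat x_{C'}=\alpha(x_C)$, backward by restriction), and observe that since the $C'$-variables are absent from $\eval$ the many-to-one nature of the projection does not affect optima. Your version is more explicit about the matrix bookkeeping and about verifying the new rows, global rows, and bounds, but the underlying argument is the one the paper gives.
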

\begin{proof}
First we argue that $\III$ does have a solution if and only if $\III'$ does.
To see this take a solution $\vex$ of $\III$ and let $x_C$ be the part of $\vex$ corresponding to $C$\hy variables.
We build a solution to $\III'$ follows.
We copy the solution of every variable but the variables of $C'$.
We set variables $x_{C'}=0$ and $\hat{x}_{C'} = x_C$.

Note that by this we have actually build a natural correspondence between the set of solutions to $\III$ and the set of solutions to $\III'$.
Observe that this correspondence is not one-to-one as in general there can be more possibilities how to extend the solution to variables $\hat{x}_{C'}$.
We say that all these solutions project to the same solution $\vex$ to instance $\III$.
However, as all the $C'$\hy variables do not occur in the objective function the value of the objective function of all  solutions that project to $\vex$ is the same.
\end{proof}
By combining the two claims it is possible to transform ILP instance $\III$ to $\hat{\III}$ with the following properties.
\begin{itemize}
  \item all components of $\hat{\III}$ are either extended or for their type $T$ it holds that $\mathrm{mult}_{\III}(T) = N$,
  \item for each type $\hat{T}$ of $\hat{\III}$ it holds that $\mathrm{mult}_{\hat{\III}}(\hat{T}) = N$,
  \item $b_\ell(\hat{\III}) = b_\ell(\III)$,
  \item number of variables in $\hat{\III}$ is at most twice the number of variables in $\III$. \qedhere
\end{itemize}
\end{proof}

The algorithmic consequences of Theorem~\ref{thm:nonuniform4blockFPT} together with Theorems~\ref{thm:nfold4blockip},~\ref{thm:det-fpt} and~\ref{thm:nfoldand2stocip} are the following corollaries.
\begin{COR}\label{cor:parmixedXP}
  Let $a$ and $z$ be constants and let $\III$ be an ILP instance with
  $\coA(\III) \leq a$ and $\fr(\III) \leq z$, then $\III$ can be solved in
  polynomial time.
\end{COR}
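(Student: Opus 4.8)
The plan is to chain together the tools developed earlier in the paper. Given an ILP instance $\III$ with $\coA(\III)\le a$ and $\fr(\III)\le z$, I would first invoke Theorem~\ref{thm:det-fpt} (with parameter $k=z$, a constant) to compute in polynomial time a backdoor set $Z$ witnessing $\fr(\III)$. Since $z$ is a fixed constant, the running time $\bigO{(z+1)^z|E(G_\III)|}$ is polynomial in $|\III|$, and we may assume $Z$ is known.

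Next I would apply Theorem~\ref{thm:nonuniform4blockFPT} to $\overline{\III}:=\III$ together with the backdoor $Z$. This produces, in time $\bigoh(n^2(\fr(\III)+1)! + |\III|)$ (which is polynomial since $\fr(\III)\le z$ is constant and $n\le|\III|$), an equivalent $N$-fold $4$-block instance $\III'=(\mathbf{A}^{(N)},\vex,\veb,\ver,\veu,\eval)$ with $N\le n$, such that by (P2) the block dimensions satisfy $\max\{r,s\}\le\fr(\III)\le z$ and $\max\{t,u\}\le f(\coA(\III),\fr(\III))\le f(a,z)$ for the computable function $f$. Hence $b_{\mathbf{A}}=\max(r,s,t,u)\le \max(z,f(a,z))$, which is a constant depending only on $a$ and $z$. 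Moreover the entries of the new matrix $\mathbf{A}$ are entries of $\overline{\mathbf{A}}$ (the transformation only copies, rearranges, and zero-pads), so $c_{\mathbf{A}}\le a$ is also constant.

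Now $\III'$ is an $N$-fold $4$-block instance whose parameters $c_{\mathbf{A}}$ and $b_{\mathbf{A}}$ are both bounded by constants depending only on $a$ and $z$. Theorem~\ref{thm:nfold4blockip} therefore applies and solves $\III'$ in polynomial time. Finally, by (P1) any solution for $\III'$ can be transformed in polynomial time into a solution for $\III$ (and the instance $\III$ has a solution iff $\III'$ does), which completes the proof.

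I do not expect any genuine obstacle here: the corollary is essentially a bookkeeping composition of three already-established polynomial-time procedures. The only points requiring a line of care are (i) checking that all running times remain polynomial once $a$ and $z$ are treated as constants — this is immediate since each running time is of the form (constant)$\cdot\mathrm{poly}(|\III|)$ — and (ii) verifying that the constants $c_{\mathbf{A}}$ and $b_{\mathbf{A}}$ of the constructed instance $\III'$ are bounded purely in terms of $a$ and $z$, which follows directly from property (P2) and the fact that the construction does not introduce new coefficient values. If one wishes to also report the optimal value or detect unboundedness/infeasibility, these transfer across the equivalence in (P1) as well.
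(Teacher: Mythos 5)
Your proof is correct and matches the paper's intended argument exactly: compute a witnessing backdoor via Theorem~\ref{thm:det-fpt}, transform to an $N$-fold $4$-block instance with constant-bounded $b_{\mathbf{A}}$ and $c_{\mathbf{A}}$ via Theorem~\ref{thm:nonuniform4blockFPT} (using (P1) and (P2)), and then invoke Theorem~\ref{thm:nfold4blockip}. The paper gives this corollary without a written proof beyond citing these theorems, and your write-up is precisely the bookkeeping it leaves implicit.
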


\begin{COR}\label{cor:parvarconFPT}
  ILP is fpt when parameterized by $\max\{\coA,\fr^V\}$ and also when
  parameterized by $\max\{\coA,\fr^C\}$.
\end{COR}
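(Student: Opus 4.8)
The plan is to assemble three ingredients already in hand: the backdoor\hy detection algorithm (Theorem~\ref{thm:det-fpt}), the structural transformation into a $4$\hy block $N$\hy fold instance (Theorem~\ref{thm:nonuniform4blockFPT}), and the fixed\hy parameter tractability of $N$\hy fold and $2$\hy stage stochastic ILP (Theorem~\ref{thm:nfoldand2stocip}). The key observation is that a \emph{variable}\hy backdoor yields a $4$\hy block instance with no global constraints, hence one of $2$\hy stage stochastic shape (the blocks $\mathbf{A}_1,\mathbf{A}_2$ vanish), while a \emph{constraint}\hy backdoor yields one with no global variables, hence of $N$\hy fold shape (the blocks $\mathbf{A}_1,\mathbf{A}_3$ vanish). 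Since Theorem~\ref{thm:nfoldand2stocip} gives \FPT{} (not merely \XP{}) for these two shapes, this upgrades Corollary~\ref{cor:parmixedXP} to fixed\hy parameter tractability in these two cases.

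Concretely, for the parameter $\max\{\coA,\fr^V\}$ I would first invoke Theorem~\ref{thm:det-fpt} for \VCOMPFRACS{} with $k=1,2,\dots$ until it succeeds; as $\fr^V(\III)$ is bounded by the parameter, this terminates after finitely many rounds in fpt time and returns a variable\hy backdoor $Z$ witnessing $\fr^V(\III)$. Next I would run (the variable\hy backdoor version of) Theorem~\ref{thm:nonuniform4blockFPT} on $(\III,Z)$: its proof goes through verbatim when $Z$ is a variable\hy backdoor witnessing $\fr^V$, and because $\cF(Z)=\emptyset$ the matrices $\mathbf{A}_1,\mathbf{A}_2$ have no rows, so $\mathbf{A}^{(N)}$ is the constraint matrix of a $2$\hy stage stochastic ILP instance $\III'$ that is solution\hy equivalent to $\III$ via (P1). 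By (P2) together with Lemma~\ref{lem:4BlockNumberOfEqiuvalenceClasses} we get $b_{\mathbf{A}}=\max\{r,s,t,u\}\le g(\coA(\III),\fr^V(\III))$ for a computable $g$, and since the transformation only copies existing coefficients, $\coA(\III')=\coA(\III)$. Finally I would solve $\III'$ by Theorem~\ref{thm:nfoldand2stocip}, which is fpt in $\coA(\III')$ and $b_{\mathbf{A}}$, both bounded by a computable function of $\max\{\coA,\fr^V\}$, and transfer the outcome (a solution, or a certificate of infeasibility/unboundedness) back to $\III$ using (P1). The total running time is $f(\max\{\coA,\fr^V\})\cdot|\III|^{\bigoh(1)}$. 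The argument for $\max\{\coA,\fr^C\}$ is symmetric: one uses \CCOMPFRACS{}, and since $\var(Z)=\emptyset$ the matrices $\mathbf{A}_1,\mathbf{A}_3$ have no columns, so $\III'$ is an $N$\hy fold ILP instance, to which Theorem~\ref{thm:nfoldand2stocip} again applies.

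I do not expect a genuine obstacle, as this is essentially a bookkeeping assembly of earlier results; the only points requiring care are (i) confirming that Theorem~\ref{thm:nonuniform4blockFPT} holds for variable\hy and constraint\hy backdoors with $\fr$ replaced by $\fr^V$, resp.\ $\fr^C$ — which is immediate, since its construction never uses that $Z$ is a minimum \emph{mixed} backdoor, only that it witnesses the relevant fracture number — and (ii) checking that the transformed instance $\III'$ keeps \emph{both} parameters of Theorem~\ref{thm:nfoldand2stocip} small, i.e.\ $\coA(\III')=\coA(\III)$ (clear from the construction) and $b_{\mathbf{A}}$ bounded via (P2), which in turn rests on the bound on the number of $\sim$\hy classes from Lemma~\ref{lem:4BlockNumberOfEqiuvalenceClasses} and on the fact that the extension step in Theorem~\ref{thm:nonuniform4blockFPT} at most doubles the number of variables per component.
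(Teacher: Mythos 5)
Your proposal is correct and follows the same route the paper takes: transform via Theorem~\ref{thm:nonuniform4blockFPT} and then apply Theorem~\ref{thm:nfoldand2stocip} in place of Theorem~\ref{thm:nfold4blockip}, noting that a variable backdoor forces $\mathbf{A}_1,\mathbf{A}_2$ to be empty (2-stage stochastic shape) while a constraint backdoor forces $\mathbf{A}_1,\mathbf{A}_3$ to be empty ($N$-fold shape). The paper leaves the assembly implicit; your write-up is simply a fleshed-out version of that one-line argument, and your two ``care'' points (the transformation is indifferent to whether $Z$ is a mixed, variable, or constraint backdoor, and $\coA$ is preserved) are both accurate.
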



\section{Unary ILP}
Here we will prove that \textsc{Unary ILP} is polynomial-time solvable
when $\fr^C$ is bounded by a constant; this contrasts
the case of general ILP, which remains NP-hard in this case (see
Theorem~\ref{lem:binaryconstbdhard} later).
In particular, we will give
an \XP{} algorithm for \textsc{Unary ILP} parameterized by $\fr^C$. We
will also present lower bounds showing that such an algorithm cannot
exist for \textsc{Unary ILP} parameterized by $\fr^V$ or $\fr$, and
rule out the existence of a fixed-parameter algorithm for $\fr^C$.


\subsection{The Algorithm}

The crucial, and also most technically demanding, part of this
result is showing that it suffices to restrict our search
space to assignments over polynomially bounded variable domains.

Before showing this we need some preparation.

\begin{PRO}\label{pro:crammerdet}
 Let $\mathbf{A}$ be an integer $k\times k$ non-singular matrix and
 $\veb$ an integer vector. Then $|\vex[i]|\le k!\coB(\coA)^{k-1}$ for
 the unique $\vex$ such that $\mathbf{A}\vex=\veb$.
\end{PRO}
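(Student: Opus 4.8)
The plan is to derive the bound directly from Cramer's rule (Proposition~\ref{pro:crammer}) together with the elementary determinant estimate of Proposition~\ref{pro:det}. First I would apply Cramer's rule: since $\mathbf{A}$ is non-singular, the system $\mathbf{A}\vex=\veb$ has the unique solution given by $\vex[i]=\det(\mathbf{A}(i))/\det(\mathbf{A})$, where $\mathbf{A}(i)$ denotes the matrix obtained from $\mathbf{A}$ by replacing its $i$-th column with $\veb$. Because $\mathbf{A}$ has integer entries, $\det(\mathbf{A})$ is a non-zero integer, so $|\det(\mathbf{A})|\ge 1$, and hence $|\vex[i]|\le|\det(\mathbf{A}(i))|$.

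Next I would estimate $|\det(\mathbf{A}(i))|$ via Proposition~\ref{pro:det}, which bounds the determinant of an integer $k\times k$ matrix by $k!$ times the product, taken over the $k$ columns, of the maximum absolute value appearing in each column. In $\mathbf{A}(i)$ the $i$-th column equals $\veb$, contributing a factor of at most $\coB$, while each of the remaining $k-1$ columns is a column of $\mathbf{A}$ and contributes a factor of at most $\coA$. Multiplying these estimates yields $|\det(\mathbf{A}(i))|\le k!\,\coB\,(\coA)^{k-1}$, which combined with the previous paragraph gives exactly the claimed inequality.

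I expect no genuine obstacle here: the statement is a routine corollary of the two quoted facts. The only points that warrant a line of justification are that non-singularity of an integer matrix forces $|\det(\mathbf{A})|\ge 1$ (so dividing by $\det(\mathbf{A})$ cannot increase absolute value), and that the column-maxima product for $\mathbf{A}(i)$ splits cleanly into the single ``$\veb$-column'' contributing $\coB$ and the other $k-1$ columns each contributing at most $\coA$. Degenerate cases such as $\veb=\mathbf{0}$ (which forces $\vex=\mathbf{0}$) are immediate.
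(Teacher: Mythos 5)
Your proposal is correct and follows essentially the same route as the paper: apply Cramer's rule, note that $|\det(\mathbf{A})|\ge 1$ for a non-singular integer matrix, and bound $|\det(\mathbf{A}(i))|$ via the column-wise determinant estimate of Proposition~\ref{pro:det}. The only difference is expository --- you spell out explicitly that the $i$-th column of $\mathbf{A}(i)$ contributes a factor of $\coB$ while the remaining $k-1$ columns each contribute at most $\coA$, a step the paper leaves implicit.
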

\begin{proof}
  Because of Proposition~\ref{pro:crammer} it holds that
  $$\vex[i]=\frac{\det(\mathbf{A}(i))}{\det(\mathbf{A})}.$$ Moreover,
  since $\mathbf{A}$ is a non-singular integer matrix, we have that
  $|\det(\mathbf{A})|\geq 1$ and thus \mbox{$|\vex[i]|\leq
  |\det(\mathbf{A}(i))|$}, which together with Proposition~\ref{pro:det} implies
  \mbox{$|\vex[i]|\leq |\det(\mathbf{A}(i))|\leq k!\coB(\coA)^{k-1}$}, as required.
\end{proof}

\begin{LEM}\label{lem:bound-var}
  Let $\mathbf{Q}$ be a $k\times n$ matrix of rank $k$, $\vey$ be
  a vector of $n$ variables, $\ved$ be
  a vector of size $k$, $I$ be a set of $k$ linearly independent columns of
  $\mathbf{Q}$, $V$ be their corresponding variables in $\var(\vey)$,
  and let $\beta$ be an assignment of the variables in
  $\var(\vey)$ such that $\mathbf{Q}\beta(\vey)=\ved$.
  Then for every $v \in V,$ it holds that

  \[
    |\beta(v)| \leq k!\Bigl(\coD+\coQ\sum_{u \in \var(\vey) \setminus
      V}\beta(u)\Bigr)\bigl(\coQ\bigr)^{k-1}.
  \]
\end{LEM}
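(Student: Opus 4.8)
The plan is to reduce the non-square system $\mathbf{Q}\beta(\vey)=\ved$ to a square, non-singular one by treating the $k$ variables in $V$ as the unknowns and moving the contribution of all the remaining variables into the right-hand side; a Cramer-type estimate, namely Proposition~\ref{pro:crammerdet}, then bounds each $\beta(v)$.

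In detail, let $\mathbf{Q}_I$ be the $k\times k$ submatrix of $\mathbf{Q}$ formed by the columns in $I$, and let $\mathbf{Q}_{\bar I}$ collect the remaining $n-k$ columns. Since the columns indexed by $I$ are linearly independent, $\mathbf{Q}_I$ is a non-singular integer matrix. Writing $\beta_V$ and $\beta_{\bar V}$ for the restrictions of $\beta$ to $V$ and to $\var(\vey)\setminus V$, respectively, the equation $\mathbf{Q}\beta(\vey)=\ved$ becomes $\mathbf{Q}_I\beta_V+\mathbf{Q}_{\bar I}\beta_{\bar V}=\ved$, that is,
\[
  \mathbf{Q}_I\beta_V=\ved' \qquad\text{where}\qquad \ved':=\ved-\mathbf{Q}_{\bar I}\beta_{\bar V}.
\]
Thus $\beta_V$ is the unique solution of a non-singular integer $k\times k$ system, and Proposition~\ref{pro:crammerdet} yields $|\beta(v)|\le k!\,c_{\ved'}\,(c_{\mathbf{Q}_I})^{k-1}$ for every $v\in V$.

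It then remains to estimate the two quantities on the right. Since $\mathbf{Q}_I$ is a submatrix of $\mathbf{Q}$ we have $c_{\mathbf{Q}_I}\le\coQ$. For $\ved'$, each coordinate satisfies $|\ved'[j]|\le|\ved[j]|+\sum_{u\in\var(\vey)\setminus V}|\mathbf{Q}[j,u]|\,|\beta(u)|\le\coD+\coQ\sum_{u\in\var(\vey)\setminus V}\beta(u)$, bounding $|\ved[j]|$ by $\coD$, each entry of $\mathbf{Q}$ by $\coQ$, and using that the variables in question range over the non-negative integers (so $|\beta(u)|=\beta(u)$). Substituting these two estimates into the displayed bound gives exactly the claimed inequality.

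I do not expect a genuine obstacle here: the proof is essentially a bookkeeping argument on top of Cramer's rule. The only two points that need care are (i) verifying that $\mathbf{Q}_I$ is non-singular, so that Proposition~\ref{pro:crammerdet} applies and in particular $|\det(\mathbf{Q}_I)|\ge 1$, and (ii) splitting off the non-basic variables cleanly so that $\ved'$ is genuinely the right-hand side of the restricted square system.
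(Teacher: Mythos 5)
Your proof is correct and mirrors the paper's argument exactly: isolate the $k\times k$ non-singular submatrix $\mathbf{Q}_{(*,I)}$, fold the remaining variables into a new integer right-hand side $\ved'=\ved-\mathbf{Q}_{(*,J)}\beta(\vey')$, and invoke Proposition~\ref{pro:crammerdet}. The only caveat is that your side remark about variables being non-negative (so $|\beta(u)|=\beta(u)$) is not a hypothesis of the lemma; strictly the bound on $|\ved'[j]|$ needs $\sum_u|\beta(u)|$, an imprecision already present in the lemma's statement and in the paper's own proof, so it is not a gap on your part.
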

\begin{proof}
  Let $\vey'$ be $\vey$ restricted to the variables in
  $\var(\vey)\setminus V$ and let $J$ be the set of all columns of $\mathbf{Q}$
  that are not in $I$.
  We will now apply the assignment $\beta$ for the variables in $\vey'$ to $\mathbf{Q}$.
  This will give us a set of equations that
  need to be satisfied for the variables in $V$ allowing us to
  obtain a bound on $\beta$ for these variables. Namely, the
  right-hand side denoted by $\ved'$ of our equations is obtained from $\ved$ by
  subtracting the application of $\beta$ to
  $\mathbf{Q}_{(*,J)}$, i.e.,
  $\ved'=\ved-\mathbf{Q}_{(*,J)}\beta(\vey')$, which after
  restricting $\mathbf{Q}$ to the columns $I$ and using the
  restriction $\vey''$ of $\vey$ to
  the variables in $V$ gives us the
  following equations that are satisfied by $\beta$:

  \begin{equation}\label{eqn:assignC}
    \mathbf{Q}_{(*,I)}\beta(\vey'')=\ved'
  \end{equation}

  Note that because $I$ is a set of $k$ linearly independent columns
  the matrix $\mathbf{Q}_{(*,I)}$ is non-singular. Moreover, observe that
  $\ved'[i] \leq \coD+\coQ\sum_{u \in \var(\vey) \setminus
      V}\beta(u)$ for every $i$ with ${1 \leq i \leq k}$.
  Because $\beta$ satisfies~\ref{eqn:assignC} we obtain from
  Proposition~\ref{pro:crammerdet} that

  \[
    |\beta(v)| \leq k!\Bigl(\coD+\coQ\sum_{u \in \var(\vey) \setminus
      V}\beta(u)\Bigr)\bigl(\coQ\bigr)^{k-1},
  \]
  for every variable $v \in V$.
\end{proof}

The following lemma provides an important ingredient for
Lemma~\ref{lem:boundedconstbd} below. Its proof crucially makes use of
the specific structure of our ILP instance.

\begin{LEM}
\label{lem:k2columns}
  Let $\III$ be an instance of \textsc{Unary ILP} with matrix
  $\mathbf{A}$. Then for any set $D$ of linearly dependent columns of
  $\mathbf{A}$, it holds that $\mathbf{A}_{(*,D)}$ contains a subset of at
  most \mbox{$\fr^C(\III)(\fr^C(\III)+1)$} linearly dependent columns.
\end{LEM}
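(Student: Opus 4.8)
The plan is to recast the claim matroidally and then exploit the block structure that a constraint-backdoor forces on $\mathbf{A}$. Fix a constraint-backdoor $Z$ witnessing $\fr^C(\III)$ and write $k=\fr^C(\III)$; so $Z\subseteq\cF(\III)$, $|Z|\le k$, and every connected component of $G_{\III\setminus Z}$ has at most $k$ vertices and hence at most $k$ variables. Since every linearly dependent set of columns contains an inclusion-minimal one, it suffices to prove that every inclusion-wise minimal linearly dependent set $B$ of columns of $\mathbf{A}$ has $|B|\le k(k+1)$; applying this to any such $B\subseteq D$ then yields the lemma.

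The first ingredient is that deleting from $\mathbf{A}$ the at most $k$ rows $R(Z)$ leaves a matrix $\mathbf{A}'=\mathbf{A}_{(R(\cF(\III)\setminus Z),*)}$ which, after grouping columns by the component of $G_{\III\setminus Z}$ containing the corresponding variable, is block-diagonal: in a row belonging to a constraint of one component only that component's variables have nonzero entries (and a variable appearing in no surviving constraint gives an all-zero column of $\mathbf{A}'$). Hence, letting $C_1,\dots,C_m$ be the components of $G_{\III\setminus Z}$ that contain a variable of $B$ and $B_i=B\cap\var(C_i)$, the set $B$ is the disjoint union of the $B_i$, each $B_i$ is nonempty, $|B_i|\le k$, and $\mathrm{rank}\,\mathbf{A}'_{(*,B)}=\sum_{i=1}^m\mathrm{rank}\,\mathbf{A}'_{(*,B_i)}$.

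For the counting step, pick a nonzero $\lambda$ with $\mathbf{A}_{(*,B)}\lambda=0$; by minimality of $B$ it is nonzero in every coordinate of $B$ (otherwise it would witness dependence of a proper subset). Put $K_i=\ker\mathbf{A}'_{(*,B_i)}$ and $r_i=\mathrm{rank}\,\mathbf{A}'_{(*,B_i)}=|B_i|-\dim K_i$. Block-diagonality gives $\lambda|_{B_i}\in K_i$, and $\lambda|_{B_i}\ne 0$ since $B_i\ne\emptyset$, so $\dim K_i\ge 1$ for all $i$. On the other hand, splitting the rows of $\mathbf{A}_{(*,B)}$ into $R(Z)$ and the remaining rows gives $\mathrm{rank}\,\mathbf{A}_{(*,B)}\le|Z|+\mathrm{rank}\,\mathbf{A}'_{(*,B)}\le k+\sum_i r_i$, while minimality of $B$ forces $\mathrm{rank}\,\mathbf{A}_{(*,B)}=|B|-1$. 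Combining, $|B|-1\le k+\sum_i r_i=k+|B|-\sum_i\dim K_i$, so $\sum_i\dim K_i\le k+1$; since each summand is at least $1$ this gives $m\le k+1$, and therefore $|B|=\sum_{i=1}^m|B_i|\le mk\le k(k+1)$.

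The only step that is more than bookkeeping is the combined use of two properties of a minimal dependent set $B$: its full-support dependency forces every per-component kernel $K_i$ to be nonzero, while its rank deficiency being exactly $1$ --- together with $|R(Z)|\le k$ --- caps $\sum_i\dim K_i$ at $k+1$, which in turn bounds the number of components that $B$ can touch. The block-diagonal rank additivity and the bound $|B_i|\le k$ coming from $k$-compactness are routine.
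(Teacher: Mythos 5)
Your proof is correct, and it takes a genuinely different route from the paper's. The paper works directly with an arbitrary nonzero kernel vector $\ves$ of $\mathbf{A}_{(*,D)}$: it splits $\ves$ by components into pieces $\ves_{C_i}$, forms the products $\vew_{C_i}=\mathbf{A}_{(*,D_i)}\ves_{C_i}$, argues that (after disposing of a degenerate case) each $\vew_{C_i}$ is supported only on the rows in $Z$, and then observes that any $\fr^C+1$ of these vectors live in a space of dimension at most $\fr^C$ and hence admit a nontrivial linear combination, which yields a small dependent column subset directly. You instead reduce to an inclusion-minimal dependent set $B$ (a circuit), exploit that its rank is exactly $|B|-1$ and that any kernel vector has full support, and then run a rank/nullity count using block-diagonality of $\mathbf{A}'=\mathbf{A}_{(R(\cF\setminus Z),*)}$ together with the row bound $|R(Z)|\le\fr^C$ to cap the number of components a circuit can touch. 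Your approach buys a cleaner handling of the edge cases the paper glosses over (e.g.\ components where $\ves_{C_i}=0$, in which case the paper's fallback ``the support of $\ves$ in $C_i$'' could be empty), and it makes the mechanism transparent as a routine rank count; the paper's approach is more directly constructive in exhibiting the small dependent subset as $\bigcup_{i\le \fr^C+1}D_i$. Both yield the same bound $\fr^C(\fr^C+1)$.
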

\begin{proof}
  Let $Z \subseteq
  \cF(\III)$ be a constraint backdoor for $\III$ of size at most
  $\fr^C(\III)$ and let $\ves$ be a non-zero vector satisfying
  $\mathbf{A}_{(*,D)}\ves = \ve 0$.
  Let $C_1,\dotsc,C_p$ be all components of $\III\setminus Z$ that
  contain at least one variable corresponding to a column in $D$ and
  let $D_i$ be the set of all columns in $D$ that correspond to
  variables in $C_i$. Moreover, let $\ves_{C_i}$ be the restriction of
  $\ves$ to the entries corresponding to variables in $C_i$.
  Note that if $p \leq
  \fr^C(\III)+1$, then $D$ already contains at most
  $\fr^C(\III)(\fr^C(\III)+1)$ linearly dependent columns and the
  lemma follows. So we can assume in the following that
  $p>\fr^C(\III)+1$.
  Denote by $\vew_{C_i}$ the vector $\mathbf{A}_{(*,D_i)}\ves_{C_i}$. If
  $\vew_{C_i} = \ve 0$, then the variables in $C_i$ that $\ves$ does not
  assign to $0$ correspond to at most $\fr^C(\III)$ linearly dependent
  columns and the lemma follows.
  Otherwise, it is easy to observe that
  if $\vew_{C_i}[j]\neq 0$ then $j$ corresponds to a constraint in
  $Z$. Hence for every $C_i$ all
  non-zero entries of the vector $\vew_{C_i}$ correspond to constraints in
  $Z$. Consequently any subset of $\fr^C(\III)+1$ vectors from
  $\vew_{C_1},\dotsc,\vew_{C_p}$ in particular the vectors $\vew_{C_1},\dots,
  \vew_{C_{\fr^C(\III)+1}}$ are linearly dependent (since all their
  non-zero entries correspond to constraints in $Z$ and $|Z|\leq \fr^C(\III)$),
  which implies that the set $\bigcup_{1 \leq i \leq
    \fr^C(\III)+1}D_i$ is the required subset of at most
  $\fr^C(\III)(\fr^C(\III)+1)$ linearly dependent columns of $\mathbf{A}_{(*,D)}$.
\end{proof}

\begin{lemma}\label{lem:sol-plus-minus}
  Let $\III=(\mathbf{A}, \vex, \veb, \ver, \veu,\eval)$ be an ILP
  instance, $\alpha$ a solution for
  $\III$, and $\delta$ a non-zero integer vector such that $\alpha+\delta$ and
  $\alpha-\delta$ are feasible assignments for $\III$. Then
  $\eval\delta=0$ and moreover $\alpha+\delta$ and $\alpha-\delta$ are
  also solutions for $\III$.
\end{lemma}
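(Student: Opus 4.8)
The plan is to argue directly from the definition of a \emph{solution} as a feasible assignment that maximizes the linear objective. The first step I would record is the key linearity identity: since $\eta$ acts linearly on the variable vector, for the assignments $\alpha+\delta$ and $\alpha-\delta$ we have $\eta(\alpha+\delta)(\vex)=\eta\alpha(\vex)+\eta\delta$ and $\eta(\alpha-\delta)(\vex)=\eta\alpha(\vex)-\eta\delta$, where $\eta\delta$ denotes the objective vector applied to the integer vector $\delta$.

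The second step uses maximality. Because $\alpha$ is a solution, the value $\eta\alpha(\vex)$ is the maximum of $\eta\gamma(\vex)$ over all feasible assignments $\gamma$; in particular this maximum exists, so $\III$ is not unbounded. By hypothesis both $\alpha+\delta$ and $\alpha-\delta$ are feasible, hence $\eta(\alpha+\delta)(\vex)\le\eta\alpha(\vex)$ and $\eta(\alpha-\delta)(\vex)\le\eta\alpha(\vex)$. Substituting the two identities from the previous step yields $\eta\delta\le 0$ and $-\eta\delta\le 0$ simultaneously, which forces $\eta\delta=0$. Plugging this back in gives $\eta(\alpha+\delta)(\vex)=\eta\alpha(\vex)=\eta(\alpha-\delta)(\vex)$; since this common value equals the maximum over all feasible assignments and $\alpha\pm\delta$ are feasible, both $\alpha+\delta$ and $\alpha-\delta$ are solutions, completing the proof.

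There is essentially no obstacle here: the statement is an immediate consequence of linearity of the objective together with the maximality defining a solution. The only points requiring a little care are that ``solution'' presupposes boundedness of the instance, which is guaranteed by the assumed existence of $\alpha$, and that one must invoke feasibility of \emph{both} $\alpha+\delta$ and $\alpha-\delta$ in order to bound $\eta\delta$ from both sides and conclude it is exactly $0$.
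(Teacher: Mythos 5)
Your proof is correct and uses essentially the same argument as the paper: linearity of $\eval$ together with maximality of $\alpha$ over feasible assignments, applied to both $\alpha+\delta$ and $\alpha-\delta$, forces $\eval\delta=0$, from which it is immediate that $\alpha\pm\delta$ are also solutions. The paper phrases this as a one-line contradiction while you unfold it into the two inequalities $\eval\delta\le 0$ and $-\eval\delta\le 0$, but the content is identical.
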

\begin{proof}
  Assume for a contradiction that $\eval\delta\neq 0$, then either
  $\eval(\alpha+\delta)>\eval(\alpha)$ or
  $\eval(\alpha-\delta)>\eval(\alpha)$, contradicting that $\alpha$ is
  a solution.
\end{proof}

We are now ready to show that we only need to consider solutions with
polynomially bounded variable domain.
\begin{lemma}
\label{lem:boundedconstbd}
Let $\III$ be a feasible instance of \textsc{Unary ILP-Feasibility} of size $n$.
Then, there exists a solution $\alpha$ with
$|\alpha(v)| \leq \mb$ for every $v \in \var(\III)$, where
$\mb=8\bigl(2(\fr^C(\III)+2)^2\bigr)!(n)^{2(\fr^C(\III)+2)^2}$.

\end{lemma}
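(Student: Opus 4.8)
The plan is to take an extremal feasible assignment and show that a coordinate larger than $\mb$ would let us ``shorten'' it along a small kernel direction, contradicting extremality. Write $k:=\fr^C(\III)$ and $q:=2(k+2)^2$, and fix a constraint backdoor $Z$ witnessing $\fr^C(\III)$, so $|Z|\le k$ and every component of $\III\setminus Z$ has at most $k$ variables and at most $k$ constraints; since the instance is unary, every finite entry of $\mathbf{A},\veb,\ver,\veu$ has absolute value at most $n$. Among all feasible assignments choose $\alpha$ minimizing $\sum_{v}|\alpha(v)|$ and, among those, minimizing $|\mathrm{supp}(\alpha)|$; this is well defined as these quantities are non-negative integers and $\III$ is feasible. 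Assume for contradiction that $|\alpha(v_0)|>\mb$ for some variable $v_0$.

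The structural core is to produce a non-zero integer vector $\delta$ with $\mathbf{A}\delta=\mathbf 0$, $\mathrm{supp}(\delta)\subseteq\mathrm{supp}(\alpha)$, and $\|\delta\|_\infty\le (q-1)!\,n^{\,q-1}$. If the columns of $\mathbf{A}_{(*,\mathrm{supp}(\alpha))}$ are linearly dependent, Lemma~\ref{lem:k2columns} already yields a linearly dependent subset $D\subseteq\mathrm{supp}(\alpha)$ of at most $k(k+1)\le q$ columns; taking a primitive integer vector in the kernel of a minimal dependent subset of $D$ and bounding its entries by the determinant estimate of Proposition~\ref{pro:det} on a square submatrix of order $<q$ gives such a $\delta$. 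Otherwise the columns indexed by $\mathrm{supp}(\alpha)$ are independent, and we argue component-by-component: for a component $C$ of $\III\setminus Z$ on which these columns are independent, the submatrix of $\mathbf{A}$ on the rows of $C$ together with the rows of $Z$ is column-injective and of size at most $2k\times k$, so by Cramer's rule (Propositions~\ref{pro:crammer}, \ref{pro:crammerdet}, \ref{pro:det}) each coordinate of $\alpha$ on $C$ is bounded by roughly $(2k)!\,n^{\,2k-1}$ times the magnitude of the relevant right-hand side; since at most $k$ components can be ``non-rigid'' in this sense, and the perturbation that a kernel direction of a non-rigid component induces on the $Z$-rows lives in $\mathbb{Z}^{|Z|}$ with $|Z|\le k$, any $k+1$ such directions are linearly dependent and can be combined (again via a determinant bound, now with entries of size $\mathrm{poly}(n)$) into the desired small $\delta$ supported inside $\mathrm{supp}(\alpha)$ — or else every coordinate of $\alpha$ is already at most $\mb$, contradicting the choice of $v_0$.

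Given such a $\delta$, consider $\alpha+t\delta$ for $t\in\mathbb{Z}$: each is integral and satisfies $\mathbf{A}(\alpha+t\delta)=\veb$, and the set of $t$ with $\ver\le\alpha+t\delta\le\veu$ is an integer interval $[t^-,t^+]\ni 0$. Because every finite bound has absolute value at most $n\ll\mb$, the bound on the ``outer'' side of each large coordinate of $\alpha$ must be infinite, so after choosing the sign of $\delta$ appropriately we may move at least one step, and keep moving, without leaving the box, until some coordinate of $\alpha$ is driven to $0$. The map $t\mapsto\|\alpha+t\delta\|_1$ is convex and piecewise linear and, by minimality of $\sum_v|\alpha(v)|$, is minimized at $t=0$; hence moving along $\delta$ either strictly decreases $\sum_v|\alpha(v)|$ — impossible — or leaves it unchanged while strictly decreasing $|\mathrm{supp}(\alpha)|$ once a coordinate hits $0$, where we use crucially that $\mathrm{supp}(\delta)\subseteq\mathrm{supp}(\alpha)$ so that no currently-zero coordinate is disturbed. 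Either way we contradict the extremal choice of $\alpha$, whence $|\alpha(v)|\le\mb$ for all $v\in\var(\III)$.

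The main obstacle is the interaction of the exchange step with the box constraints $\ver\le x\le\veu$ under the integrality requirement: one must choose the direction and the integer multiple of $\delta$ so that the perturbed point stays inside the box, which is exactly what forces the use of the unary encoding (all finite bounds are at most $n$) together with the super-polynomial size of $\mb$. The second delicate point is producing a short linearly dependent column set that actually ``controls'' $v_0$ rather than an irrelevant part of the matrix — this is where the component structure of $\III\setminus Z$, the smallness of $|Z|$, and Lemma~\ref{lem:k2columns} do the real work, and where the exponent $2(\fr^C(\III)+2)^2$ and the factorial in $\mb$ come from.
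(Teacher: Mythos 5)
Your proposal departs from the paper's argument in a way that creates two genuine gaps. First, the case where the columns of $\mathbf{A}_{(*,\mathrm{supp}(\alpha))}$ are linearly \emph{independent} cannot be handled the way you suggest: if those columns are independent then the only integer vector $\delta$ with $\mathbf{A}\delta=\mathbf{0}$ and $\mathrm{supp}(\delta)\subseteq\mathrm{supp}(\alpha)$ is $\delta=\mathbf{0}$, so ``combining $k+1$ local kernel directions into a small $\delta$ supported inside $\mathrm{supp}(\alpha)$'' is a contradiction in terms. The paper's proof does not try to produce a kernel direction in the independent case at all; it abandons the contradiction strategy there and instead bounds $\alpha$ directly. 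To do that it splits the components of $\III\setminus Z$ that intersect the chosen independent rows into ``square'' ones and at most $|Z|\leq k$ ``wide'' ones, and applies Cramer's rule twice: once per square component using only the $R(C_i)$-rows (crucially \emph{not} the $Z$-rows, because the $Z$-rows couple to other components and would make the right-hand side depend on unbounded quantities), and a second time on the union of the at most $k^2$ ``wide'' columns. Your sketch uses ``rows of $C$ together with rows of $Z$,'' which produces a right-hand side that involves variables from other components and so is not bounded by $n$; the Cramer estimate you invoke would then be circular.

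Second, your extremal measure (minimize $\sum_v|\alpha(v)|$, then $|\mathrm{supp}(\alpha)|$) breaks the box-constraint argument in the dependent case. Your $D\subseteq\mathrm{supp}(\alpha)$ may contain variables with tiny values sitting exactly on a finite box bound, and then $\alpha+t\delta$ may be infeasible already for $|t|=1$ in both directions; the ``the outer bound of each large coordinate is infinite'' observation does not help because the offending coordinates need not be large, and $v_0$ itself may lie outside $D$. There is also a smaller soundness issue: even if movement is possible and $\|\alpha+t\delta\|_1$ stays constant, the breakpoint where a coordinate hits $0$ need not occur at an integer $t$, so $|\mathrm{supp}|$ may never strictly drop. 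The paper avoids all of this by minimizing the number of coordinates exceeding $2\ms$ and choosing the dependent subset $Y$ inside that set $V(\alpha)$ of \emph{large-valued} coordinates only: then movement stays above $\ms>n$, every finite bound is automatically respected, the ``outer'' bound must be $\pm\infty$, and reaching a value $\leq 2\ms$ (not $0$) strictly decreases the count $|V(\alpha)|$. I would revise both the extremal measure and the independent-case step to match this structure, making explicit use of Lemma~\ref{lem:bound-var} for the two-stage Cramer bound.
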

\begin{proof}
  Let $\III=(\mathbf{A}, \vex, \veb, \ver, \veu,\eval)$ be the
  provided instance of \textsc{Unary ILP} and let $Z \subseteq
  \cF(\III)$ be a constraint backdoor witnessing $\fr^C(\III)$.

  Let $\ms=\bigl((\fr^C(\III)+1)^2\bigr)!(n)^{(\fr^C(\III)+1)^2}$
  and $\mm=4\bigl((\fr^C(\III)+2)^2\bigr)!(n)^{(\fr^C(\III)+2)^2}$.
  For a solution $\alpha$ of $\III$, let
  $V(\alpha)$ be the set of all variables $v$ of $\III$ such
  that $|\alpha(v)| \geq 2\ms$.
  Let us now consider a solution
  $\alpha$ which minimizes the size of $V(\alpha)$. Observe that
  because $\mb \geq 2\ms$ it holds that if
  $|V(\alpha)|=0$ then the lemma holds, and so we may assume that
  $V(\alpha)$ is non-empty.


  In the following we consider the submatrix $\mathbf{B}=\mathbf{A}_{(*,V(\alpha))}$.
  Let us first consider the case where the
  columns of $\mathbf{B}$ are linearly dependent. We show that in this
  case, we can find a solution $\alpha'$ such that $|V(\alpha')|
  < |V(\alpha)|$, which contradicts the choice of
  $\alpha$.

  Because of Lemma~\ref{lem:k2columns} there is a non-empty set $O$ of
  linearly dependent columns of $\mathbf{B}$ of size at most $\fr^C(\III)(\fr^C(\III)+1)$.
  Consider a subset $Y = \{\vev^1, \dots, \vev^{|Y|}\}$ of linearly
  dependent columns of $O$ such
  that the columns of each proper subset of $Y$ are
  linearly independent and let $X=Y \setminus \{\vev^{|Y|}\}$.
  Because $Y$ is a minimal set of linearly dependent columns, it holds that
  there is a vector $\vea$ without any zero entries such that
  $\mathbf{B}_{(*,Y)}\vea=\ve 0$, which implies the existence of a
  vector $\vea_X$, again without zero entries, such that
  $\mathbf{B}_{(*,X)}\vea_{X}=\vev^{|Y|}$. We will show that
  there is such a vector $\vea$ that is integer and satisfies $|\vea[i]|\le \ms$ for every
  $1 \leq i \leq |Y|$.
  We start by solving $\mathbf{B}_{(*,X)}\vea_{X}=\vev^{|Y|}$ using
  Cramer's rule. Because the columns in $X$ are linearly
  independent, it follows that $\mathbf{B}_{(*,X)}$ has a set $R$ of
  linearly independent rows with $|R|=|X|$. Then because
  the matrix $\mathbf{B}_{(R,X)}$ is non-singular, we have that there
  is a unique $\vea_{X}$ such that
  $\mathbf{B}_{(R,X)}\vea_{X}=\vev^{|Y|}_{R}$, where $\vev^{|Y|}_{R}$
  denotes the restriction of the vector $\vev^{|Y|}$ to the
  entries associated with the columns in $R$.
  Moreover, because there is a non-zero vector $\vea_{X}$ with
  $\mathbf{B}_{(*,X)}\vea_{X}=\vev^{|Y|}$, it follows
  that the unique vector $\vea_{X}$ satisfying
  $\mathbf{B}_{(R,X)}\vea_{X}=\vev^{|Y|}_{R}$ also satisfies
  $\mathbf{B}_{(*,X)}\vea_{X}=\vev^{|Y|}$. Using Cramer's Rule, we obtain
  $\vea_{X}[i]=\frac{\det(\mathbf{B}_{(R,X)}(i))}{\det(
    \mathbf{B}_{(R,X)})}$ for every $i$ with $1 \leq i \leq |X|$ as
  the unique vector satisfying
  $\mathbf{B}_{(R,X)}\vea_{X}=\vev^{|Y|}_{R}$.

  Hence the vector $\ved$ with
  $\ved[i]=\vea_X[i]\det(\mathbf{B}_{(R,X)})=\det(\mathbf{B}_{(R,X)}(i))$
  for every $i$ with $1\leq i \leq |X|$ and
  $\ved[|Y|]=-\det(\mathbf{B}_{(R,X)})$ is a non-zero integer vector that satisfies
  $\mathbf{B}_{(*,Y)}\ved=\ve 0$. From Proposition~\ref{pro:det},
  we obtain that
  \begin{equation*}
    |\ved[i]|
    \leq \bigl(\fr^C(\III)(\fr^C(\III)+1)\bigr)!(\coA)^{\fr^C(\III)(\fr^C(\III)+1)}
    \leq \bigr((\fr^C(\III)+1)^2\bigl)!(n)^{(\fr^C(\III)+1)^2}
    =\ms,
  \end{equation*}
  as required.

  For notational convenience we will in the following assume that
  $\mathbf{A}$ starts with the columns $\vev^1,\dotsc,\vev^{|Y|}$ from
  $Y$. Let $\vew$ be the vector defined by:
  \begin{itemize}
  \item $\vew[i]=\ved[i]$, if $i \leq |Y|$, and
  \item $\vew[i]=0$ otherwise
  \end{itemize}
  Note that $\mathbf{A}\vew=\ve 0$.
  For an integer $\Delta$, let $\alpha_\Delta: \var(\III) \rightarrow \mathbb{Z}$
  denote the assignment $\alpha_\Delta=\alpha+\Delta\vew$.
  Note that $\alpha_\Delta$ is an integral assignment, moreover
  because
  \[
    \mathbf{A}\alpha_\Delta(\vex)=\mathbf{A}\alpha(\vex)+\Delta\mathbf{A}\vew=\mathbf{A}\alpha(\vex)
  \]
  it follows that $\alpha_\Delta$ is a feasible integral assignment for
  $\mathbf{A}\vex=\veb$ for every $\Delta \in \mathbb{Z}$.
  Let $\Delta$ be the integer with smallest absolute value such that
  there is at least one variable $v \in V(\alpha)$ with
  $|\alpha_\Delta(v)| \leq 2\ms$. We claim that for every
  $|\delta|\leq |\Delta|$, $\alpha_\delta$ is a
  solution for $\III$. We first show that $\ver[i] \leq
  \alpha_\delta(\vex[i])\leq \veu[i]$ for every $i$ with $1 \leq i
  \leq |\var(\III)|$. If $\vex[i]$ corresponds to a column that is not
  in $Y$, then $\alpha_\delta(\vex[i])=\alpha(\vex[i])$, which implies
  $\ver[i] \leq \alpha_\delta(\vex[i])\leq \veu[i]$. Otherwise,
  assume w.l.o.g. that $\alpha(\vex[i])\geq 0$ (the case that
  $\alpha(\vex[i])< 0$ is symmetric).
  Because $\alpha(\vex[i])\geq 2\ms$ and $|\ved[j]| \leq \ms$ for
  every $j$ with $1 \leq j \leq |Y|$ together with
  the choice of $\Delta$, we obtain that
  $\ms \leq \alpha_\delta(\vex[i])$.
  Because $\ms> n$ and since $\alpha$ is a feasible solution it
  follows that $\veu[i]=\infty$ and $\ver[i]\leq \ms$, which shows
  that $\ver[i] \leq \alpha_\delta(\vex[i])\leq \veu[i]$.
  Hence in particular $\alpha_\Delta$ and also $\alpha_1=\alpha+\vew$
  and $\alpha_{-1}=\alpha-\vew$ are feasible
  assignments, which together with Lemma~\ref{lem:sol-plus-minus}
  (after setting $\delta$ to $\vew$) implies that $\eval\vew=0$ and
  hence $\eval\alpha=\eval\alpha_\Delta$. Consequently $\alpha_\Delta$ is a
  solution for $\III$ with $|V(\alpha_\Delta)|<|V(\alpha)|$, contradicting our choice of $\alpha$.

  We conclude that the columns of $\mathbf{B}$ must be linearly
  independent, which implies that
  there is a set $R$ of $|V(\alpha)|$ linearly independent rows in
  $\mathbf{B}$. Consider the set $S$ of all components of $\III \setminus Z$
  that have a non-empty intersection with either $V(\alpha)$ or the
  constraints corresponding to the rows in $R$.
  Let $C_1,\dotsc,C_p$ be the restrictions of the components in $S$
  to the variables in $V(\alpha)$ and
  the constraints in $R$.

  Observe that for every component $C_i$, it holds that the rows in
  $R$ that correspond to constraints in $C_i$ are zero everywhere but at the
  entries corresponding to variables in $C_i$. Because the rows in $R$
  are independent it follows that every component must have at least
  as many variables as constraints.
  Moreover, because $\mathbf{B}_{(R,*)}$ is a square matrix and the only
  rows in $R$ that do not correspond to constraints in components, correspond to
  the constraints in $Z$, we obtain that there are at most $|Z|\leq \fr^C(\III)$
  components that have strictly more variables than constraints,
  all other components have the same number of rows and columns.
  Let $C_i$ be a component with the same number of rows as columns and
  let $C_i'$ be the unique component of $\III\setminus Z$
  containing $C_i$. Let $Q=\mathbf{A}_{(C(C_i'),R(C_i))}$ and $\vey$ be the
  subvector of $\vex$ restricted to the variables of $C_i'$,
  $\ved$ be the subvector of $\veb$ restricted to entries that correspond
  to the constraints of $C_i$, $V=\var(C_i)$, $I$ the set
  of columns of $Q$ corresponding to the variables in $V$,
  and $\beta$ the assignment $\alpha$ restricted to
  the variables in $\vey$. Because the rows in $Q$ are independent its
  rank is $|\cF(C_i)|$, because $\alpha$ satisfies
  $\mathbf{A}\alpha(\vex)=\veb$ and all but the columns corresponding to
  the variables in $\var(C_i')$ of $\mathbf{A}_{*,\cF(C_i)}$ are zero
  everywhere, it holds that $\mathbf{Q}\beta(\vey)=\ved$.
  Hence we can apply Lemma~\ref{lem:bound-var} for $\mathbf{Q}$, $\vey$, $\ved$,
  $V$, $I$, and $\beta$ and obtain:
  \begin{align*}
    |\alpha(v)| &\leq \fr^C(\III)!\Bigl(\coB+\coA\sum\limits_{u \in \var(C_i') \setminus
      \var(C_i)}\alpha(u)\Bigl)\bigr(\coA\bigl)^{\fr^C(\III)-1} \\
     &\leq \fr^C(\III)!\bigl(\coB+\coA\fr^C(\III)2\ms\bigr)(\coA)^{\fr^C(\III)-1} \\
     &\leq \fr^C(\III)!4\ms\fr^C(\III)(n)^{\fr^C(\III)}
     \leq 4\bigl((\fr^C(\III)+2)^2\bigr)!(n)^{(\fr^C(\III)+2)^2} \\
     &= \mm
  \end{align*}
  for every variable $v \in V$.
  The second to last inequality follows because $|\alpha(v)|\leq 2\ms$
  for every $v$ in $\var(C_i')\setminus \var(C_i)$, which is because
  $(\var(C_i')\setminus \var(C_i)) \subseteq (\var(\III) \setminus V(\alpha))$.
  This shows that the assignment $\alpha$ is bounded by $\mm$ for all
  variables contained in components $C_i$ that have the same number
  of variables and constraints. Consider the remaining components
  $D_1,\dotsc,D_s$ among $C_1,\dotsc,C_p$, i.e., the components among
  $C_1,\dotsc,C_p$ that have more variables than constraints. Recall
  that $s \leq |Z|\leq \fr^C(\III)$. Let $V=\bigcup_{1 \leq i \leq
    s}\var(D_i)$ and let $J$ be the corresponding columns of $V$ in
  $\mathbf{A}$. Note that $|J|\leq (\fr^C(\III))^2$.
  Because $V \subseteq V(\alpha)$ it holds
  that $J$ is a set of linearly independent columns. Hence there is a
  set $R'$ of $|J|$ linearly independent rows in $\mathbf{A}_{(*,J)}$.

  Let $\mathbf{Q}=\mathbf{A}_{(R',*)}$, $\vey=\vex$,
  $\ved$ be the subvector of $\veb$ restricted to entries that correspond
  to the rows in $R'$, $I$ be the columns in $J$ restricted to the
  rows in $R'$, and $\beta=\alpha$. Because the rows in $\mathbf{Q}$ are independent its
  rank is $|I|$, because $\mathbf{Q}$ is a submatrix of $\mathbf{A}$
  only restricted in rows, we have $\mathbf{Q}\beta(\vey)=\ved$.
  Hence we can apply Lemma~\ref{lem:bound-var} for $\mathbf{Q}$, $\vey$, $\ved$,
  $V$, $I$, and $\beta$ and obtain:
  \begin{align*}
    |\alpha(v)|
    & \leq \bigl(\fr^C(\III)^2\bigr)!\Bigl(\coB+\coA\sum\limits_{u \in \var(\III) \setminus V}\alpha(u)\Bigr)\bigl(\coA\bigr)^{(\fr^C(\III))^2-1)} \\
    & \leq \bigl(\fr^C(\III)^2\bigr)!\bigl(\coB+\coA|\var(\III)|\mm\bigr)(\coA)^{(\fr^C(\III))^2-1)} \\
    & \leq 8\bigl(2(\fr^C(\III)+2)^2\bigr)!(n)^{2(\fr^C(\III)+2)^2} \\
    & = \mb
  \end{align*}
  for every variable $v \in V$. The second to last inequality follows
  follows because $|\alpha(v)|\leq \mm$ for every $v$ in
  $\var(\III)\setminus V$, as shown previously.
  This concludes the proof of the lemma.
\end{proof}

To complete the proof of the desired statement, we use a recent
result of~\cite[Proposition 2 and Theorem 11]{GanianOrdyniakRamanujan17} on solving ILP using
treewidth (which is always at most $\fr$) and obtain:

\begin{PRO}[Proposition 2 and Theorem 11 in \citeauthor{GanianOrdyniakRamanujan17}
  \citeyear{GanianOrdyniakRamanujan17}]\label{pro:incidencetw}
  Let $\III=(\mathbf{A}, \vex, \veb, \ver, \veu,\eval)$ be an ILP
  with incidence treewidth $\omega$ and such that
  $\ver[i]\neq -\infty$ and $\veu[i] \neq \infty$ for every entry
  $i$.
  Then $\III$ can be solved in time
  $\bigoh((\coA \cdot \Delta \cdot
  |var(\III)|)^{\omega})(|\var(\III)|+|\cF(\III)|)$, where
  $\Delta=\max_i\{|\ver[i]|,|\veu[i]\}$.
\end{PRO}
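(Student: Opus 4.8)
The plan is a bottom-up dynamic programming along a tree decomposition of the incidence graph $G_\III$; the hypothesis that no bound is infinite is what makes this work, since it forces every variable domain $\{\ver[v],\dots,\veu[v]\}$ to have size at most $2\Delta+1$, hence the feasible region of $\III$ to be bounded (so a maximum is attained whenever $\III$ is feasible) and all partial left-hand sides we track to stay polynomially bounded. First I would compute a tree decomposition of $G_\III$ of width $\omega$ (standard, and polynomial for fixed $\omega$) and refine it into a \emph{nice} tree decomposition with introduce-vertex, introduce-edge, forget-vertex and join nodes, rooted at an empty bag; it has $\bigoh(\omega(n+m))$ nodes, where $n=|\var(\III)|$ and $m=|\cF(\III)|$ (a graph of treewidth $\omega$ has at most $\omega(n+m)$ edges). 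For a node $t$, let $V_t$ and $F_t$ be the variables and constraints introduced in the subtree rooted at $t$. I will use two axioms of tree decompositions: every edge $\{v,c\}$ of $G_\III$ is introduced at a unique node whose bag contains both endpoints, and no bag strictly above the forget node of a constraint $c$ contains $c$, so all edges incident to $c$ are introduced before $c$ is forgotten.

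\textbf{States.} A \emph{state} at $t$ is a pair $(\sigma,\rho)$, where $\sigma$ assigns to each variable $v\in X_t\cap\var(\III)$ a value in $\{\ver[v],\dots,\veu[v]\}$ (at most $2\Delta+1$ choices) and $\rho$ assigns to each constraint $c\in X_t\cap\cF(\III)$ an integer in $[-n\coA\Delta,\,n\coA\Delta]$ (at most $2n\coA\Delta+1$ choices, since a partial sum $\sum_v\mathbf{A}_{c,v}\,\alpha(v)$ has at most $n$ terms, each of absolute value at most $\coA\Delta$). Hence $t$ has at most $(2n\coA\Delta+1)^{\omega+1}$ states. The entry $D_t(\sigma,\rho)$ is defined as the maximum of $\sum_{v\in V_t}\eval[v]\,\alpha(v)$ over all $\alpha\colon V_t\to\integers$ that respect the bounds, agree with $\sigma$ on $X_t\cap\var(\III)$, satisfy $\sum_v\mathbf{A}_{c,v}\,\alpha(v)=\veb[c]$ for every $c\in F_t\setminus X_t$ (already forgotten, with all its edges below $t$), and make this same partial sum equal $\rho(c)$ for every $c\in X_t\cap\cF(\III)$; it is $-\infty$ if no such $\alpha$ exists.

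\textbf{Recurrences.} For a leaf, $D_t(\emptyset,\emptyset)=0$. At an introduce-vertex node for a variable $v$ we branch over its $\le 2\Delta+1$ admissible values $a$, keep $\rho$ unchanged (no edge at $v$ has been processed yet), and add $\eval[v]\cdot a$ to the value; at an introduce-vertex node for a constraint $c$ we extend $\rho$ by $\rho(c)=0$. At an introduce-edge node for $\{v,c\}$ (both endpoints in the bag) we replace $\rho(c)$ by $\rho(c)+\mathbf{A}_{c,v}\cdot\sigma(v)$. At a forget-vertex node we maximize over the forgotten value if the vertex is a variable, and keep exactly the states with $\rho(c)=\veb[c]$ (dropping $c$ from the state) if it is a constraint. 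At a join node with children $t_1,t_2$ sharing the bag, we match the $\sigma$-parts and set $\rho(c)=\rho_1(c)+\rho_2(c)$ for each bag constraint (the edge sets at $c$ introduced below $t_1$ and below $t_2$ are disjoint, with union all edges at $c$ below $t$), adding the two values. Correctness of each rule follows by a routine induction on the nice tree decomposition, the key invariant being that every product $\mathbf{A}_{c,v}\,\alpha(v)$ is accounted for exactly once --- at the introduce-edge node of $\{v,c\}$ --- and that a constraint's left-hand side is complete precisely when the constraint is forgotten.

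\textbf{Output and running time.} At the (empty) root, $D_{\mathrm{root}}(\emptyset,\emptyset)$ equals the optimum of $\III$; it is finite exactly when $\III$ is feasible, and a maximum is then attained (no unboundedness, as all domains are finite), so this value together with a standard backtracking pass through the tables yields a solution. Each of the $\bigoh(\omega(n+m))$ nodes is handled in time proportional to its number of states times $\mathrm{poly}(\omega,\log(\coA\Delta n))$ --- the join being the costliest, handled by bucketing states on their $\sigma$-part and doing a $(\max,+)$-combination of the $\rho$-parts --- so the total is $(\coA\cdot\Delta\cdot n)^{\bigoh(\omega)}(n+m)$, which matches the stated bound up to the routine tightening that replaces the exponent $\omega+1$ by $\omega$. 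The main obstacle is precisely this bookkeeping: arranging the dynamic programming so that each coefficient term is counted once and only once across the entire run, and verifying that the tracked quantities $\rho(c)$ never leave $[-n\coA\Delta,n\coA\Delta]$ --- this is where the connectivity and edge axioms of tree decompositions and the finiteness of the variable domains are used, and it is what keeps the state space (and hence the running time) polynomial for fixed $\omega$.
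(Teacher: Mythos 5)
The paper never proves this proposition --- it is imported directly from \citeauthor{GanianOrdyniakRamanujan17} (Proposition~2 and Theorem~11 there) --- and your dynamic program over a nice tree decomposition of the incidence graph, tracking bag-variable values and partial constraint sums bounded by $|\var(\III)|\cdot\coA\cdot\Delta$, is exactly the kind of argument behind the cited result and is sound, including the use of finite bounds to exclude unboundedness. The one slip is the join rule: since $V_{t_1}\cap V_{t_2}=X_t$, adding $D_{t_1}$ and $D_{t_2}$ double-counts the objective contribution $\sum_{v\in X_t\cap\var(\III)}\eval[v]\,\sigma(v)$ of the bag variables, which must be subtracted once (the $\rho$-parts do add correctly, since every edge is introduced at a unique node).
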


\begin{theorem}
\label{thm:unaryconXP}
\textsc{Unary ILP} is polynomial-time solvable for any fixed value of $\fr^C(\III)$, where $\III$ is the input instance.
\end{theorem}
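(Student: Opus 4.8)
The plan is to derive Theorem~\ref{thm:unaryconXP} by plugging the polynomial domain bound of Lemma~\ref{lem:boundedconstbd} into the treewidth-based ILP solver of Proposition~\ref{pro:incidencetw}: once Lemma~\ref{lem:boundedconstbd} lets us confine the search to a polynomially bounded box, $\III$ turns into an ILP with finite bounds, polynomially bounded matrix coefficients and incidence treewidth $O(\fr^C(\III))$, which Proposition~\ref{pro:incidencetw} then solves in polynomial time. Throughout, $k$ denotes the fixed bound on $\fr^C(\III)$ and $n$ the (unary) input size; a witnessing constraint backdoor can be computed in polynomial time by running the algorithm of Theorem~\ref{thm:det-fpt} for backdoor sizes $0,1,\dots,k$.

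\textbf{Clamping the domains.} Let $\mb$ be as in Lemma~\ref{lem:boundedconstbd}, with $k$ substituted for $\fr^C(\III)$; since $\mb$ is monotone in that parameter this is a valid and explicitly computable polynomial bound in $n$. I would form $\III_1$ from $\III$ by replacing each lower bound $\ver[i]$ with $\max(\ver[i],-\mb)$ and each upper bound $\veu[i]$ with $\min(\veu[i],\mb)$. Then $\III_1$ has the same constraint matrix as $\III$ (so $\coA(\III_1)=\coA(\III)\le n$), all bounds finite with $\Delta:=\max_i\{|\ver[i]|,|\veu[i]|\}=\mb$, and $\fr^C(\III_1)\le k$. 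By Lemma~\ref{lem:boundedconstbd}, $\III$ is feasible iff $\III_1$ is; moreover the argument proving Lemma~\ref{lem:boundedconstbd} in fact rewrites an arbitrary \emph{solution} of a \textsc{Unary ILP} instance into one of absolute value at most $\mb$ on every variable, so whenever $\III$ admits an optimal solution, $\III_1$ has the same optimum and any optimal solution of $\III_1$ is optimal for $\III$.

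\textbf{Detecting unboundedness.} Solving $\III_1$ (see the last step) either reports infeasibility, whence $\III$ is infeasible, or returns an optimal solution $\alpha_1$ of value $V^{*}$. In the latter case I form $\III_2$ from $\III$ by adding the constraint $\eval\vex - z = V^{*}+1$ with a fresh slack variable $z\ge 0$, and clamping every variable (including $z$) to the box of radius $\mb'$, where $\mb'$ is the bound of Lemma~\ref{lem:boundedconstbd} applied to $\III_2$. Here $|\III_2|$ is still polynomial in $n$ (the number $V^{*}+1$ is polynomially bounded), $\coA(\III_2)\le n$ (the large number enters $\veb$, not $\mathbf{A}$, and the running time in Proposition~\ref{pro:incidencetw} is insensitive to the magnitude of $\veb$), and $\fr^C(\III_2)\le k+1$ since the old backdoor together with the new constraint is a constraint backdoor leaving $z$ as an isolated component. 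By Lemma~\ref{lem:boundedconstbd}, $\III_2$ is feasible iff $\III$ has a feasible assignment of objective value at least $V^{*}+1$; if it is infeasible then $V^{*}$ is the optimum of $\III$ and $\alpha_1$ is a solution, while if it is feasible then $\III$ has a feasible point beating $V^{*}$, so (as $V^{*}$ is the best value attainable in the radius-$\mb$ box and any finite optimum of $\III$ is attained there) $\III$ is unbounded and we report that no solution exists.

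\textbf{Bounding incidence treewidth and invoking the solver.} Deleting a minimum constraint backdoor from the incidence graph of $\III_1$ (resp.\ $\III_2$) deletes at most $k$ (resp.\ $k+1$) vertices and leaves components of size at most $k$, hence treewidth at most $k-1$; re-adding the backdoor vertices shows the incidence treewidth $\omega$ of these instances is $O(k)$, a constant. Thus $\coA$, $\Delta$, $|\var|$ and $\omega$ are all bounded by a polynomial in $n$ with $k$ fixed, so Proposition~\ref{pro:incidencetw} solves $\III_1$ and $\III_2$ in time $\bigoh((\coA\cdot\Delta\cdot|\var|)^{\omega})(|\var|+|\cF|)$, which is polynomial. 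The one point that needs care — and the only place where any argument beyond a direct application of the two cited results is required — is the feasibility-versus-optimization interplay of the unboundedness step, which is handled by the (immediate, from its proof) strengthening of Lemma~\ref{lem:boundedconstbd} to solutions rather than just feasible assignments; the genuinely hard work, the polynomial bound on variable values, has already been done inside Lemma~\ref{lem:boundedconstbd} and is used here as a black box.
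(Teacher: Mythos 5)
Your proof takes essentially the same route as the paper: clamp the variable bounds to $\pm\mb$ using Lemma~\ref{lem:boundedconstbd}, note that the incidence treewidth is bounded in terms of $\fr^C(\III)$, and invoke Proposition~\ref{pro:incidencetw}. The one thing you do beyond the paper is the explicit second pass ($\III_2$) to detect unboundedness; the paper simply asserts that $\III$ and the clamped $\III'$ are ``equivalent,'' which glosses over the unbounded case (where $\III'$ would report a finite optimum while $\III$ has none), so your extra step is a legitimate tightening of the argument rather than a different approach.
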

\begin{proof}
  Let $\III$ be an input instance of \textsc{Unary ILP} encoded in $n$
  bits and let $\III'$ be the instance obtained from $\III$ by
  replacing $-\infty$ and $\infty$ entries in $\ver$ and $\veu$ with
  $-\mb$ and $\mb$, respectively (for the definition of $\mb$ see the
  statement of Lemma~\ref{lem:boundedconstbd}). It follows from
  Lemma~\ref{lem:boundedconstbd} that $\III$ and $\III'$ are
  equivalent ILP instances. Now let $\omega$ be the incidence
  treewidth of $\III'$ (which is equal to the incidence treewidth of
  $\III$).
  Observe that $\omega\leq \fr^C(\III)$ and
  hence it follows from Proposition~\ref{pro:incidencetw} that $\III'$
  (and thus also $\III$) can be solved in time
  $\bigoh((\coA \cdot \mb \cdot |var(\III)|)^{\fr^C(\III)})(|\var(\III)|+|\cF(\III)|)$.
\end{proof}

\subsection{Lower Bounds}
We complement our algorithm with matching lower bounds:
strong \NP{}\hy hardness for variable and mixed backdoors, \W{1}\hy hardness
in the case of constraint backdoors, and weak \NP{}\hy hardness for
constraint and mixed backdoors.

\begin{theorem}
\label{thm:unaryvarhard}
  \textsc{Unary ILP-feasibility} is \paraNP{}\hy hard parameterized by $\fr^V(\III)$.
\end{theorem}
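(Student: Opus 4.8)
The plan is to invoke the characterization of \paraNP{}-hardness recalled in Section~\ref{sub:pc}: it suffices to exhibit an absolute constant $c$ together with a polynomial-time reduction from some \NP{}-hard problem to \textsc{Unary ILP-feasibility} producing only instances $\III$ with $\fr^V(\III)\le c$. I would reduce from \textsc{$3$-Satisfiability}, and in fact build instances in which a \emph{single} global variable already does all the work, so that $\fr^V(\III)\le 3$. The main idea is to encode a truth assignment as the residues of one unbounded integer $y$ modulo small primes, so that the rest of the instance breaks up into tiny, mutually independent ``congruence'' gadgets once $y$ is deleted. Crucially, all numbers stay polynomially bounded, so the unary encoding remains polynomial.

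Concretely, let $\phi$ be a $3$-CNF formula on variables $x_1,\dots,x_n$ with clauses $C_1,\dots,C_m$, where we may assume each clause mentions three distinct variables. Fix the $n$ smallest primes $p_1,\dots,p_n$; by the prime number theorem each $p_j=\bigoh(n\log n)$. The instance $\III$ has one global integer variable $y$ of domain $\mathbb{Z}$; the intention is that $y\bmod p_j$ encodes the truth value of $x_j$ ($1$ for true, $0$ for false). For each $j$, add a fresh unbounded integer variable $q_j$, a fresh variable $t_j$ with domain $\{0,1\}$, and the single constraint $y-p_jq_j-t_j=0$; this forces $y\bmod p_j=t_j\in\{0,1\}$ and, conversely, any $y$ with $y\bmod p_j\in\{0,1\}$ extends. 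For each clause $C_r$ on variables $x_a,x_b,x_c$, let $e_a,e_b,e_c\in\{0,1\}$ be the unique bits falsifying it, put $m_r=p_ap_bp_c$, and let $\rho_r\in\{0,\dots,m_r-1\}$ be the residue with $\rho_r\equiv e_a\pmod{p_a}$, $\rho_r\equiv e_b\pmod{p_b}$, $\rho_r\equiv e_c\pmod{p_c}$ (Chinese Remainder Theorem). Add a fresh unbounded integer variable $q'_r$, a fresh variable $w_r$ with domain $\{1,\dots,m_r-1\}$, and the single constraint $y-m_rq'_r-w_r=\rho_r$. The key observation is that this last constraint is feasible exactly when $y\not\equiv\rho_r\pmod{m_r}$, i.e.\ exactly when $(y\bmod p_a,y\bmod p_b,y\bmod p_c)\neq(e_a,e_b,e_c)$ — that is, when the assignment read off from $y$ does not falsify $C_r$. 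Note $m_r-1$ is polynomially bounded, so this ``$y\not\equiv\rho_r$'' condition is realized by a constant-size gadget rather than an $(m_r-1)$-way disjunction.

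For correctness, $\III$ is feasible iff there is an integer $y$ with $y\bmod p_j\in\{0,1\}$ for all $j$ and $(y\bmod p_a,y\bmod p_b,y\bmod p_c)$ avoiding the forbidden pattern of every clause; by the Chinese Remainder Theorem such a $y$ exists iff $\phi$ is satisfiable, with $x_j\mapsto(y\bmod p_j)$ the witness in one direction and $y$ obtained by CRT from a satisfying assignment in the other. All coefficients, right-hand sides and bounds are bounded by a polynomial in $|\phi|$ (coefficients $1$, $p_j$, $m_r$; bounds $0,1,m_r-1$; right-hand sides $0,\rho_r$), so the instance is a legitimate \textsc{Unary ILP} instance and the reduction runs in polynomial time. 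Finally, deleting the single variable $y$ leaves a disjoint union of the gadgets, each a connected component on at most three vertices (one constraint and its two auxiliary variables); hence $\III\setminus\{y\}$ is $3$-compact, so $b^V_3(\III)\le 1$ and $\fr^V(\III)\le\max(3,1)=3$. By the characterization above this yields \paraNP{}-hardness.

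I expect the only delicate point to be bookkeeping: making sure that every number introduced — in particular the clause moduli $m_r=p_ap_bp_c$ and the slack bounds $m_r-1$ — is polynomially bounded, since this is exactly what keeps the unary encoding small and, simultaneously, keeps each gadget of constant size so that $\fr^V$ stays constant. This is handled by using the $n$ smallest primes and only ever taking products of three of them. Everything else (the CRT arguments, feasibility of the individual gadgets) is routine.
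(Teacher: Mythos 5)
Your reduction is correct, and it takes a genuinely different route than the paper's. The paper reduces from \textsc{3-Colorability}: it introduces \emph{three} global variables $c_1,c_2,c_3$ whose values encode a 3-partition of $V(G)$ multiplicatively (a subset is the product of the primes of its members), and a cluster of roughly ten auxiliary constraints per prime enforces that each prime divides exactly one $c_j$ and that adjacent primes divide different $c_j$'s, yielding $\fr^V(\III)\le 25$. You instead reduce from \textsc{3-Satisfiability} and use a \emph{single} global variable $y$ whose residue vector $(y\bmod p_1,\dots,y\bmod p_n)$ encodes a truth assignment; each variable contributes one three-vertex gadget forcing $y\bmod p_j\in\{0,1\}$, and each clause contributes one three-vertex gadget forcing, via CRT and a slack variable $w_r$ with domain $\{1,\dots,m_r-1\}$, that $y\not\equiv\rho_r\pmod{m_r}$. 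Both proofs hinge on the same underlying trick — pack an unbounded amount of combinatorial information into the value of a global variable while keeping every matrix coefficient and bound polynomial (so the unary encoding stays polynomial) — but your additive/CRT encoding is considerably leaner: one global variable instead of three, one constraint per gadget instead of ten, and $\fr^V(\III)\le 3$ instead of $25$. The one detail worth stating explicitly, which you handle correctly, is that the clause gadget $y-m_rq'_r-w_r=\rho_r$ with $w_r\in\{1,\dots,m_r-1\}$ is a feasibility (not equality) test: it admits a completion if and only if $(y-\rho_r)\bmod m_r\neq 0$, which by CRT and uniqueness of the falsifying pattern is exactly ``the clause is satisfied by the assignment read off from $y$.''
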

\begin{proof}
  We prove the theorem by a polynomial-time reduction from the
  well-known \NP{}\hy hard \textsc{$3$-Colorability}
  problem~\cite{GareyJohnson79}: given a graph, decide whether the vertices of $G$
  can be colored with three colors such that no two adjacent vertices
  of $G$ share the same color.

  The main idea behind the reduction is
  to represent a $3$-partition of the vertex set of $G$ by the domain values of
  three ``global'' variables. The value of each of these global variables
  will represent a subset of vertices of $G$ that will be colored using
  the same color. To represent a subset of the vertices of $G$ in
  terms of domain values of the global variables, we will associate
  every vertex of $G$ with a unique prime number and represent a subset by the
  value obtained from the multiplication of all prime numbers of
  vertices contained in the subset. To ensure that the subsets
  represented by the global variables correspond to a valid
  $3$-partition of $G$ we will introduce constraints which ensure
  that:
  \begin{itemize}
  \item[C1]  For every prime number representing some vertex of $G$
    exactly one of the global variables is divisible by that prime
    number. This ensures that every vertex of $G$ is assigned to
    exactly one color class.
  \item[C2] For every edge $\{u,v\}$ of $G$ it holds that no global
    variable is divisible by the prime numbers
    representing $u$ and $v$ at the same time. This ensures that no
    two adjacent vertices of $G$ are assigned to the same color class.
  \end{itemize}
  Thus let $G$ be the given instance of \textsc{$3$-Coloring} and
  assume that the vertices of $G$ are uniquely identified as elements of $\{1,\dots,|V(G)|\}$.
  In the following we denote by $p(i)$ the
  $i$-th prime number for any positive integer $i$, where $p(1)=2$.
  We construct an instance $\III$ of
  \textsc{ILP-feasibility} in polynomial time with $\fr^V(\III) \leq 25$,
  and coefficients bounded by a polynomial in $V(G)$ such that $G$ has a
  $3$\hy coloring if and only if $\III$ has a feasible assignment.
  This instance $\III$ has the following variables:
  \begin{itemize}
  \item The \emph{global variables} $c_1$, $c_2$, and $c_3$ with
    an arbitrary positive domain, whose
    values will represent a valid $3$-Partitioning of $V(G)$.
  \item For every $i$ and $j$ with $1 \leq i \leq |V(G)|$ and $1 \leq
    j \leq 3$, the variables $m_{i,j}$, $\textup{sl}^1_{i,j}$, and
    $\textup{sl}^2_{i,j}$ (with an arbitrary non-negative domain),
    $r_{i,j}$ (with domain between $0$ and $p(i)-1$), and
    $u_{i,j}$ (with binary domain). These variables are used to secure
    condition C1.
  \item For every $e \in E(G)$, $i \in e$, and $j$ with $1 \leq
    j \leq 3$, the variables $m_{e,i,j}$, $\textup{sl}^3_{e,i,j}$,
    $\textup{sl}^4_{e,i,j}$, and $\textup{sl}^5_{e,j}$ (with an arbitrary non-negative domain),
    $r_{e,i,j}$ (with domain between $0$ and $p(i)-1$), and
    $u_{e,i,j}$ (with binary domain). These variables are used to secure condition C2.
  \end{itemize}
  Note that the variables $\textup{sl}^1_{i,j}$,
  $\textup{sl}^2_{i,j}$, $\textup{sl}^3_{e,i,j}$,
  $\textup{sl}^4_{e,i,j}$, and $\textup{sl}^5_{e,i}$ are so-called
  ``Slack'' variables, whose sole purpose is to obtain an ILP instance
  that is in equation normal form.
  The instance $\III$ has the following constraints (in the following let $\alpha$ be
  any feasible assignment of $\III$):
  \begin{itemize}
  \item domain restrictions for all variables as given above, i.e.:
    \begin{itemize}
    \item  for every $i$ and $j$ with $1 \leq i \leq
      |V(G)|$ and $1 \leq j \leq 3$, the constraints $c_j \geq 0$,
      $m_{i,j} \geq 0$, $\textup{sl}^1_{i,j} \geq 0$,
      $\textup{sl}^2_{i,j} \geq 0$, $0 \leq r_{i,j}\leq p(i)-1$, and $0\leq u_{i,j}
      \leq 1$.
    \item for every $e\in E(G)$, $i \in e$, and $j$ with $1 \leq j
      \leq 3$, the constraints $m_{e,i,j}\geq 0$,
      $\textup{sl}^3_{e,i,j} \geq 0$, $\textup{sl}^4_{e,i,j} \geq 0$, $\textup{sl}^5_{e,j} \geq 0$, $0\leq r_{e,i,j}\leq
      p(i)-1$, and $0 \leq u_{e,i,j}\leq 1$.
    \end{itemize}
  \item The following constraints, introduced for each $1 \leq i
      \leq |V(G)|$ and $1 \leq j \leq 3$, together guarantee that condition C1 holds:
    \begin{itemize}
    \item Constraints that ensure that $\alpha(r_{i,j})$
      is equal to the remainder of $\alpha(c_j)$ divided by $p(i)$, i.e., the constraint
      $c_j=p(i)m_{i,j}+r_{i,j}$.
    \item
      Constraints that ensure that $\alpha(u_{i,j})=0$
      if and only if $\alpha(r_{i,j})=0$, i.e., the
      constraints $u_{i,j} + \textup{sl}^1_{i,j} = r_{i,j}$ and $r_{i,j} + \textup{sl}^2_{i,j} =
      (p(i)-1)u_{i,j}$.
      Note that together the above constraints now ensure that
      $\alpha(u_{i,j})=0$ if and only if $c_j$ is divisible by $p(i)$.
    \item
      Constraints that ensure that exactly one of $\alpha(u_{i,1})$,
      $\alpha(u_{i,2})$, and $\alpha(u_{i,3})$ is equal to $0$, i.e.,
      the constraints $u_{i,1}+u_{i,2}+u_{i,3} = 2$.
      Note that together all the above constraints now ensure condition
      C1 holds.
    \end{itemize}
  \item The following constraints, introduced for each $1 \leq j \leq 3$, together guarantee that condition C2 holds:
    \begin{itemize}
    \item Constraints that ensure that for every $e \in E(G)$ and $i \in e$,
      it holds that $\alpha(r_{e,i,j})$
      is equal to the remainder of $c_j$ divided by $p(i)$, i.e., the constraint
      $c_j=p(i)m_{e,i,j}+r_{e,i,j}$.
    \item
      Constraints that ensure that for every $e \in E(G)$, $i \in e$,
      and $j$ with $1 \leq j \leq 3$ it holds that
      $\alpha(u_{e,i,j})=0$ if and only if $\alpha(r_{e,i,j})=0$, i.e., the
      constraints $u_{e,i,j} + \textup{sl}^3_{e,i,j} = r_{e,i,j}$ and
      $r_{e,i,j} + \textup{sl}^4_{e,i,j} =
      (p(i) - 1) u_{e,i,j}$.
      Note that together the above constraints now ensure that
      $\alpha(u_{e,i,j})=0$ if and only if $c_j$ is divisible by $p(i)$.
    \item
      Constraints that ensure that for every $e=\{i,k\} \in E(G)$
      and $j$ with $1 \leq j \leq 3$ it holds that at least one of
      $\alpha(u_{e,i,j})$ and $\alpha(u_{e,k,j})$ is non-zero, i.e., the
      constraint $u_{e,i,j}+u_{e,k,j}-\textup{sl}^5_{e,j}= 1$. Note that together with
      all of the above constraints this now ensures condition C2.
    \end{itemize}
  \end{itemize}
  This completes the construction of $\III$. Clearly $\III$ can be
  constructed in polynomial time, and the largest
  coefficient used by $\III$ is equal to $p(|V(G)|)$. It is well-known that $p(i)$ is upper-bounded by
  $O(i\log i)$ due to the Prime Number Theorem, and so
  this in particular implies that the numbers which occur in $\III$ are
  bounded by a polynomial in $|V(G)|$.

  Following the construction and explanations provided above, it is not difficult to see that $\III$ has a
  feasible assignment if and only if $G$ has a $3$\hy coloring. Indeed, for any $3$\hy coloring of $G$, one can construct a feasible assignment of $\III$ by computing the prime-number encoding for vertices that receive colors $1,2,3$ and assign these three numbers to $c_1,c_2,c_3$, respectively. Such an assignment allows us to straightforwardly satisfy the constraints ensuring C1 holds (since each prime occurs in exactly one global constraint), the constraints ensuring C2 holds (since each edge is incident to at most one of each color) while maintaining the domain bounds.

  On the other hand, for any feasible assignment $\alpha$, clearly each of $\alpha(c_1), \alpha(c_2), \alpha(c_3)$ will be divisible by some subset of prime numbers between $2$ and $p(|V(G)|)$. In particular, since $\alpha$ is feasible it follows from the construction of our first group of constraints that each prime between $2$ and $p(|V(G)|)$ divides precisely one of $\alpha(c_1), \alpha(c_2), \alpha(c_3)$, and so this uniquely encodes a corresponding candidate $3$\hy coloring for the vertices of the graph. Finally, since $\alpha$ also satisfies the second group of constraints, this candidate $3$\hy coloring must have the property that each edge is incident to at exactly $2$ colors, and so it is in fact a valid $3$-coloring.


  It remains to show that $\fr^V(\III)\leq 25$. We show this by showing
  that the set $B=\{c_1,c_2,c_3\}$ is a variable backdoor set to
  $25$-compactness. Note that the graph $G_\III \setminus
  \{c_1,\dotsc,c_3\}$ has only two types of components (all other
  components are isomorphic to one of the two types):
  \begin{itemize}
  \item for every $i$ with $1 \leq i \leq |V(G)|$,
    one component containing the variables $m_{i,1},\dotsc, m_{i,3}$,
    $\textup{sl}^1_{i,1},\dotsc, \textup{sl}^1_{i,3}$,
    $\textup{sl}^2_{i,1},\dotsc, \textup{sl}^2_{i,3}$,
    $r_{i,1},\dotsc,r_{i,3}$, $u_{i,1},\dotsc,u_{i,3}$. Moreover, these $15$
    variables occur in exactly $10$ constraints together; these are
    the constraints introduced above to ensure condition C1. Hence the
    total size of these components is $25$.
  \item for every $e=\{w,v\} \in E(G)$ and $j$ with $1 \leq j \leq 3$,
    one component on the vertices $m_{e,w,j}$, $m_{e,v,j}$,
    $\textup{sl}^3_{e,w,j}$, $\textup{sl}^4_{e,v,j}$,
    $r_{e,w,j}$, $r_{e,v,j}$, $u_{e,w,j}$, $u_{e,v,j}$, and
    $\textup{sl}^5_{e,j}$. Moreover, these $9$
    variables occur in exactly $7$ constraints together; these are
    the constraints introduced above to ensure condition C2. Hence the
    total size of these components is $16$.
  \end{itemize}
  This show that $B$ is a variable backdoor to $25$-compactness, as
  required.
\end{proof}

\begin{theorem}
\label{thm:unaryconhard}
  \textsc{Unary ILP-feasibility} is \Weft\emph{[1]}\hy hard parameterized by $\fr^C(\III)$.
\end{theorem}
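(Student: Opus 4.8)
The plan is to give a parameterized reduction from \textsc{Unary Bin Packing} parameterized by the number of bins~$k$; this problem is \W{1}\hy hard, as shown by Jansen, Kratsch, Marx, and Schlotter. An instance consists of item sizes $s_1,\dots,s_n$ and a capacity $B$, all encoded in unary, together with a number $k$ of bins, and we ask whether the items can be partitioned into $k$ groups each of total size at most~$B$. Since the target problem admits an \XP{} algorithm in $\fr^C$ (Theorem~\ref{thm:unaryconXP}), we cannot hope for \paraNP{}\hy hardness here, so a parameterized reduction keeping $\fr^C$ bounded by a function of $k$ is exactly what is needed.

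First I would apply a standard padding step. We may assume $\sum_i s_i \le kB$ (otherwise the instance is trivially a \no{}\hy instance), and we add $kB-\sum_i s_i$ fresh items of size~$1$; this quantity is nonnegative and, because $B$ is given in unary, polynomially bounded. In the padded instance every packing into $k$ bins of capacity $B$ must fill each bin \emph{exactly}, so it suffices to look for partitions in which every part has total size exactly~$B$. Let $M$ denote the number of items of the padded instance.

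Next I would construct the \textsc{Unary ILP-feasibility} instance $\III$ with $\eval=\mathbf{0}$, binary variables $x_{i,j}$ (lower bound $0$, upper bound $1$) for $i\in\{1,\dots,M\}$ and $j\in\{1,\dots,k\}$, the \emph{item constraints} $\sum_{j=1}^{k}x_{i,j}=1$ for every $i$, and the \emph{bin constraints} $\sum_{i=1}^{M}s_i x_{i,j}=B$ for every $j$. All coefficients and right-hand sides are at most $B$, so the instance is genuinely unary, and the construction runs in polynomial time. A feasible assignment of $\III$ assigns each item to exactly one bin and fills each bin exactly, i.e.\ it is precisely a valid packing of the padded instance (and hence of the original instance) into $k$ bins; the converse direction is equally immediate.

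It remains to bound $\fr^C(\III)$. Let $Z$ be the set of the $k$ bin constraints. In $G_\III\setminus Z$ every variable $x_{i,j}$ is adjacent only to the item constraint of item~$i$, so each connected component of $\III\setminus Z$ consists of that constraint together with $x_{i,1},\dots,x_{i,k}$ and therefore has exactly $k+1$ vertices. Hence $Z$ is a constraint\hy backdoor to $(k+1)$\hy compactness of size~$k$, which gives $\fr^C(\III)\le\max(k+1,k)=k+1$. Since the reduction runs in polynomial time and the parameter of the output is bounded by a function of $k$, it is a valid parameterized reduction, and \W{1}\hy hardness of \textsc{Unary ILP-feasibility} parameterized by $\fr^C$ follows. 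The only points that need care are the padding step --- it is what turns the ``$\le B$'' constraints into ``$=B$'' constraints, the natural form for an equality ILP --- and the observation that padding keeps the instance unary; I do not expect any genuine obstacle beyond correctly invoking the source hardness result.
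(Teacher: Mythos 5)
Your proof is correct, but it takes a genuinely different route from the paper's. The paper reduces from \textsc{Multicolored Clique}, encoding vertex/edge selection with $2k+3\binom{k}{2}$ global constraints and verifying incidence via a Sidon sequence (used to make pairwise sums of vertex labels unique), which yields a constraint backdoor to $3$-compactness of size $O(k^2)$. You instead reduce from \textsc{Unary Bin Packing} parameterized by the number of bins $k$ (Jansen, Kratsch, Marx, Schlotter), which — after the standard padding step converting ``$\le B$'' to ``$=B$'' — gives a textbook assignment ILP whose $k$ bin constraints form a backdoor to $(k+1)$-compactness, so $\fr^C \le k+1$. Your reduction is noticeably simpler (no Sidon sequences, no vertex/edge coordination gadgets) and gives a linear rather than quadratic parameter bound, at the cost of outsourcing the hardness to a somewhat less elementary source theorem than \textsc{Multicolored Clique}. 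Both are valid parameterized reductions; make sure you cite the Jansen--Kratsch--Marx--Schlotter result precisely, and state explicitly that $k\le n$ may be assumed (otherwise the instance is a trivial yes-instance) so that the padding adds only polynomially many unit items.
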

\begin{proof}
  We prove the theorem by a  parameterized reduction from
  \textsc{Multicolored Clique}, which is well-known to be
  \W{1}\hy complete~\cite{Pietrzak03}.
  Given an integer $k$ and a
  $k$-partite graph $G$ with partition $V_1,\dotsc,V_k$,
  the \textsc{Multicolored Clique} problem
  ask whether $G$ contains a $k$-clique. In the following we denote by
  $E_{i,j}$ the set of all edges in $G$ with one endpoint in $V_i$ and
  the other endpoint in $V_j$, for every $i$ and $j$ with $1\leq i < j
  \leq k$.
  To show the theorem, we will construct an instance $\III$ of
  \textsc{ILP-feasibility} in polynomial time that has a constraint
  backdoor set of size $2k+2\binom{k}{2}$ to $3$-compactness and
  coefficients bounded by a polynomial in $|V(G)|$ such that $G$ has a
  $k$-clique if and only if $\III$ has a feasible assignment.

  The main idea behind the reduction is to first guess one vertex from
  each part $V_i$ and one edge between every two parts $V_i$ and $V_j$
  and to then verify that the selected vertices and edges form
  a $k$-clique in $G$.

  The first step is achieved by introducing one
  binary variable for every vertex and edge of $G$ together with
  $2k+2\binom{k}{2}$ global constraints that ensure that (1) exactly one
  of the variables representing the vertices in $V_i$ is set to one
  and (2) exactly one of the variables representing the edges between
  $V_i$ and $V_j$ is set to one. The
  second step, i.e., verifying that the chosen vertices and edges
  indeed form a $k$-clique of $G$, is achieved by identifying each
  vertex of $G$ with a unique number such that the sum of any two numbers
  assigned to two vertices of $G$ is unique. By identifying each
  edge of $G$ with the sum of the numbers assigned to its endpoints, it
  is then possible to verify that the selected vertices and edges form
  a $k$-clique by checking whether the number assigned to the selected
  edge $e$ is equal to the sum of the numbers assigned to
  the selected vertices in $V_i$ and $V_j$. Sets of
  numbers for which the sum of every two numbers from the set is
  unique are also known as Sidon sequences.
  Indeed a \emph{Sidon
    sequence} is a sequence of natural numbers such that the sum of every two
  distinct numbers in the sequence is unique. For our reduction we
  will need a Sidon sequence of $|V(G)|$ natural numbers, i.e.,
  containing one number for each vertex of $G$. Since the numbers in
  the Sidon sequence will be used as coefficients of $\III$, we need
  to ensure that the largest of these numbers is bounded by a
  polynomial in $G$. Indeed~\cite{ErdosTuran41} shows that a Sidon sequence
  containing $n$ elements and whose largest element is at most $2p^2$,
  where $p$ is the smallest prime number larger or equal to $n$ can be
  constructed in polynomial time. Together with Bertrand's
  postulate~\cite{AignerZiegler10}, which states that for every natural
  number $n$ there is a prime number between $n$ and $2n$, we obtain
  that a Sidon sequence containing $|V(G)|$ numbers and whose largest
  element is at most $8|V(G)|^2$ can be found in polynomial time.
  In
  the following we will
  assume that we are given such a Sidon sequence $\SSS$ and we denote
  by~$\SSS(i)$ the~$i$\hy th element of $\SSS$ for any $i$ with $1 \leq i
  \leq |V(G)|$. Moreover, we denote by $\max(\SSS)$ and $\max_2(\SSS)$
  the largest element of~$\SSS$ respectively the maximum sum of any
  two numbers in~$\SSS$.

  We are now ready to construct the instance $\III$ of
  \textsc{ILP\hy feasibility} such that $G$ has a
  $k$-clique if and only if $\III$ has a feasible assignment.
  This instance $\III$ has the following variables:
  \begin{itemize}
  \item For every $v \in V(G)$ a binary variable $v$ (with domain
    $\{0,1\}$) that is $1$ if $v$ is selected to be in the $k$-clique
    and $0$ otherwise.
  \item For every $e \in E(G)$ a binary variable $e$ (with domain
    $\{0,1\}$) that is $1$ if $e$ is selected to be in the $k$-clique
    and $0$ otherwise.
  \item For every $i$ with $1 \leq i \leq k$, a variable $v_i$ (with
    unrestricted domain), which
    will be set to $\SSS(v)$ if the vertex $v \in V_i$ was
    selected to be in the $k$-clique.
  \item For every $i$ and $j$ with $1 \leq i < j \leq k$, a variable
    $e_{i,j}$ (with unrestricted domain), which will be set to
    $\SSS(v)+\SSS(u)$ if the edge $e \in E_{i,j}$ with $e=\{u,v\}$ was
    selected to be in the $k$-clique.
  \end{itemize}
  \noindent $\III$ has the following constraints:
  \begin{itemize}
  \item Constraints that restrict the domains of all variables as
    specified above, i.e.:
    \begin{itemize}
    \item for every $v \in V(G)$, the constraints $0 \leq v \leq 1$.
    \item for every $e \in E(G)$, the constraints $0 \leq e \leq 1$.
    \end{itemize}
    We will denote by
    $D$ the set of all these constraints.
  \item for every $i$ with $1 \leq i \leq k$, the constraint \mbox{$\sum_{v
      \in V_i}v=1$}, which ensures that from every part $V_i$ exactly
    one vertex is selected to be in the $k$-clique. We will denote by
    $V_{\mathit{SEL}}$ the set of all these constraints.
  \item for every $i$ and $j$ with $1 \leq i < j \leq k$, the constraint $\sum_{e
      \in E_{i,j}}e=1$, which ensures that between any two parts $V_i$
    and $V_j$ exactly one edge is selected to be in the $k$-clique.
    We will denote by $E_{\mathit{SEL}}$ the set of all these constraints.
  \item for every $i$ with $1 \leq i \leq k$, the constraint \mbox{$\sum_{v
      \in V_i}\SSS(v)v=v_i$}, which ensures that $v_i$ is equal to $\SSS(v)$
    whenever $v$ is selected for the $k$-clique.
    We will denote by $V_{\mathit{ASS}}$ the set of all these constraints.
  \item for every $i$ and $j$ with $1 \leq i < j \leq k$, the constraint $\sum_{e=\{u,v\}
      \in E_{i,j}}(\SSS(u)+\SSS(v))e=e_{i,j}$, which ensures that $e_{i,j}$ is equal to $\SSS(u)+\SSS(v)$
    whenever the edge $e \in E_{i,j}$ with endpoints $u$ and $v$ is
    selected for the $k$-clique.
    We will denote by $E_{\mathit{ASS}}$ the set of all these constraints.
  \item for every $i$ and $j$ with $1 \leq i < j \leq k$, the
    constraint $v_i+v_j=e_{i,j}$, which ensures that between any two parts $V_i$
    and $V_j$ the vertices selected for the clique are equal to the
    endpoints of the edge chosen between the two parts.
    We will denote by $\mathit{VE}_{\mathit{CHECK}}$ the set of all these constraints.
  \end{itemize}
  This completes the construction of $\III$. Clearly $\III$ can be
  constructed in polynomial time, and the largest
  coefficient used by $I$ is equal to $\max_2(\SSS)$, which is at most
  ${2\max(\SSS)\leq 16|V(G)|^2}$. We first show that $\III$ has a small
  constraint backdoor to $3$-compactness, and hence our parameter can
  bounded in terms of $k$.
  Namely, we claim that the set $B=V_{\mathit{SEL}} \cup E_{\mathit{SEL}} \cup V_{\mathit{ASS}} \cup
  E_{\mathit{ASS}} \cup \mathit{VE}_{\mathit{CHECK}}$ of constraints of $\III$ is a constraint
  backdoor of size at most $2k+3\binom{k}{2}$ to $3$-compactness. Clearly,
  the components of $G_\III \setminus B$ have size at most $3$, i.e.,
  $G_\III$ has one component of size one for every variable in
  $\{v_1,\dotsc,v_k,e_{1,2},\dotsc,e_{k-1,k}\}$ as well as one
  component of size $3$ for every $a \in V(G) \cup E(G)$, containing the variable
  $a$ together with the two constrains $0 \leq a$ and $a \leq 1$.
  Moreover, the sets $V_{\mathit{SEL}}$, $E_{\mathit{SEL}}$, $V_{\mathit{ASS}}$, $E_{\mathit{ASS}}$, and
  $\mathit{VE}_{\mathit{CHECK}}$ have sizes at most $k$, $\binom{k}{2}$, $k$,
  $\binom{k}{2}$, and $\binom{k}{2}$, respectively, which implies that
  $|B|\leq 2k+3\binom{k}{2}$.

  It remains to show that $G$ has $k$-clique if and only if $\III$ is
  feasible. For the forward direction suppose that $G$ has a
  $k$-clique on the vertices $c_1,\dotsc,c_k$, where $c_i \in V_i$ for
  every $i$ with $1 \leq i \leq k$. Then it is straightforward to
  verify that the assignment $\alpha$ with:
  \begin{itemize}
  \item $\alpha(c_i)=1$ for every $i$ with $1 \leq i \leq k$ and
    $\alpha(v)=0$ for every $v \in V(G) \setminus \{c_1,\dotsc,c_k\}$,
  \item $\alpha(\{c_i,c_j\})=1$ for every $i$ and $j$ with ${1 \leq i<j \leq k}$ and
    $\alpha(e)=0$ for every \mbox{$e \in E(G) \setminus \SB \{c_i,c_j\} \SM
    1 \leq i < j \leq k\SE$},
  \item $\alpha(v_i)=\SSS(c_i)$ for every $i$ with $1 \leq i \leq k$, and
  \item $\alpha(e_{i,j})=\SSS(c_i)+\SSS(c_j)$ for every $i$ and $j$ with ${1
    \leq i < j \leq k}$
  \end{itemize}
  is a feasible assignment for $\III$.

  For the reverse direction suppose that we are given a feasible
  assignment $\alpha$ for $\III$. Then because $\alpha$ satisfies the
  constraints in $D \cup V_{\mathit{SEL}}\cup E_{\mathit{SEL}}$ we obtain that for every
  $i$ and $j$ with $1 \leq i < j \leq k$ it holds that exactly one of
  the variables in $V_i$ and exactly one of the variables in $E_{i,j}$
  is set to one. Let $c_i$ denote the unique vertex in $V_i$ with
  $\alpha(c_i)=1$ and similarly let $d_{i,j}$ denote the unique edge
  in $E_{i,j}$ with $\alpha(d_{i,j})=1$.
  It follows from the constraints in $V_{\mathit{ASS}}$ that
  $\alpha(v_i)=\SSS(c_i)$ and similarly using the constraints in
  $E_{\mathit{ASS}}$ we obtain that $\alpha(e_{i,j})=\SSS(u)+\SSS(v)$, where $u$ and
  $v$ are the endpoints of the edge $d_{i,j}$ in $G$. Moreover, we
  obtain from the constraints in $\mathit{VE}_{\mathit{CHECK}}$ that $v_i+v_j=e_{i,j}$
  and hence $\SSS(c_i)+\SSS(c_j)=\SSS(u)+\SSS(v)$, where again $u$ and $v$ are the
  endpoints of the edge $d_{i,j}$ in $G$. Because $\SSS$ is a Sidon
  sequence, it follows that this can only hold if $\SSS(c_i)=\SSS(u)$ and
  $\SSS(c_j)=\SSS(v)$, which implies that $c_i=u$ and $c_j=v$. This shows
  that the endpoints of the selected edges $d_{1,2},\dotsc,d_{k-1,k}$
  are the vertices in $c_1,\dotsc,c_k$ and hence
  $G[\{c_1,\dotsc,c_k\}]$ is a $k$-clique of $G$.
\end{proof}


\begin{theorem}
\label{lem:binaryconstbdhard}
\textsc{ILP} is \NP{}\hy hard even if $\fr^C=1$.
\end{theorem}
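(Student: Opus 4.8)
The plan is to give a very simple polynomial-time reduction from \textsc{Subset Sum}, which is well known to be \NP{}\hy complete~\cite{GareyJohnson79}. Recall that an instance of \textsc{Subset Sum} consists of positive integers $a_1,\dots,a_n$ together with a target $t$, and asks whether there is a set $S\subseteq\{1,\dots,n\}$ with $\sum_{i\in S}a_i=t$. Given such an instance, I would construct the \textsc{ILP} instance $\III=(\mathbf{A},\vex,\veb,\ver,\veu,\eval)$ whose variables are $x_1,\dots,x_n$ with bounds $0\le x_i\le 1$, whose single constraint is $\sum_{i=1}^n a_i x_i = t$, and whose optimization function is $\eval=\mathbf{0}$ (so that \textsc{ILP} coincides with \textsc{ILP-feasibility} on this instance). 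This is clearly computable in polynomial time, and the integers occurring in $\III$ are precisely the $a_i$ and $t$, i.e.\ arbitrary numbers encoded in binary, which is the coefficient regime addressed by the theorem.

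It then remains to verify two routine claims. First, correctness: every feasible assignment $\alpha$ of $\III$ is a $0/1$ vector because of the bounds, and the set $S=\{\,i : \alpha(x_i)=1\,\}$ satisfies $\sum_{i\in S}a_i=t$; conversely, any such set $S$ yields the feasible assignment setting $x_i=1$ for $i\in S$ and $x_i=0$ otherwise. Hence $\III$ is feasible if and only if the \textsc{Subset Sum} instance is a yes-instance. Second, the parameter bound: let $Z=\cF(\III)$ be the single constraint above; then $Z$ is a constraint-backdoor, and $\III\setminus Z$ has no constraints and an incidence graph consisting of the $n$ isolated vertices $x_1,\dots,x_n$, so $\III\setminus Z$ is $1$-compact and $b^C_1(\III)\le 1$. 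Since $\fr^C(\III)=\min_{\ell\in\Nat}\max(\ell,b^C_\ell(\III))$, choosing $\ell=1$ gives $\fr^C(\III)\le\max(1,1)=1$, and because any instance containing a variable has $\fr^C\ge 1$, we obtain $\fr^C(\III)=1$.

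Combining the two claims, a polynomial-time algorithm deciding \textsc{ILP} on instances with $\fr^C=1$ would decide \textsc{Subset Sum} in polynomial time, so \textsc{ILP} is \NP{}\hy hard already when $\fr^C=1$; together with the trivial membership of \textsc{ILP-feasibility} in \NP{} this also gives \paraNP{}-hardness of \textsc{ILP} parameterized by $\fr^C$ via the characterization of~\citex{book/FlumG06}. I do not expect a genuine obstacle here: the only substantive observation is that a single linear Diophantine equation with $0/1$ variable bounds already encodes \textsc{Subset Sum}, and no gadgetry is required, since the incidence graph of a one-constraint ILP is automatically a star and hence becomes edgeless once that constraint is deleted. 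The one point to state carefully is that the fracture number of the constructed instance is exactly $1$ (not $0$), which is immediate from the definition.
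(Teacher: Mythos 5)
Your proposal is correct and follows essentially the same approach as the paper: both reduce from \textsc{Subset Sum} by introducing one $0/1$ variable $x_i$ per input integer (with the $0/1$ domain enforced via the bound vectors $\ver,\veu$) and a single constraint $\sum_i a_i x_i = t$, and then observe that deleting this lone constraint leaves an edgeless incidence graph, giving $\fr^C=1$. The only cosmetic difference is that you spell out the $\min_\ell \max(\ell, b^C_\ell)$ computation and note $\fr^C=1$ rather than $0$, which the paper leaves implicit.
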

\begin{proof}
  We show the result by a polynomial reduction from the
  \textsc{Subset Sum} problem, which is well-known to be weakly \NP{}\hy
  complete. Given a set $S:=\{s_1,\dotsc,s_n\}$ of integers and an
  integer $s$, the \textsc{Subset Sum} problem asks whether there is a
  subset $S' \subseteq S$ such that $\sum_{s \in S'}s'=s$.
  Let ${I:=(S,s)}$ with $S:=\{s_1,\dotsc,s_n\}$ be an instance of
  \textsc{Subset Sum}. We will construct an equivalent ILP instance $\III$ with
  $\fr^C(\III)=1$ in polynomial time as follows.
  The instance $\III$ has $n$ binary variables $x_1,\dotsc,x_n$ and apart from the
  domain constraints for these variables only one global constraint
  defined by $\sum_{1 \leq i \leq n}s_ix_i=s$. Because $\III$ has only one constraint, it holds
  that
  $\fr^C(\III)=1$ and moreover it is straightforward to verify that $\III$ is
  equivalent to $(S,s)$ (this has also for instance been shown
  in~\cite[Theorem 1]{JansenKratsch15esa}).
\end{proof}

%
At the end of this section we prove that ILP parameterized by coefficients and a constraint or variable backdoor does not admit a polynomial kernel, unless $\NP \subseteq \coNPpoly$.
We use polynomial parameter transformations from two problems which do not admit a polynomial kernel.

\pbDefPtask{{\sc Set Cover}}
{A universe $U$, a family ${\cF}$ of subsets of $U$, $k \in \nn$.}
{$|U|$.}
{Find a subfamily $\cF' \subseteq \cF$ such that $|\cF'| = k$ and $\cF'$ \emph{cover} $U$, i.e., $\bigcup_{F \in \cF'} F = U$.}

\pbDefPtask{{\sc Splitting Set}}
{A universe $U$, a family ${\cF}$ of subsets of~$U$.}
{$|U|$.}
{Find $X \subseteq U$ such that $X$ \emph{splits} each set $F \in \cF$, i.e., $F \cap X \neq \emptyset$ and $F \setminus X \neq \emptyset$.}

\begin{THE}[\cite{CyganFKLMPPS15}]
\label{thm:NoKernelProblems}
The problems {\sc Set Cover} and {\sc Splitting Set} do not admit a polynomial kernel, unless $\NP \subseteq \coNPpoly$.
\end{THE}

\begin{THE}
\textsc{ILP-feasibility} with 0 and 1 in the matrix, parameterized by $b^C_1(\III)$ and righthandside coefficients does not admit a polynomial kernel, unless $\NP \subseteq \coNPpoly$.
\end{THE}
\begin{proof}
Let $(U, \cF, k)$ is an instance of {\sc Set Cover}.
We formulate an ILP instance $\III$ with boolean variables ${x_F \in \{0,1\}}$ for each family $F \in \cF$.

\begin{align*}
\sum_{F \in \cF} x_F &\leq k & \\
\sum_{F \in \cF: u \in F} x_F &\geq 1 & \forall u \in U
\end{align*}

The meaning of the variable $x_F$ is we put $F$ in subfamily $\cF'$ if and only if $x_F = 1$.
Now, it is easy to see that $\III$ is feasible if and only if there is a subfamily $\cF' \subseteq \cF$ of size $k$ which covers $U$.
There are $|U| + 1$ constraints in the instance~$\III$, thus after removing them from the incidence graph $G_\III$ we get a graph without edges.
We use only $0$ and $1$ in the matrix.
The righthandside is bounded by $k$.
Note that if $k \geq |U|$, then the instance of {\sc Set Cover} is trivial.
Therefore, we can assume that $k \leq |U|$, i.e., it is bounded by the parameter.
The proof of theorem follows from Theorem~\ref{thm:NoKernelProblems}.
\end{proof}

\begin{THE}
\textsc{ILP-feasibility} with 0 and 1 in the matrix, parameterized by $b^V_1(\III)$ and righthandside coefficients does not admit a polynomial kernel, unless $\NP \subseteq \coNPpoly$.
\end{THE}
\begin{proof}
Let $(U, \cF)$ be an instance of {\sc Splitting Set}.
We formulate an ILP instance $\III$ with boolean variables ${y_u \in \{0,1\}}$ for each element $u \in U$.
\[
0 < \sum_{u \in F} y_u < |F| \hskip 1cm \forall F \in \cF
\]
The meaning of the variable $y_u$ is that we put the element~$u$ into $X$ if and only if $y_u = 1$.
Now, it is easy to see that the instance $\III$ is feasible if and only if there exists a set $X \subseteq U$ such that $X$ splits each set $F \in \cF$.
There are~$|U|$ variables in the instance $\III$, thus after removing them from the incidence graph $G_\III$ we get a graph without edges.
We use only $0$ and $1$ in the matrix and coefficients on the righthandside are bounded by $\max_{F \in \cF}|F| \leq |U|$.
The proof of theorem follows from Theorem~\ref{thm:NoKernelProblems}.
\end{proof}

\section{Concluding Notes}

In order to overcome the complexity barriers of ILP, a wide range of problems have been encoded in restricted variants of ILP such as $2$\hy stage stochastic ILP and $N$-fold ILP; examples for the former include a range of transportation and logistic problems~\cite{powell2003stochastic,hrabec2015}, while examples for the latter range from scheduling~\cite{KnopK16} to, e.g., computational social choice~\cite{KnopKM17}. Our framework 
provides a unified platform which generalizes $2$\hy stage stochastic ILP, $N$-fold ILP and also $4$-block $N$-fold ILP. Moreover, it represents a natural measure of the complexity of ILPs which can be applied to any ILP instance, including those which lie outside of the scope of all previously known algorithmic frameworks.
In fact, one may view our algorithmic results as ``algorithmic meta-theorems'' for ILP, where previously known algorithms for $2$\hy stage stochastic ILP, $N$-fold ILP and $4$-block $N$-fold ILP only represent a simple base case.

Our algorithms are complemented with matching lower bounds showing that the considered restrictions are, in fact, necessary in order to obtain fixed-parameter or \XP{} algorithms. The only remaining blank part in the presented complexity map is the question of whether mixed fracture backdoors admit a fixed-parameter algorithm in case of bounded coefficients; we consider this a major open problem in the area. A first step towards settling this question would be to resolve the fixed-parameter (in)tractability of $4$-block $N$-fold ILP; progress in this direction seems to require new techniques and insights~\cite{HemmeckeKW10}.

\section*{Acknowledgments}
Robert Ganian acknowledges support by the Austrian Science Fund (FWF, project P31336).

\bibliographystyle{named}
\bibliography{literature}

\end{document}